\newif\ifconf
\newif\ifnotes
\newcommand{\confoption}[2]
{\ifconf%
#1%
\else%
#2%
\fi}
\newenvironment{packed_item}{
\begin{itemize}
  \setlength{\itemsep}{0.5pt}
  \setlength{\parskip}{0pt}
  \setlength{\parsep}{0pt}
}{\end{itemize}}
\newcounter{note}[section]
\renewcommand{\thenote}{\thesection.\arabic{note}}
\newcommand{\annote}[1]{\refstepcounter{note}$\ll${\bf Sasho~\thenote:} {\sf #1}$\gg$\marginpar{\tiny\bf AN~\thenote}}
\newcommand{\ktnote}[1]{\refstepcounter{note}$\ll${\bf Kunal~\thenote:} {\sf #1}$\gg$\marginpar{\tiny\bf KT~\thenote}}
\newcommand{\annote}[1]{}
\newcommand{\ktnote}[1]{}
\newtheorem{theorem}{Theorem}[section]
\newtheorem{definition}[theorem]{Definition}
\newtheorem{lemma}[theorem]{Lemma}
\newtheorem{corollary}[theorem]{Corollary}
\newtheorem{cor}[theorem]{Corollary}
\numberwithin{algorithm}{section}
\def\tr{{\rm tr}} 
\def\E{\mathbb{E}\ } 
\def\Pr{{\rm Pr}} 
\def\R{{\mathbb{R}}} 
\newcommand{\alg}{\mathcal{A}}
\newcommand{\mech}{\mathcal{M}}
\newcommand{\eps}{\varepsilon}
\renewcommand{\vec}[1]{\mathbf{#1}}
\newcommand{\junk}[1]{}
\def\b0{{\bf 0}}
\DeclareMathOperator{\conv}{conv}
\DeclareMathOperator{\diag}{diag}
\DeclareMathOperator{\supp}{supp}
\def\poly{\operatorname{poly}}
\DeclareMathOperator{\diam}{diam}
\DeclareMathOperator{\marg}{marg}
\DeclareMathOperator{\parity}{par}
\DeclareMathOperator{\err}{err}
\newcommand{\cut}[1]{}
\newcommand{\afive}{\textsc{Projection}}
\newcommand{\afrelax}{\textsc{RelaxedProj}}
\newcommand{\fw}{\textsc{FrankWolfe}}
\newcommand{\univ}{\mathcal{U}}
\newcommand{\queries}{\mathcal{Q}}
\newcommand{\one}{\mathbf{1}}
\newcommand{\etal}{{\em et al.}}
\newcommand{\dee}{p}
\newcommand{\ex}{\mathbb{E}}
\newcommand{\maxkxor}{{\sc MAX$k$-XOR}\xspace}
\begin{document}
\title{Efficient Algorithms for Privately Releasing Marginals via Convex Relaxations}
\author{Cynthia Dwork\thanks{Microsoft Research SVC, Mountain View, CA 94043.}
 \and Aleksandar Nikolov\thanks{Rutgers University, Piscataway, NJ 08854.} \and Kunal Talwar\thanks{Microsoft Research SVC, Mountain View, CA 94043.}}
\date{}
\maketitle
\vspace{-30pt}
\begin{center}{\large{[Full Version]}}\end{center}
\vspace{10pt}
\thispagestyle{empty}
\setcounter{page}{0}

\begin{abstract}
Consider a database of $n$ people, each represented by a bit-string of length $d$ corresponding to the setting of $d$ binary attributes. A $k$-way marginal query is specified by a subset $S$ of $k$ attributes, and a $|S|$-dimensional binary vector $\beta$ specifying their values. The result for this query is a count of the number of people in the database whose attribute vector restricted to $S$ agrees with $\beta$.

Privately releasing approximate answers to a set of $k$-way marginal
queries is one of the most important and well-motivated problems in
differential privacy. Information theoretically, the error  complexity
of marginal queries is well-understood: the per-query additive error
is known to be at least $\Omega(\min\{\sqrt{n},d^{\frac{k}{2}}\})$ and
at most $\tilde{O}(\min\{\sqrt{n}
d^{1/4},d^{\frac{k}{2}}\})$. However, no polynomial time algorithm
with error  complexity as low as the information theoretic upper bound
is  known for small $n$. In this work we present a polynomial time
algorithm that, for any  distribution on marginal queries, achieves
average error at most  $\tilde{O}(\sqrt{n} d^{\frac{\lceil k/2
    \rceil}{4}})$. This error bound is as good as the best known
information theoretic upper bounds for $k=2$. The bound implies that
our mechanisms achieve error $\alpha n$ as long as $n$ is
$\tilde{\Omega}(d^{\frac{\lceil k/2\rceil}{2}}\alpha^{-2})$, which is
an improvement  over previous work on efficiently releasing marginals
when $k$ is small and when error $o(n)$ is  desirable. Using private
boosting we are also able to give nearly matching  worst-case error bounds.

Our algorithms are based on the geometric techniques of Nikolov,  Talwar, and Zhang. The main new ingredients are convex relaxations and careful use of the Frank-Wolfe algorithm for constrained convex  minimization. To design our relaxations, we rely on the Grothendieck inequality from functional analysis.
\end{abstract}

\newpage
\section{Introduction}

A basic task in data analysis is the release of a specified set of statistics of the data. In this work, we address the question of the privacy preserving release of the set of low dimensional marginals of a dataset. These are a ubiquitous and important subclass of queries, constituting {\em contingency tables} in statistics and {\em OLAP cubes} in databases. Official agencies such as the census bureau, the Internal Revenue Service, and the Bureau of Labor Statistics all release certain sets of low dimensional marginals for the data they collect.

In this work, the database will be a collection of the data of $n$ individuals, each characterized by $d$ binary attributes. \junk{We note that a non-binary categorical attribute taking $v$ values can be mapped to a set of $v$ or even $\lceil \log_2 v\rceil$ binary ones.} A $k$-way marginal query is specified by a subset $S$ of $k$ attributes, and a $|S|$-dimensional binary vector $\beta$ specifying their values. The result for this query is a count of the number of people in the database whose attribute vector restricted to $S$ agrees with $\beta$. \junk{These queries are also known as conjunction queries in the case where all $\beta_i = 1$.} In this work, we will be interested in releasing all $k$-way marginals of a database in $(\{-1,1\}^d)^n$, for some small integer $k$. 

In many of the settings mentioned above, the data in question contains individuals' private information, and there are ethical, legal or business reasons to prevent the disclosure of individual information. Differential privacy~\cite{DMNS} is a recent definition that gives a strong privacy guarantee even in the presence of auxiliary information\junk{ that may exist}. It has been the subject of extensive research in the last decade, and will be the definition of privacy in this work. Specifically, we will be working with a variant known as $(\epsilon,\delta)$-differential privacy or approximate differential privacy. We are thus interested in differentially private mechanisms for releasing (estimates of) low dimensional marginals. Our mechanisms will release noisy answers to these queries, and we would like to design computationally efficient mechanisms that add as little noise as possible. In particular, we are interested in achieving error per query $\alpha n$ for $\alpha < 1$ and possibly subconstant when the database size $n$ is not too large. 

This problem of differentially private release of marginals has attracted a lot of interest.  The Gaussian noise mechanism~\cite{DinurN03,DworkN04,DMNS} works in a very general setting and adds noise only  $\tilde{O}(d^{\frac{k}{2}})$ to each of the $O(d^k)$ marginals\footnote{Throughout this introduction, we will ignore the dependence on privacy parameters such as $\epsilon$ and $\delta$, and use the $\tilde{O}$ and $\tilde{\Omega}$ notation to hide factors logarithmic in $d^k$.}, independent of $n$. This implies that the error per query is $\alpha n$ as long as $n = \tilde{\Omega}(d^{k/2} \alpha^{-1})$. Barak et al.~\cite{BarakCDKMT07} showed that these noisy answers can be made consistent with a real database without sacrificing accuracy. In general, this bound is tight: Kasiviswanathan et al.~\cite{shiva2010} show that no differentially private mechanism (even for approximate DP) can add error less than $\Omega(\min(\sqrt{n}, d^{\frac{k}{2}}))$ for constant $k$. 

Starting with the work of Blum Ligett and Roth~\cite{BLR}, a long line of work~\cite{DworkNRRV09,boosting, RothR10, HardtR10,GuptaHRU11,HardtLM12} has shown that private mechanisms with error significantly smaller than that of the Gaussian mechanism exist for small $n$. Specifically, an error bound of about $\tilde{O}(\sqrt{n}{d}^{1/4})$ per query is achievable~\cite{HardtR10,GuptaHRU11,HardtLM12}; this error bound nearly matches the lower bound, and implies error per query at most $\alpha n$ for $n$ as small as $\tilde{\Omega}(d^{1/2}\alpha^{-2})$. However, the known algorithms giving these results have running time that is at least exponential in $d$, which may be restrictive in settings where $d$ is large. Ullman and Vadhan~\cite{UllmanV11} show that, assuming the existence of one-way functions, any private mechanism that generates {\em synthetic data} must have running time  $d^{\omega(1)}$ or have error $\Omega(n)$ for some 2-way marginal query. All algorithms cited above, except for private boosting~\cite{boosting}, do produce synthetic data.

Recent work has shown that significantly faster mechanisms are possible for marginal queries, by using sophisticated learning theory techniques to design approximate but compact representations of databases. In these works, however (see below for a more detailed comparison), either the running time is still $2^{d^{\Omega(1)}}$ for $k = O(1)$, or the error is still much larger than what is achievable inefficiently.

In this work, we show that for any distribution over $k$-way marginals, in time polynomial in $n$ and $d^k$ one can achieve additive error which is 
 within an $\tilde{O}(d^{\frac{\lceil k/2\rceil}{4}})$ factor of the lower bound. 
\begin{theorem}
For any distribution $\dee$ over $k$-way marginal queries, there is an $(\eps,\delta)$-differentially private mechanism $\mech$ such that for any database containing $n$ individuals
\begin{align*}
\ex_\mech\ex_{q \sim \dee}[\err(q)] = O\left(\frac{\sqrt{n}d^{\frac{\lceil k/2\rceil}{4}} (\log 1/\delta)^{1/4}}{\eps^{1/2}}\right),
\end{align*}
where $\err(q)$ is the additive error incurred by the mechanism on query $q$. \junk{Equivalently, for
\begin{equation*}
  n = \Omega\left(\frac{d^{\frac{\lceil k/2\rceil}{2}} (\log 1/\delta)^{1/2}}{\alpha^2\eps}\right),
\end{equation*}
we have $\ex_\mech \ex_{q \sim \dee}[\err(q)] \leq \alpha n$.}

Moreover the mechanism $\mech$ runs in time polynomial in $d^k$ and $n$.
\end{theorem}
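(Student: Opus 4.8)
The plan is to recast the task in the ``matrix form'' handled by the geometric mechanism of Nikolov, Talwar, and Zhang (NTZ), and then to remove the computational bottleneck of that mechanism --- Euclidean projection onto an intractable polytope --- by (a) replacing the polytope with a semidefinite relaxation whose quality is controlled via the Grothendieck inequality, and (b) carrying out the projection with the Frank--Wolfe method, which needs only a linear optimization oracle.

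\emph{Step 1: reduction to a ``cut polytope in dimension $D$'' problem.} By the standard invertible change of basis between $k$-way marginals (over all value vectors $\beta$) and degree-$\le k$ monomial/parity queries, it suffices to release all monomials $\prod_{i\in u}x_i$, $|u|\le k$, of a database $x_1,\dots,x_n\in\{-1,1\}^d$, at the cost of a $2^{O(k)}$ factor in the number of queries which is absorbed into $\poly(d^k)$. Let $\psi(x)\in\R^D$ with $D=\binom{d}{\le\lceil k/2\rceil}=O(d^{\lceil k/2\rceil})$ be the vector of all monomials of degree $\le\lceil k/2\rceil$. Since $x_i^2=1$, every entry of the outer product $\psi(x)\psi(x)^\top$ is itself a monomial of degree $\le 2\lceil k/2\rceil$, and conversely every monomial of degree $\le k$ occurs as such an entry; hence the vector of true answers on the whole workload is a fixed linear image of $M=\sum_{j=1}^n\psi(x_j)\psi(x_j)^\top$, which ranges over $n\cdot\mathcal K$ with $\mathcal K:=\conv\{\psi(x)\psi(x)^\top:x\in\{-1,1\}^d\}$ an $n$-fold scaled convex hull of rank-one sign matrices. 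Thus we have reduced to an instance of ``$2$-way marginals in dimension $D$.''

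\emph{Step 2: the NTZ template with a linear oracle.} Recall the NTZ mechanism: if the true answer vector is known to lie in $n\cdot L$ for a (symmetrized) convex body $L$, compute a near-minimum-volume enclosing ellipsoid of $L$, release the answers perturbed by a Gaussian whose covariance is that ellipsoid scaled by $\Theta(\sqrt{\log(1/\delta)}/\eps)$ --- which is $(\eps,\delta)$-differentially private because changing one person's data moves the answer vector within $L$ --- and then denoise by projecting the noisy vector back onto $n\cdot L$. Its expected average per-query error is $\tilde O\big((\log 1/\delta)^{1/4}\eps^{-1/2}\sqrt n\,\Gamma(L)\big)$, where $\Gamma(L)$ is a Gaussian-width / enclosing-ellipsoid parameter of $L$. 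We will instantiate this with $L=\widehat{\mathcal K}\supseteq\mathcal K$, a tractable relaxation of $\mathcal K$ (linear optimization over $\mathcal K$ itself is NP-hard), and perform the projection with the Frank--Wolfe algorithm, which accesses $\widehat{\mathcal K}$ only through linear optimization. Since linear optimization over $\widehat{\mathcal K}$ is a semidefinite program, each iteration runs in time $\poly(D,n)=\poly(d^k,n)$; correctness of the released answers is immediate from $\mathcal K\subseteq\widehat{\mathcal K}$, and the Frank--Wolfe steps touch only already-private quantities, so privacy is unaffected.

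\emph{Step 3: the relaxation and Grothendieck's inequality.} Take $\widehat{\mathcal K}$ to be the ``elliptope-type'' relaxation of $\mathcal K$: symmetric positive semidefinite matrices whose entries lie in $[-1,1]$ (in particular the diagonal entries equal $1$, as they do on every matrix of $\mathcal K$), together with the affine identities among entries forced by the monomial structure ($M_{u,v}$ depends only on $u\triangle v$). This body is semidefinite-representable, hence has a $\poly$-time linear oracle. Grothendieck's inequality states precisely that for every linear functional $\langle W,\cdot\rangle$ one has $\max_{M\in\widehat{\mathcal K}}\langle W,M\rangle\le K_G\cdot\max_{M\in\mathcal K}\langle W,M\rangle$ with $K_G=O(1)$ Grothendieck's constant; after re-centering both bodies to quotient out their common affine part, this gives $\widehat{\mathcal K}\subseteq O(1)\cdot\mathcal K$, so the enclosing ellipsoid of $\widehat{\mathcal K}$ --- and hence $\Gamma(\widehat{\mathcal K})$ --- is only an $O(1)$ factor larger than that of $\mathcal K$. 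Finally, $\Gamma(\mathcal K)=\tilde O(\sqrt D\,)=\tilde O(d^{\lceil k/2\rceil/2})$ by the same enclosing-ellipsoid estimate that underlies the $\tilde O(\sqrt n\,d^{1/4})$ bound for ordinary $2$-way marginals, applied now in dimension $D$. Substituting $\Gamma(\widehat{\mathcal K})=\tilde O(d^{\lceil k/2\rceil/4})$ into the bound of Step 2 gives the claimed error $\tilde O\big(\sqrt n\,d^{\lceil k/2\rceil/4}(\log 1/\delta)^{1/4}\eps^{-1/2}\big)$. A fixed query distribution $\dee$ only re-weights the queries and can only shrink $\Gamma$, so the same bound holds for $\ex_\mech\ex_{q\sim\dee}[\err(q)]$; the worst-case variant mentioned in the abstract requires an extra private-boosting step and is not needed here.

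\emph{Where the difficulty lies.} The crux is Step 3 together with fitting Frank--Wolfe into Step 2. Classical Grothendieck only says the semidefinite relaxation \emph{approximately optimizes} linear functionals over $\mathcal K$; promoting this to control of the whole enclosing-ellipsoid geometry on which the NTZ accuracy analysis rests requires handling the non-symmetric, affine structure of the monomial polytope and, for $k>2$, the fact that one is relaxing degree-$\lceil k/2\rceil$ tensors rather than vectors. Independently, NTZ's denoising analysis is cleanest for an \emph{exact} Euclidean projection; re-running it with Frank--Wolfe requires a quantitative convergence bound ensuring that $\poly(d^k,n)$ iterations drive the extra optimization error well below the target accuracy, while preserving the delicate interplay that produces the $\sqrt n$ (rather than $n$) scaling and the $(\log 1/\delta)^{1/4}\eps^{-1/2}$ dependence on the privacy parameters. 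I expect essentially all the real work to lie in these two points.
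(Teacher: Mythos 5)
Your overall architecture (monomial/tensor embedding into dimension $D=O(d^{\lceil k/2\rceil})$, an SDP relaxation with a linear-optimization oracle, Frank--Wolfe projection inside the NTZ template) is the right one, but the central claim of Step 3 is wrong as stated. Grothendieck's inequality compares the vector/SDP relaxation to the maximum of the linear functional over \emph{all} sign matrices $\conv\{wz^\top : w,z\in\{\pm1\}^D\}$, not over your body $\mathcal K=\conv\{\psi(x)\psi(x)^\top : x\in\{-1,1\}^d\}$, whose extreme points are the very special (structured) sign matrices coming from actual databases. The containment $\widehat{\mathcal K}\subseteq O(1)\cdot\mathcal K$ would say that the SDP constant-factor approximates $\max_{x\in\{\pm1\}^d}$ of an arbitrary degree-$k$ form, i.e.\ \maxkxor; this is NP-hard to approximate to within such factors, and for $k\ge 3$ the known integrality gaps of exactly this kind of relaxation are polynomial in $d$ (this is the Khot--Naor obstruction the paper points to). So the multiplicative sandwich around $\mathcal K$ that your error analysis rests on cannot be established this way, and in fact no efficient relaxation with $O(1)$ sandwiching is known (or expected) for $k\ge 3$.

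The repair --- and what the paper actually does --- is to notice that the projection analysis never needs $L\subseteq C\cdot K$: it only needs $K\subseteq L$, a polynomially bounded diameter, and an upper bound on the Gaussian mean width $\ell^*(P^{1/2}L)$, since the MSE bound comes from H\"older's inequality in the Minkowski norm of $L$ and $\E\|w\|_{(P^{1/2}L)^\circ}$. One therefore interposes the full sign-matrix hull $L_0=\conv\{w\otimes z\}$ (all $w,z\in\{\pm1\}^{d^{\lceil k/2\rceil}}$): trivially $K\subseteq L_0$, and $\ell^*(P^{1/2}L_0)\le O(\sqrt D)=O(d^{\lceil k/2\rceil/2})$ simply because it is the convex hull of $2^{O(D)}$ points of unit norm after the $P^{1/2}$ rescaling. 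Then the (bipartite, two-index-family) Grothendieck inequality is used only to compare the SDP body to $L_0$, giving $\ell^*(P^{1/2}L)\le C\,\ell^*(P^{1/2}L_0)$; your symmetric ``elliptope-with-affine-identities'' version is also not the one covered by the constant $K_G$ (the symmetric case with general sign patterns loses a logarithmic factor even for $k=2$), so you should use the asymmetric relaxation with two independent families of unit vectors. A secondary discrepancy: the mechanism here adds spherical Gaussian noise calibrated to the rescaled column norms and analyzes the projection via the mean width of $P^{1/2}L$, not via a minimum enclosing ellipsoid of the relaxation; with the mean-width route, the Frank--Wolfe error after polynomially many iterations (using $\diam(P^{1/2}L)\le 1$ and Clarkson's bound) is absorbed into the same $O(c(\eps,\delta)\,n\,\ell^*(P^{1/2}L))$ term, which yields the claimed $\sqrt{n}\,d^{\lceil k/2\rceil/4}(\log 1/\delta)^{1/4}\eps^{-1/2}$ bound.
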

\junk{\ktnote{Do we really get $\sqrt{c(\eps,\delta)}$ and not
  $c(\eps,\delta)$?}
\annote{For average error, definitely. We are basically implementing
  Algo 5, which has this property. Using a relaxation L does not
  change that. For worst case error, once we optimize $s$, I am pretty
  sure the same happens.}}
We note that for $k=2$, this matches the error of the best known (inefficient) mechanism. 

Further, we show that this average error bound can be converted to a worst case bound using the boosting framework of Dwork, Rothblum, and Vadhan~\cite{boosting}. 
\begin{theorem}
Let $2^{-n} \leq \delta \leq n^{-2}$, and $d \leq 2^n$. For any $k$, there is an $(\eps,\delta)$-differentially private mechanism $\mech$ such that for any database containing $n$ individuals, and for all $k$-way marginal queries $q$,
\begin{align*}
\err(q) \leq O\left(\frac{\sqrt{n}  \cdot d^{\frac{\lceil k/2\rceil}{4}} \cdot  (k\log d +\log 1/\delta)^{1/2} (\log n)^{1/4}(k\log d)^{3/4} }{\eps^{1/2}}\right)	.
\end{align*}

Moreover the mechanism $\mech$ runs in time polynomial in $d^k$ and $n$.
\end{theorem}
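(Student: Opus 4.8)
\begin{proofof}{Theorem~1.2}

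The plan is to feed the average-error mechanism of Theorem~1.1 into the query-boosting framework of Dwork, Rothblum and Vadhan~\cite{boosting}, which converts a mechanism that is accurate on average against an adversarially chosen distribution over a query class into one that is accurate in the worst case. Let $\queries$ be the class of all $k$-way marginals, so $|\queries| = O(d^k)$ and $\log|\queries| = O(k\log d)$. Run with privacy parameters $(\eps_0,\delta_0)$ and handed a reweighting $\dee$ of $\queries$ as input, the mechanism of Theorem~1.1 is precisely a \emph{base synopsis generator} in the sense of~\cite{boosting}: it produces answers whose expected $\dee$-average error is $\bar\lambda(\eps_0,\delta_0) = O\big(\sqrt{n}\,d^{\lceil k/2\rceil/4}(\log 1/\delta_0)^{1/4}\eps_0^{-1/2}\big)$. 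Two applications of Markov's inequality --- first over $q\sim\dee$, then over the internal coins of the mechanism --- turn this into the event boosting needs: with probability at least $1-\beta_0$ over the mechanism, the released answers are within $\lambda := O\big(\bar\lambda(\eps_0,\delta_0)/\beta_0\big)$ of the truth on a $\tfrac12 + \Omega(1)$ fraction of $\queries$ under $\dee$.

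With this base generator in hand I would run $T$ rounds of boosting and return, for each query, the median of the $T$ answer sequences. Since the advantage above is a constant, $T = \Theta(\log|\queries|) = \Theta(k\log d)$ rounds drive this median aggregate to worst-case error $O(\lambda)$ over all of $\queries$, plus the additive error contributed by the one differentially private step inside the boosting loop, namely the Laplace-noised reweighting of $\dee$ at the end of each round. I would split the privacy budget into a constant fraction for the $T$ calls to the base generator and a constant fraction for the $T$ reweighting measurements, and apply advanced composition within each group; this forces $\delta_0 = \Theta(\delta/T)$ and $\eps_0 = \Theta\big(\eps/\sqrt{T\log(1/\delta)}\big)$, and pins the scale of the reweighting noise so that its additive contribution is $\tilde{O}\big(\sqrt{T\log(1/\delta)}/\eps\big)$ per query after a union bound over the $T$ rounds and the $|\queries|$ queries. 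Taking $\beta_0 = 1/\poly(T)$ so that all $T$ base calls succeed simultaneously, and substituting $\eps_0$, $\delta_0$, $T$ into $\bar\lambda$, one collects the logarithmic factors that appear --- $k\log d = \log|\queries|$ from the round count and from the base-error confidence, $\log(1/\delta)$ from the advanced-composition rescaling of $\eps_0$, and $\log n$ from the boosting internals --- and, after adding the reweighting term and simplifying in the stated regime $2^{-n}\le\delta\le n^{-2}$, $d\le 2^n$ (under which $\log n$, $\log(1/\delta)$ and $k\log d$ are mutually comparable in just the way the clean statement wants), arrives at a bound of the claimed form. For the running time: each of the $T = O(k\log d)$ rounds invokes the Theorem~1.1 mechanism once (in $\poly(d^k,n)$ time) and updates the weight of each of the $O(d^k)$ queries, so the whole mechanism runs in $\poly(d^k,n)$ time; privacy is immediate from composition.

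The hard part is parameter bookkeeping rather than any new idea: one must verify that the Laplace reweighting term is dominated by, or at worst of the same order as, the boosted base error $O(\lambda)$; propagate $(\eps_0,\delta_0)$ through Theorem~1.1 while keeping the advanced-composition slack under control; and confirm that $\beta_0$ can be taken small enough to union-bound over all $T$ rounds while inflating $\lambda$ by no more than the $\poly(k\log d)$ factor already budgeted. A secondary point is that the analysis of~\cite{boosting} applies verbatim even though our base generator only releases query answers (as in Theorem~1.1) rather than a synthetic database --- boosting for queries is exactly the setting treated there.

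\end{proofof}
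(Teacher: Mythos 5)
There is a genuine gap, and it sits exactly where you wave your hands over the ``parameter bookkeeping.'' In the boosting-for-queries framework of~\cite{boosting}, the additive overhead is not $\tilde{O}\bigl(\sqrt{T\log(1/\delta)}/\eps\bigr)$ per query: it is $\mu = O\bigl(\sqrt{\kappa}\,\log^{3/2} m\,\sqrt{\log(1/\beta)}/\eps\bigr)$, where $\kappa$ is the number of queries sampled and handed to the base generator in each round. Moreover, the base generator is only guaranteed accurate on the sampled (empirical) distribution, and to transfer this to the true, data-dependent boosting distribution one needs a generalization argument (Lemma~\ref{lm:concise} in the paper) that union-bounds over all possible synopses, forcing $\kappa = \Omega(s)$ where $s$ is the \emph{bit size} of the synopsis. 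If you use the Theorem~1.1 mechanism as is, its natural synopsis is the full answer vector in $\R^{d^k}$ (or the high-dimensional SDP vectors behind it), so $s \approx d^k\log n$, hence $\kappa \approx d^k$ and $\mu \approx d^{k/2}/\eps$ --- no better than the Gaussian noise mechanism, and nowhere near the claimed $\sqrt{n}\,d^{\lceil k/2\rceil/4}$ bound. Your proposal never addresses how the synopsis is represented, and your closing remark that releasing raw query answers is harmless is precisely backwards: the raw answers are what make the synopsis too large.

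The paper's proof supplies the missing ingredient. The output of \afrelax\ lies in $n\Pi_{\supp(\tilde p)}L$, so each released coordinate is $n\langle u_s, v_t\rangle$ for unit vectors $u_s, v_t$; these are projected via Johnson--Lindenstrauss to roughly $12n(k\log d + \log 1/\beta)/(\chi^2 d^{\lceil k/2\rceil/2})$ dimensions and truncated to finite precision, which perturbs each answer by only about $\chi\sqrt{n}\,d^{\lceil k/2\rceil/4}$ while compressing the synopsis to $s = O\bigl(n\,d^{\lceil k/2\rceil/2}\log(1/\beta)\log n/\chi^2\bigr)$ bits (Lemma~\ref{lm:mse-compressed}, Corollary~\ref{cor:mse-compressed}). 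Plugging $\kappa = O(s)$ into the boosting theorem and optimizing the free parameter $\chi$ to balance $\lambda$ (which grows with $\chi$) against $\mu$ (which scales as $1/\chi$) is what produces the stated factors $(k\log d + \log 1/\delta)^{1/2}(\log n)^{1/4}(k\log d)^{3/4}\eps^{-1/2}$; these do not emerge from your accounting. A secondary, fixable issue: converting the MSE guarantee into the constant-fraction accuracy event should use the tail bound of Theorem~\ref{thm:mse-main} (Gaussian concentration) plus Markov over queries, rather than Markov over the mechanism's coins with a multiplicative $1/\beta_0$ blow-up, which is lossier than the budget allows. Without the compression step, however, the argument does not establish the theorem.
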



For $k=2$ our worst-case error bound is in fact an improvement on the error bound achieved with the (inefficient) synopsis generator in~\cite{boosting}; the reason behind our improvement is that we are able to compute more concise synopses. \junk{We note that such an improvement holds in general if we are not concerned with running time: for any set of counting queries, the projection algorithm from~\cite{geometry2013} can be modified to compute synopses concise enough to essentially match the error bounds achievable by private multiplicative weights~\cite{HardtR10} --- further details are omitted from this version of the paper. }


\subsection{Techniques}

Our mechanism is based on the recent geometric approach of Nikolov, Talwar and Zhang~\cite{geometry2013}, who gave a simple mechanism with near optimal (up to a $O(d^{\frac{1}{4}})$ factor) average additive error for $k$-way marginals. This mechanism requires least squares projection onto $nK$, where $n$ is the number of people in the database, and $K$ is the symmetric convex hull of the columns of the query matrix. \junk{In the case of $k$-way parity queries over $d$ attributes, a column corresponding to a vector $x \in \{-1,1\}^d$ has coordinates $v_S = \prod_{i\in S} x_i$, for every $S \subseteq [d]$ such that $|S|=k$.} The Frank-Wolfe algorithm~\cite{frank-wolfe} shows that if one can efficiently optimize linear functions over $K$, then using that as a subroutine, one can compute an approximately optimal projection. This reduces the projection step to optimizing a linear function over $K$. However, optimizing a linear function over $K$  derived from the $k$-way marginal queries generalizes \maxkxor, and is thus $\mathsf{NP}$-hard. Thus one cannot hope to optimize over $K$ in time $\poly(d)$.

To get around this obstacle we observe that it suffices to project to a polytope $L$ containing $K$ such that $L$ approximates $K$ well enough.  A natural approach is to find an $L$ that we can optimize over, satisfying $K \subseteq L \subseteq C\cdot K$ for some small $C$. For $k=2$, the existence of such a relaxation follows from Grothendieck's inequality~\cite{grothendieck} in functional analysis. Informally, it shows that the maximum value of the quadratic form $\sum_{ij} g_{ij} x_i x_j$ over $x \in \{-1,1\}^d$ is within a constant factor of the maximum value of $\sum_{ij} g_{ij} \langle u_i, u_j \rangle$ over unit vectors $u_i$. The former is closely related to linear functions over the polytope $K$ derived from $2$-way marginals; the latter is a semidefinite program and its feasible region (appropriately projected) gives us the convex body $L$.

However, for $k\geq 3$, the best known relaxations of $K$, following from the work of Khot and Naor~\cite{KhotN07}, lose a factor of $\tilde{\Omega}(d^{\frac{k-2}{2}})$. We are able to do better that the resulting bound: the analysis in~\cite{geometry2013} shows that all one needs is that $L$ contains $K$, and that the  projection of $L$ on a random Gaussian (known as the mean width of $L$) is not much larger than that of $K$. We obtain such a relaxation in two steps: we first relax $K$ to an intermediate polytope $K'$ whose mean width we can bound and which is easier to relax further. Then we show that using the approach above based on Grothendieck's inequality, we can approximate $K'$ by a slightly larger polytope $L$ which we can optimize over.

The above approach gives us average error bounds for any distribution on queries. To get a worst case error bound, we use the Boosting for Queries framework of~\cite{boosting}. This requires that answers returned by the average-error algorithm have a concise representation. We can show that these answers can be represented by a relatively small number of vectors $u_i$ as above. However, {\em a priori} they may be in a high dimensional space. Using the Johnson-Lindenstrauss Lemma, these vectors can be projected down to a small number of dimensions without adding too much additional error, allowing us to get a concise representation as needed.


\subsection{Related Work}

The most closely related works to ours are those of Thaler, Ullman and Vadhan~\cite{ThalerUV12}, and  Chandrasekaran et al.~\cite{ChandrasekaranTUW13} and we discuss these in more detail next. Improving on a long line of work~\cite{GuptaHRU11,CheraghchiKKL12,HardtRS12}, the authors in~\cite{ThalerUV12} show that one can construct a private synopsis of a dataset in time $d^{O(\sqrt{k}\log(1/\alpha))}$ such that any $k$-way marginal query can be answered from it, with error $\alpha\cdot n$, as long as $n$ is at least $d^{O(\sqrt{k\log(1/\alpha)})}$.  For a constant $\alpha$, the algorithm in~\cite{ThalerUV12} has the advantage of being online, and when only a few of the $d^k$ queries are asked, has running time much smaller than $d^k$. However, they add error that is much more than necessary (e.g. the best inefficient mechanisms get error $\alpha n$ as long as $n$ is $\Omega(k\sqrt{d}\alpha^{-2})$). 


The recent work of Chandrasekaran et al.~\cite{ChandrasekaranTUW13} presents a different point in the trade-off: they show that one can get error $0.01 n$ for $n$ at least  $d^{0.51}$, with running time $$\min\left\{\exp(d^{(1-\Omega(\frac{1}{\sqrt{k}}))}),\exp(d/\log^{0.99} d)\right\}$$ for any sequence of $k$-way marginals. Thus they improve on the $2^{d}$ running time even when $k$ is large. However, the running time of this mechanism is still exponential in $d^c$ for some constant $c$ depending on $k$.

If error $\alpha n$ is desired for small (possibly subconstant) $\alpha$, the lower bound on $n$  and the running time of the above algorithms deteriorate quickly. It is instructive to consider the regime in which the error should be at most $n^{1-\gamma}$ for constant $\gamma$. In order to achieve such small error for $k = O(1)$, the algorithms in~\cite{HardtRS12,ThalerUV12} require databases of size as large as required by the Gaussian noise mechanism\footnote{In fact the algorithm in~\cite{ThalerUV12} is exactly the Gaussian noise mechanism for $\alpha < 2^{-k}$.}, i.e.~$n = \tilde{\Omega}\left(d^{\frac{k}{2(1-\gamma)}}\right)$
By contrast our work gives error $n^{1-\gamma}$ as long as $n$ is $\tilde{\Omega}\left(d^{\frac{\lceil k/2\rceil}{2(1-2\gamma)}}\right)$, a nearly quadratic improvement.
Even for  small constant $\alpha$, the database size lower bound $n = \tilde{\Omega}\left(d^{\frac{\lceil k/2\rceil}{2}}\alpha^{-2}\right)$ required by our algorithm improves significantly on previous work for small $k$. Alternately, translating to additive error, the additive error in~\cite{ThalerUV12} is always at least $\Omega(\min\{n2^{-k},d^{k/2}\})$, compared to the $\tilde{O}(\sqrt{n}d^{\frac{\lceil k/2\rceil}{4}})$ bound that we get.

\junk{\ktnote{Check the above term.}

\annote{I think that TUV's algorithm has error at least $\min\{n2^{-k}, d^{k/2}\}$. The reason is that for $\alpha = 2^{-k}$, TUV's polynomials encode exactly the columsn of the query matrix $A$ and their algorithm reduces to standard Laplace noise with $(\eps, \delta)$-composition.}
\ktnote{I guess that seems reasonable: for alpha being $2^{-k}$, the lower bound on $n$ is $d^k$, at which point, Gaussian noise, or even trivial Laplace noise starts looking more attractive. Adding a sentence.}
}



As mentioned above, Kasiviswanathan et al.~\cite{shiva2010} showed that any differentially private mechanism must incur average case additive error $\Omega(\min(\sqrt{n},\sqrt{d^k}))$. This lower bound comes from privacy considerations alone and makes no computational assumptions. Ullman and Vadhan~\cite{UllmanV11} show that assuming one way functions exist, there is an absolute constant $\alpha^*>0$ such that no polynomial time differentially private algorithm can produce {\em synthetic data} that preserve all 2-way marginals up to error $\alpha^* n$, for a database containing $n=poly(d)$ individuals.  Here synthetic data means that the mechanism computes a new database $D'$ drawn from the original universe and to construct the answer to a query $q$ on $D'$, one computes $q(D')$; e.g. the algorithms in~\cite{BarakCDKMT07,BLR,RothR10, HardtR10,GuptaHRU11,HardtLM12} produce such a synopsis,  (in time exponential in $d$). Our results avoid this lower bound in two ways: the synopses produced by our algorithms are synthetic data from \emph{a larger universe}, and, moreover, for worst case error we use boosting for queries, which aggregates different synopses using medians.

\junk{We leave open the question of whether our bounds can be further improved by an efficient algorithm. Moreover, it is an interesting question if a running time of $d^{o(k)}$ can be achieved when the number of queries asked is a small subset of the $k$-way marginals. We also remark that Hardt, Ligett and McSherry~\cite{HardtLM12} give empirical evaluation of the private multiplicative weights mechanism, and show that for many practical datasets, it can be implemented in practice. It would be interesting to empirically evaluate the mechanisms presented in this work and compare the results to~\cite{HardtLM12}.
}
\section{Preliminaries}
\subsection{Notation}

We denote matrices by upper-case letters, and vectors and scalars by
lower case letters. As standard, we define the $\ell_1$ and $\ell_2$
norms on $\R^m$ respectively as $\|x\|_1 = \sum_{i = 1}^m{|x_i|}$, and
$\|x\|_2 = \sqrt{\sum_{i = 1}^m{|x_i|^2}}$. We use $B_1(n)$ to denote the
$\ell_1$ ball of radius $n$, i.e.~$B_1(n) = \{x: \|x\|_1 \leq n\}$,
and we write $B_1$ for the unit ball $B_1(1)$.

By $x \otimes y$ we denote the tensor product of $x$ and $y$. I.e.~$x
\otimes y \in \R^{[m_1] \times [m_2]}$ when $x \in \R^{m_1}$ and $y \in
\R^{m_2}$, and $(x \otimes y)_{i,j} = x_i y_j$. The notation
$x^{\otimes r}$ stands for the $r$-th tensor power of $x$, i.e.~$x$
tensored with itself $r$ times.

We use the notation $\poly(x_1, \ldots, x_k)$ to denote the set
$O(p(x_1, \ldots,x_k))$ where $p$ is a polynomial in the variables
$x_1, \ldots, x_k$, which tend to infinity jointly.

\subsection{Differential Privacy}

A \emph{database} $D \in \univ^n$ of size $n$ is a multiset of $n$
elements from a universe $\univ$. Each element of the database
represent information about a single individual, and the universe
$\univ$ is the set of all possible types of individuals. Two databases
$D, D'$ are \emph{neighboring} if their symmetric difference $D
\triangle D'$ is at most $1$, i.e.~if they differ in the
presence/absence of a single individual.

We can represent a database $D \in \univ^n$ as a \emph{histogram} as follows:
we enumerate the universe $\univ = \{e_1, \ldots, e_N\}$ in some arbitrary but fixed way; the histogram $x$ associated with a database $D$ is a
vector $x \in \R^N$ such that $x_i$ is equal to the number of occurrences
of $e_i$ in $D$. Two very useful (and closely related) facts about the histogram
representation are that $\|x\|_1 = n$ when $x$ is the histogram of a
size $n$ database, and $\|x - x'\|_1 = D \triangle D'$, where $x$ is
the histogram of $D$, and $x'$ is the histogram of $D'$. In this work, we will work with this histogram representation of the database.

In this paper, we work under the notion of \emph{approximate
differential privacy}. The definition follows.

\begin{definition}[\cite{DMNS,odo}]
  A (randomized) algorithm $\mech$ with input domain $\mathbb{\R}^N$
  and output range $Y$ is \emph{$(\eps, \delta)$-differentially
    private} if for every two neighboring databases $x, x'$, and every
  measurable $S \subseteq Y$, $\mech$ satisfies
  \begin{equation*}
    \Pr[\mech({x}) \in S] \leq e^\eps \Pr[\mech({x}') \in S] + \delta.
  \end{equation*}
\end{definition}
\junk{When $\delta = 0$, we are in the regime of \emph{pure differential privacy}.}

\confoption{}{An important basic property of differential privacy is that the
privacy guarantees degrade smoothly under composition and are not
affected by post-processing.

\begin{lemma}[\cite{DMNS,odo}]
  \label{lm:composition}
  Let $\mech_1$ and $\mech_2$ satisfy $(\eps_1, \delta_1)$- and
  $(\eps_2, \delta_2)$-differential privacy, respectively. Then the
  algorithm which on input $\vec{x}$ outputs the tuple
  $(\mech_1(\vec{x}), \mech_2(\mech_1(\vec{x}), \vec{x}))$ satisfies
  $(\eps_1 + \eps_2, \delta_1 + \delta_2)$-differential privacy.
\end{lemma}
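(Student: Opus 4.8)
I read the hypothesis on $\mech_2$ in the standard adaptive sense: for every fixed value $a$ of its first argument, $\mech_2(a,\cdot)$ is $(\eps_2,\delta_2)$-differentially private. Fix an ordered pair of neighboring databases $x,x'$ and a measurable set $S$ in the (product) output range of $\mech=(\mech_1,\mech_2)$; it suffices to show $\Pr[\mech(x)\in S]\le e^{\eps_1+\eps_2}\Pr[\mech(x')\in S]+\delta_1+\delta_2$. Let $\mu_x,\mu_{x'}$ be the output laws of $\mech_1$ on $x,x'$, slice $S$ along the first coordinate as $S_{y_1}=\{y_2:(y_1,y_2)\in S\}$, and put $p(y_1):=\Pr[\mech_2(y_1,x)\in S_{y_1}]$ and $q(y_1):=\Pr[\mech_2(y_1,x')\in S_{y_1}]$, which are measurable and $[0,1]$-valued. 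Conditioning on the output of $\mech_1$ (using that $\mech_2$'s internal randomness is independent of $\mech_1$'s) gives $\Pr[\mech(x)\in S]=\ex_{y_1\sim\mu_x}[p(y_1)]$ and $\Pr[\mech(x')\in S]=\ex_{y_1\sim\mu_{x'}}[q(y_1)]$, so the target reduces to $\ex_{\mu_x}[p]\le e^{\eps_1+\eps_2}\,\ex_{\mu_{x'}}[q]+\delta_1+\delta_2$.

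The argument then rests on two elementary ingredients. (i) Privacy of $\mech_1$ extends from events to bounded test functions: for any $[0,1]$-valued measurable $h$ on the output range of $\mech_1$, the layer-cake identity $\ex[h]=\int_0^1\Pr[h>t]\,\dif t$ combined with $(\eps_1,\delta_1)$-privacy applied to each superlevel set $\{h>t\}$ gives $\ex_{\mu_x}[h]\le e^{\eps_1}\ex_{\mu_{x'}}[h]+\delta_1$. (ii) Applying $(\eps_2,\delta_2)$-privacy of $\mech_2(y_1,\cdot)$ to the event $S_{y_1}$, together with the trivial bound $p(y_1)\le1$, yields $p(y_1)\le\min\{1,\,e^{\eps_2}q(y_1)+\delta_2\}$; and one has the elementary identity $\min\{1,\,e^{\eps_2}t+\delta_2\}=(1-\delta_2)\,\tilde g(t)+\delta_2$, where $\tilde g(t):=\tfrac1{1-\delta_2}\min\{1-\delta_2,\,e^{\eps_2}t\}$ takes values in $[0,1]$ and satisfies $\tilde g(t)\le\tfrac{e^{\eps_2}}{1-\delta_2}\,t$. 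Thus $p(y_1)\le(1-\delta_2)\,h(y_1)+\delta_2$ for the $[0,1]$-valued measurable function $h(y_1):=\tilde g(q(y_1))$, and $h(y_1)\le\tfrac{e^{\eps_2}}{1-\delta_2}\,q(y_1)$.

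Putting the pieces together, the chain of inequalities runs
\begin{align*}
\ex_{\mu_x}[p]
 &\le (1-\delta_2)\,\ex_{\mu_x}[h]+\delta_2
 \le (1-\delta_2)\bigl(e^{\eps_1}\ex_{\mu_{x'}}[h]+\delta_1\bigr)+\delta_2 \\
 &\le (1-\delta_2)\,e^{\eps_1}\cdot\tfrac{e^{\eps_2}}{1-\delta_2}\,\ex_{\mu_{x'}}[q]+(1-\delta_2)\delta_1+\delta_2
 \le e^{\eps_1+\eps_2}\,\ex_{\mu_{x'}}[q]+\delta_1+\delta_2 ,
\end{align*}
where the second inequality is the lifting (i) applied to $h$, the third uses $h\le\tfrac{e^{\eps_2}}{1-\delta_2}q$ pointwise, and the last uses $(1-\delta_2)\delta_1\le\delta_1$; this is precisely what we wanted.

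The step I expect to be the real crux (as opposed to routine bookkeeping) is getting the additive term down to exactly $\delta_1+\delta_2$. The obvious approach --- bound $p\le e^{\eps_2}q+\delta_2$ pointwise and then lift $\mech_1$-privacy to the $[0,1]$-function $q$ --- only delivers $(\eps_1+\eps_2,\;e^{\eps_2}\delta_1+\delta_2)$, since the $\delta_1$ produced inside $\mech_1$ then gets scaled by the $e^{\eps_2}$ factor contributed by $\mech_2$. Decomposing $p$ as a $(1-\delta_2)$-mixture \emph{before} invoking $\mech_1$-privacy avoids this: the $\tfrac1{1-\delta_2}$ blow-up incurred only at the very end, when $h$ is rewritten in terms of $q$, is cancelled exactly by the $(1-\delta_2)$ factored out at the start, so no $\delta$ is ever multiplied by an $e^{\eps}$. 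The remaining details --- coordinate slicing, conditioning on $\mech_1$'s output, measurability of $p,q,h$, and the measure-theoretic care needed when the output ranges are uncountable --- are standard.
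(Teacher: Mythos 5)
Your proposal is correct. Note that the paper itself offers no proof of this lemma: it is quoted as a known composition property with a citation to prior work, so there is no internal argument to compare against. Your blind proof is a sound, self-contained derivation of adaptive basic composition. The slicing/conditioning setup, the layer-cake lifting of $(\eps_1,\delta_1)$-privacy from events to $[0,1]$-valued test functions, and the pointwise bound $p(y_1)\le\min\{1,e^{\eps_2}q(y_1)+\delta_2\}$ are all standard and correctly executed, and you correctly identify the one genuinely delicate point: applying the naive pointwise bound and then lifting would scale $\delta_1$ by $e^{\eps_2}$, giving only $(\eps_1+\eps_2,\,e^{\eps_2}\delta_1+\delta_2)$. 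Your fix --- writing $\min\{1,e^{\eps_2}t+\delta_2\}=(1-\delta_2)\tilde g(t)+\delta_2$ with $\tilde g$ bounded by $1$ and by $\tfrac{e^{\eps_2}}{1-\delta_2}t$, so that the $(1-\delta_2)$ prefactor exactly cancels the later blow-up --- is a clean way to land on exactly $\delta_1+\delta_2$, and the displayed chain of inequalities checks out. Two cosmetic remarks: you should dispose of the degenerate case $\delta_2\ge 1$ separately (the conclusion is vacuous there, and your $\tfrac{1}{1-\delta_2}$ is undefined), and it is worth stating explicitly that the superlevel sets $\{h>t\}$ are measurable subsets of $\mech_1$'s output range so that the privacy inequality applies to each of them; neither affects correctness.
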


Let us also recall the basic Gaussian noise mechanism and its privacy
guarantee. 
\begin{lemma}[\cite{DinurN03,DworkN04,DMNS}]\label{lm:gauss}
  Let $A = (a_i)_{i = 1}^N$ be a $m \times N$ matrix such that
  $\forall i: \|a_i\|_2^2 \leq \sigma^2$. Furthermore, define $c(\eps,
  \delta) = \frac{1 + \sqrt{2\ln(1/\delta)}}{\eps}$. An algorithm
  which on input a histogram $x \in \R^N$ outputs $Ax + w$, where $w \sim N(0,
  c(\eps, \delta)^2 \sigma^2 )^m$, satisfies $(\eps,
  \delta)$-differential privacy.
\end{lemma}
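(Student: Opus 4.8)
\noindent\textbf{Proof proposal for Lemma~\ref{lm:gauss}.}
My plan is to reduce the statement to a one-dimensional computation about shifted spherical Gaussians and then control the privacy loss random variable. Fix neighboring histograms $x, x' \in \R^N$, so $\|x - x'\|_1 \le 1$. Setting $v = A(x - x') = \sum_i (x_i - x'_i)\, a_i$, the triangle inequality together with the column bound $\|a_i\|_2 \le \sigma$ gives $\|v\|_2 \le \sum_i |x_i - x'_i|\,\|a_i\|_2 \le \sigma$. Hence $\mech(x)$ and $\mech(x')$ are the isotropic Gaussians $N(Ax, s^2 I_m)$ and $N(Ax', s^2 I_m)$ with $s = c(\eps,\delta)\,\sigma$, whose means lie at $\ell_2$-distance at most $\sigma$; by rotational invariance this is effectively a pair of Gaussians on the line differing in mean by at most $\sigma$.

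Next I would analyze the privacy loss $L(o) = \ln\frac{p_x(o)}{p_{x'}(o)}$, where $p_x, p_{x'}$ are the output densities. Plugging the Gaussian density into this expression and writing $z = o - Ax$ gives, after a short calculation,
\begin{equation*}
L(o) \;=\; \frac{\|z + v\|_2^2 - \|z\|_2^2}{2 s^2} \;=\; \frac{\langle z, v\rangle}{s^2} + \frac{\|v\|_2^2}{2 s^2}.
\end{equation*}
When $o \sim \mech(x)$ we have $z \sim N(0, s^2 I_m)$, so $\langle z, v\rangle \sim N(0, s^2\|v\|_2^2)$ and therefore $L$ is a one-dimensional Gaussian with mean $\|v\|_2^2/(2s^2)$ and standard deviation $\|v\|_2/s$. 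Since both quantities are monotone in $\|v\|_2 \le \sigma$, it suffices to treat the worst case $\|v\|_2 = \sigma$, i.e.\ $L \sim N\!\big(\tfrac{1}{2c^2},\,\tfrac{1}{c^2}\big)$ with $c = c(\eps,\delta)$.

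It then remains to (i) bound the upper tail of $L$, and (ii) convert a tail bound into an $(\eps,\delta)$ guarantee. For (i), $\Pr_{o \sim \mech(x)}[\,L(o) > \eps\,] = \Pr[\,Z > c\eps - \tfrac{1}{2c}\,]$ for a standard normal $Z$; the point of the choice $c = \frac{1 + \sqrt{2\ln(1/\delta)}}{\eps}$ is exactly that it forces $c\eps - \tfrac{1}{2c} \ge \sqrt{2\ln(1/\delta)}$, after which the tail bound $\Pr[Z > t] \le e^{-t^2/2}$ yields $\Pr_{o \sim \mech(x)}[\,L(o) > \eps\,] \le \delta$. For (ii), I would invoke the routine fact that a pointwise-except-$\delta$ bound on the privacy loss implies differential privacy: for any measurable $S$, writing $\Pr[\mech(x) \in S]$ as the integral of $p_x$ over $S$ and splitting according to whether $L(o) \le \eps$, the first part is at most $e^\eps \Pr[\mech(x') \in S]$ and the second at most $\delta$; the same bound with $x$ and $x'$ exchanged handles the other direction, by symmetry of the above analysis in $v$.

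The only delicate point I anticipate is step (i): pinning the tail probability down to exactly $\delta$ with the stated $c(\eps,\delta)$, which amounts to checking the elementary inequality $2 + \sqrt{2\ln(1/\delta)} \ge \sqrt{2\ln(1/\delta) + 2\eps}$ (equivalently $\eps \le 2 + 2\sqrt{2\ln(1/\delta)}$, which holds throughout the parameter range of interest; for larger $\eps$ privacy is so weak that the claim is immediate). Everything else — the $\ell_2$-sensitivity bound on $Ax$, the Gaussianity of $L$, and the conversion from a tail bound to $(\eps,\delta)$-differential privacy — is routine.
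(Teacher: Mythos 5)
Your proposal is essentially correct, and it is worth noting that the paper itself gives no proof of Lemma~\ref{lm:gauss}: it is imported as a known result from \cite{DinurN03,DworkN04,DMNS}, and the argument in those sources is exactly the one you outline --- bound the $\ell_2$ sensitivity of $x \mapsto Ax$ by $\sigma$ via the triangle inequality over the columns $a_i$, observe that the privacy loss is the one-dimensional Gaussian $N\bigl(\|v\|_2^2/(2s^2),\ \|v\|_2^2/s^2\bigr)$ (and that its upper tail is monotone in $\|v\|_2$, so the worst case $\|v\|_2=\sigma$ suffices), bound that tail, and convert the tail bound into $(\eps,\delta)$-differential privacy by splitting any event $S$ according to whether the loss exceeds $\eps$. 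The one sentence I would not let stand is the parenthetical claim that for $\eps > 2 + 2\sqrt{2\ln(1/\delta)}$ ``privacy is so weak that the claim is immediate.'' It is not immediate: $c(\eps,\delta)$, and hence the noise scale $c(\eps,\delta)\sigma$, shrinks as $\eps$ grows, so a large target $\eps$ does not trivialize the statement; in that range your threshold $c\eps - \tfrac{1}{2c} = 1+\sqrt{2\ln(1/\delta)} - \tfrac{1}{2c}$ genuinely drops below $\sqrt{2\ln(1/\delta)}$ and the tail bound no longer closes the argument, which is why standard statements of this lemma carry a restriction on $\eps$ (e.g.\ $\eps \le 1$). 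This is harmless for the paper --- the lemma is only used for small $\eps$, and your computation covers every $\eps \le 2 + 2\sqrt{2\ln(1/\delta)}$ --- but you should state that restriction explicitly rather than dismiss the remaining range as immediate.
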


}
\subsection{Linear Queries and Error Complexity}

A query $q:\univ^* \rightarrow \R$ is \emph{linear} if $q(D) = \sum_{e \in
 D}{q(e)}$. We represent a set $\queries$ of $m$ linear queries as a
\emph{query matrix} $A \in \R^{m \times N}$; associating each query $q
\in \queries$ with a row in $A$ and each universe element $e \in \univ$ with a
column, $A$ is defined by $A_{q,e} = q(e)$. The true answers to all
queries in $\queries$ for a database $D$ with histogram $x$ are given by $y
= Ax$. The \emph{sensitivity} of $\queries$ is defined as $\max_{e \in
  \univ}{|q(e)|}$. 

We measure the error complexity of a mechanism $\mech$ according to
two different measures: average error and worst-case error. The
\emph{mean squared error (MSE)} of a mechanism $\mech$ according to a
distribution $p$ on a set of queries $\queries$ is defined by
\begin{equation*}
  \ex_{\mech} \ex_{q \sim p}{|q(D) - \hat{y}_q|^2},
\end{equation*}
where $\hat{y} = \mech(D, \queries)$. \confoption{Note that we can use Jensen's inequality to translate MSE bounds to bounds on average absolute error.}{Notice that, by Jensen's
inequality, the square root of the MSE according to $p$ is an upper bound on average
absolute error according to $p$
\begin{equation*}
  \ex_{\mech} \ex_{q \sim p}{|q(D) - \hat{y}_q|} \leq
  \sqrt{\ex_{\mech} \ex_{q \sim p}{|q(D) - \hat{y}_q|^2}}. 
\end{equation*}
}
The \emph{worst-case} error of a mechanism $\mech$ on a set of queries
$\queries$ is defined by
\begin{equation*}
  \ex_\mech \max_{q \in \queries}{|q(D) - \hat{y}_q|},
\end{equation*}
where $\hat{y}$ is as before. \confoption{}{For any distribution $p$, if
the worst-case error of $\mech$ is $\lambda$, then the MSE according
to $p$ is at most $\lambda^2$.}

\subsection{Marginals and Parities}
\label{sect:marginals-parities}

In this paper we are concerned with low dimensional marginals, which are a
special case of linear queries.
\junk{A set of $m$ \emph{linear queries} is given by a $m \times N$ \emph{query
  matrix} $A$; the exact answers to the queries on a histogram $x$ are
given by the $m$-dimensional vector $y = Ax$. When $A \in \{0,1\}^{m
  \times N}$, we call the queries \emph{counting queries}.}
Let ${[d]\choose k} = \{S \subseteq [d]: |S| = k\}$ denote the set of
subsets of $[d]$ of size $k$. For \emph{$k$-way marginals}, the
universe $\univ$ is the set of $d$-dimensional $\{+1,-1\}$
vectors\footnote{This is a simple notational switch from the usual
  $\{0,1\}$ vectors that helps simplify notation.}, i.e.~$\univ =
\{-1, 1\}^d$. Thus, each person is represented in the database $D$ by
$d$ binary attributes. A \emph{$k$-way marginal query} is specified by
a a set $S$ of $k$ attribute indexes, and a $\beta_{i} \in \{-1,1\}$
for each $i\in S$, and is equal to the number of rows in the database
for which the row vector $b$ restricted to the set of attributes $S$
takes the value given by $\beta$. More formally,
$$
\marg_{(S,\beta)}(D) = \sum_{e \in D}{\bigwedge_{i\in S}{(e_{i}=\beta_i)}} = \sum_{e \in
  D}{\prod_{i\in S} \one({e_{i}} = \beta_i)} . 
$$
%

It will be convenient to work with a slightly different set of queries that we call parity queries. In the same setting as above, a {\em $k$-wise parity} query is specified by a subset $S$ of $k$ attribute indexes. It is given by
$$
\parity_{S}(D) = \sum_{e \in  D}{\prod_{i\in S} {e_{i}}}. 
$$
\confoption{}{In other words, it is the difference of the number of database elements that have an even number of ones in indexes corresponding to $S$ and the number of those that have an odd number of ones. }

We note that these $k$-wise parities correspond exactly to the
degree-$k$ Fourier coefficients of the histogram $x$. \confoption{}{Barak \etal~\cite{BarakCDKMT07} observed the following useful reduction from marginals to parities. 
\begin{lemma}
For any $S\in {[d]\choose k}$ and $\beta \in \{-1,1\}^S$, there are $\{\frac{-1}{2^k},\frac{+1}{2^k}\}$ coefficients $\alpha_{S,\beta,T}$ for $T \subseteq S$ such that for all $D$,
\begin{align*}
\marg_{(S,\beta)}(D) = \sum_{T \subseteq S} \alpha_{S,\beta,T} \cdot \parity_T(D).
\end{align*}
\end{lemma}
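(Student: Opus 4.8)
The plan is to use the elementary Fourier-analytic identity that rewrites a single equality test on a $\pm 1$ attribute as an affine function of that attribute. First I would observe that since $e_i,\beta_i\in\{-1,1\}$, we have $\one(e_i=\beta_i)=\tfrac12(1+\beta_i e_i)$: the right-hand side is $1$ when $e_i=\beta_i$ and $0$ when $e_i=-\beta_i$. Substituting this into the definition of $\marg_{(S,\beta)}$ yields
\[
\marg_{(S,\beta)}(D)=\sum_{e\in D}\prod_{i\in S}\frac{1+\beta_i e_i}{2}
=\frac{1}{2^{k}}\sum_{e\in D}\prod_{i\in S}(1+\beta_i e_i),
\]
using $|S|=k$ to pull out the factor $2^{-k}$.

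Next I would expand the product $\prod_{i\in S}(1+\beta_i e_i)$ by distributing, which gives $\sum_{T\subseteq S}\prod_{i\in T}\beta_i e_i=\sum_{T\subseteq S}\bigl(\prod_{i\in T}\beta_i\bigr)\bigl(\prod_{i\in T}e_i\bigr)$. Plugging this in and swapping the order of summation,
\[
\marg_{(S,\beta)}(D)=\frac{1}{2^{k}}\sum_{T\subseteq S}\Bigl(\prod_{i\in T}\beta_i\Bigr)\sum_{e\in D}\prod_{i\in T}e_i
=\sum_{T\subseteq S}\frac{\prod_{i\in T}\beta_i}{2^{k}}\cdot\parity_T(D),
\]
where the last equality is just the definition $\parity_T(D)=\sum_{e\in D}\prod_{i\in T}e_i$, with the convention that the empty product (for $T=\emptyset$) equals $1$, so $\parity_\emptyset(D)=|D|$. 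Setting $\alpha_{S,\beta,T}=2^{-k}\prod_{i\in T}\beta_i$ completes the argument, since $\prod_{i\in T}\beta_i\in\{-1,+1\}$ for every $T\subseteq S$, and hence each $\alpha_{S,\beta,T}\in\{-2^{-k},+2^{-k}\}$ as claimed.

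There is essentially no hard step here; the only points to handle carefully are the base case $T=\emptyset$ (empty products equal $1$, so the term contributes $2^{-k}\cdot|D|$) and tracking that $|S|=k$ is exactly what produces the normalization $2^{-k}$ — for a general set $S$ the same computation gives $\alpha_{S,\beta,T}=2^{-|S|}\prod_{i\in T}\beta_i$. Since the derivation holds term-by-term for each $e\in D$, it is valid for every database $D$ simultaneously, which is what the statement requires.
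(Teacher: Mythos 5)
Your proof is correct and follows essentially the same route as the paper's: the identity $\one(e_i=\beta_i)=\tfrac12(1+\beta_i e_i)$, expansion of the product over subsets $T\subseteq S$, and reading off $\alpha_{S,\beta,T}=2^{-k}\prod_{i\in T}\beta_i$. The only cosmetic difference is that the paper first invokes linearity to reduce to a single-element database, whereas you carry the sum over $e\in D$ explicitly throughout, which amounts to the same calculation.
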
}
\confoption{}{
\begin{proof}
Note that both the operators $\marg_{(S,\beta)}$ and $\parity_{S}$ are
linear. Thus it suffices to prove the statement for a database
containing a single element in $\{-1,1\}^d$. Finally observe that:
\begin{align*}
\marg_{(S,\beta)}(\{e\}) &= {\prod_{i\in S} \one({e_{i}} = \beta_i)}\\
&= {\prod_{i\in S} \frac{1}{2}(1 + {e_{i}}\beta_i)}\\
&= \frac{1}{2^k} \sum_{T \subseteq S} (\prod_{i \in T} e_{i}\beta_i)\\
&= \frac{1}{2^k} \sum_{T \subseteq S} (\prod_{i \in T} \beta_i)
\cdot\parity_T(\{e\})
 \end{align*}
\end{proof}
The theorem below follows immediately:}\confoption{In the full version, we show the following reductions.}{}
\begin{theorem}
Suppose that for a database $D$, for all $S \in {[d]\choose k}$, we
have estimates $\hat{y}_S$ satisfying $|\hat{y}_S- \parity_S(D)| \leq
\lambda$. Then we can efficiently construct estimates $z_{S,\beta}$
for all $S\in {[d]\choose k}$, and $\beta \in \{-1,1\}^{S}$, such that
$|z_{S,\beta}-\marg_{(S,\beta)}(D)| \leq \lambda$.
\end{theorem}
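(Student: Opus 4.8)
The plan is to derive the statement immediately from the Barak \etal\ reduction from marginals to parities, combined with the triangle inequality; the point is that the reduction writes each marginal as a $\pm 2^{-k}$-combination of exactly $2^{k}$ parities, so the per-query error is preserved with no loss at all.

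Concretely, the first step is to recall the identity underlying that reduction: since $\one(e_i = \beta_i) = \tfrac12(1 + e_i\beta_i)$, expanding the product over $i \in S$ and summing over $e \in D$ gives, for every $S \in \binom{[d]}{k}$ and $\beta \in \{-1,1\}^S$,
\[
\marg_{(S,\beta)}(D) \;=\; \frac{1}{2^{k}}\sum_{T \subseteq S}\Bigl(\prod_{i \in T}\beta_i\Bigr)\parity_T(D) \;=\; \sum_{T \subseteq S}\alpha_{S,\beta,T}\,\parity_T(D),
\]
with $\alpha_{S,\beta,T} = 2^{-k}\prod_{i\in T}\beta_i \in \{-2^{-k}, +2^{-k}\}$. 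The next observation is that this identity refers to $\parity_T$ for \emph{all} $T \subseteq S$, i.e.\ for every degree up to $k$, so I first need error-$\lambda$ estimates $\hat y_T$ for every $T$ with $|T| \le k$: for $T = \emptyset$ this is free, since $\parity_\emptyset(D) = n$ is public; for $1 \le |T| \le k$ it is what the mechanism provides, because running it on the family of all parities of degree at most $k$ keeps the query count at $\sum_{j\le k}\binom{d}{j} = O(d^k)$ and hence leaves the error bounds unchanged up to constants.

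Given such estimates, I would set $z_{S,\beta} := \sum_{T \subseteq S}\alpha_{S,\beta,T}\,\hat y_T$. Subtracting the displayed identity and applying the triangle inequality, using $|\alpha_{S,\beta,T}| = 2^{-k}$ and $|\{T : T \subseteq S\}| = 2^{k}$, yields
\[
\bigl|z_{S,\beta} - \marg_{(S,\beta)}(D)\bigr| \;\le\; \sum_{T \subseteq S}|\alpha_{S,\beta,T}|\,\bigl|\hat y_T - \parity_T(D)\bigr| \;\le\; 2^{-k}\cdot 2^{k}\cdot\lambda \;=\; \lambda,
\]
as required. Efficiency is immediate: each $z_{S,\beta}$ is an explicit signed average of $2^{k}$ of the given numbers with closed-form coefficients $\alpha_{S,\beta,T}$, and there are $\binom{d}{k}2^{k} = O(d^k 2^k)$ pairs $(S,\beta)$, so the whole transformation runs in time $\poly(d^k)$.

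I do not anticipate a genuine obstacle, since the statement is a corollary of a known identity. The only things worth being careful about are the bookkeeping for the lower-degree parities (arranging that the mechanism's guarantee covers all $|T| \le k$, not only $|T| = k$), and checking that the constant does not degrade — it does not, because the $2^{k}$ summands are exactly offset by the $2^{-k}$ coefficients, so we recover the \emph{same} error $\lambda$ rather than something growing with $k$.
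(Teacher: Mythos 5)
Your proposal is correct and follows essentially the same route as the paper: the Barak et al.\ identity $\marg_{(S,\beta)} = \sum_{T\subseteq S}\alpha_{S,\beta,T}\parity_T$ with $|\alpha_{S,\beta,T}| = 2^{-k}$, then $z_{S,\beta} = \sum_{T\subseteq S}\alpha_{S,\beta,T}\hat y_T$ and the triangle inequality, with the $2^k$ summands exactly offset by the $2^{-k}$ coefficients. Your remark about needing estimates for all $T\subseteq S$ (not just $|T|=k$) is a point the paper handles by expanding the query matrix to include all $k'$-wise parities for $k'\le k$, so your bookkeeping matches the paper's setup.
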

\confoption{}{\begin{proof}
We set $z_{(S,\beta)} = \sum_{T \subseteq S}\alpha_{S,\beta,T} \cdot
\hat{y}_T$. Thus, by the triangle inequality,
\begin{align*}
|z_{S,\beta}-\marg_{(S,\beta)}(D)| &= \left|\sum_{T \subseteq S}\alpha_{S,\beta,T}\cdot (\hat{y}_T - \parity_T(D))\right|\\
&\leq \sum_{T \subseteq S}|\alpha_{S,\beta,T}|\cdot |\hat{y}_T - \parity_T(D))|\\
&\leq \sum_{T \subseteq S}\frac{1}{2^k}\cdot \lambda = \lambda
\end{align*}
\end{proof}
}
\confoption{}{It will also be useful to have a version of this result for mean squared error.}
\begin{theorem}
Let $\dee$ be a distribution over $k$-way marginals. Then there exists a distribution $\dee'$ over $k$-wise parities such that the following holds. Given estimates $\hat{y}_S$ such that $\ex_{S \sim \dee'}[|\hat{y}_S - \parity_S(D)|^2] \leq \lambda^2$, we can efficiently construct estimates  $z_{S,\beta}$ such that $\ex_{(S,\beta) \sim \dee}[|z_{S,\beta} - \marg_{(S,\beta)}(D)|^2] \leq  \lambda^2$.
\end{theorem}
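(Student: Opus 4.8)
The plan is to build everything on the Barak--Dinur--Kindler--Makarychev--Talwar decomposition stated above (from \cite{BarakCDKMT07}), which writes each marginal as a signed average of parities: for every $S \in \binom{[d]}{k}$ and $\beta \in \{-1,1\}^S$ there are coefficients $\alpha_{S,\beta,T} \in \{-2^{-k}, 2^{-k}\}$, indexed by $T \subseteq S$, with $\marg_{(S,\beta)}(D) = \sum_{T \subseteq S} \alpha_{S,\beta,T}\,\parity_T(D)$ for every database $D$. The estimator is then the obvious one: set $z_{S,\beta} = \sum_{T \subseteq S} \alpha_{S,\beta,T}\,\hat y_T$, which is a fixed linear combination of $2^k$ of the given estimates and hence computed efficiently. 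Writing $e_T = \hat y_T - \parity_T(D)$ for the signed error of the parity estimates, linearity of the decomposition gives $z_{S,\beta} - \marg_{(S,\beta)}(D) = \sum_{T \subseteq S} \alpha_{S,\beta,T}\, e_T$.

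The one structural fact I need is that the magnitudes $|\alpha_{S,\beta,T}| = 2^{-k}$ sum to $1$ over the $2^{k}$ subsets $T \subseteq S$ (as $|S| = k$). Thus $\sum_{T \subseteq S} |\alpha_{S,\beta,T}|\,(\pm e_T)$ is a genuine convex combination of the $e_T$ up to signs, and convexity of $t \mapsto t^2$ (Jensen) yields $|z_{S,\beta} - \marg_{(S,\beta)}(D)|^2 = \bigl(\sum_{T\subseteq S}\alpha_{S,\beta,T} e_T\bigr)^2 \le \sum_{T \subseteq S} |\alpha_{S,\beta,T}|\, e_T^2 = 2^{-k}\sum_{T \subseteq S} e_T^2$. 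This is precisely the mean-squared-error analogue of the triangle-inequality step used in the worst-case reduction theorem above, with Jensen playing the role of the triangle inequality.

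It remains to choose $\dee'$ so that the right-hand side, averaged over $\dee$, is exactly the MSE of the parity estimates under $\dee'$. The natural choice is the two-stage sampling rule: draw $(S,\beta) \sim \dee$ and then draw $T$ uniformly among the $2^k$ subsets of $S$. Then $\ex_{T \sim \dee'}[e_T^2] = \ex_{(S,\beta)\sim\dee}\bigl[2^{-k}\sum_{T \subseteq S} e_T^2\bigr]$, which dominates $\ex_{(S,\beta)\sim\dee}[|z_{S,\beta} - \marg_{(S,\beta)}(D)|^2]$ by the previous paragraph; since the former is at most $\lambda^2$ by hypothesis, so is the latter, as claimed. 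The only real wrinkle — and the single point worth stating carefully rather than any genuine obstacle — is that $\dee'$ as defined is supported on parities $\parity_T$ with $|T| \le k$ rather than exactly $k$; I would handle this either by reading ``$k$-wise parities'' to include all parities of degree at most $k$ (which is what the parity mechanism elsewhere in the paper in fact estimates, since it must in order to recover marginals), or by the standard padding device of adjoining $k$ dummy attributes fixed to $+1$ for every individual, under which a degree-$j$ parity on $[d]$ equals a degree-$k$ parity on $[d+k]$ and the dimension change is immaterial. All the content is in choosing the decomposition and the matching two-stage distribution and then invoking convexity; the remaining calculations are routine.
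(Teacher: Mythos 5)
Your proposal is correct and essentially identical to the paper's proof: the same estimator $z_{S,\beta} = \sum_{T \subseteq S} \alpha_{S,\beta,T}\,\hat y_T$, the same two-stage distribution $\dee'$ (draw $(S,\beta) \sim \dee$, then $T$ uniform among subsets of $S$), and the same per-query bound $2^{-k}\sum_{T \subseteq S} e_T^2$, which you obtain via Jensen while the paper uses Cauchy--Schwarz --- an immaterial difference since the coefficients all have magnitude $2^{-k}$. Your closing remark about parities of order below $k$ is a fair point the paper does not belabor in this proof, but it is consistent with the paper's convention of expanding the query matrix to include all $k'$-wise parities for $k' \leq k$.
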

\confoption{}{\begin{proof}
We define $\dee'$ as follows: we sample an $(S,\beta)$ in $\dee$ and sample a random $T \subseteq S$. The estimate $z_{S,\beta}$ is simply defined to be $\sum_{T}\alpha_{S,\beta,T} \cdot \hat{y}(T)$. Now for any $(S,\beta)$
\begin{align*}
|z_{S,\beta} - \marg_{(S,\beta)}(D)|^2
&= \left|\sum_{T \subseteq S}\alpha_{S,\beta,T}\cdot (\hat{y}_T - \parity_T(D))\right|^2\\
&\leq (\sum_{T \subseteq S}|\alpha_{S,\beta,T}|^2)\cdot(\sum_{T \subseteq S} |\hat{y}_T - \parity_T(D))|^2)\\
&= 2^{-k}\cdot(\sum_{T \subseteq S} |\hat{y}_T - \parity_T(D))|^2),\\
\end{align*}
where the inequality follows by Cauchy Schwartz. Finally observe that when $(S,\beta)$ is drawn according to $\dee$, each of the terms in the summation in the last term is distributed according to $\dee'$. By linearity of expectation, the claim follows.
\end{proof}
}
Thus in the rest of the paper, we will concern ourselves with parity queries. When the database is in its histogram representation, these queries are represented by a matrix $A$ with rows indexed by sets $S \subseteq [d]$ and columns indexed by $e\in \{-1,+1\}^d$, with $a_{S,e} = \parity_S(\{e\})$.

\subsection{Convex Geometry}

For a convex body $K \subseteq \R^m$, the \emph{polar body} $K^\circ$ is
defined by $K^\circ = \{y: \langle y, x \rangle \leq 1~\forall x \in
K\}$. \junk{The fundamental fact about polar bodies we use is that for any
two convex bodies $K$ and $L$
\begin{equation}\label{eq:conv-duality}
  K \subseteq L \Leftrightarrow L^\circ \subseteq K^\circ.
\end{equation}
In the remainder of this paper, when we claim that a fact follows ``by
convex duality,'' we mean that it is implied by
(\ref{eq:conv-duality}). }

A convex body $K$ is \emph{(centrally) symmetric} if $-K = K$. The \emph{Minkowski
  norm} $\|x\|_K$ induced by a symmetric convex body $K$ is defined as 
$\|x\|_K = \min\{r \in \R: x \in rK\}$. The Minkowski norm induced by
the polar body $K^\circ$ of $K$ is the \emph{dual norm} of $\|x\|_K$
and also has the form 
\begin{equation*}
  \|y  \|_{K^\circ} = \max_{x \in K}{\langle x,  y\rangle}. 
\end{equation*}
The dual norm $\|y\|_{K^\circ}$ is also known as the \emph{width} of
$K$ in the direction of the vector $y$. The \emph{mean (Gaussian) width} of $K$
is the expected width of $K$ in the direction of a random Gaussian and
is denoted $\ell^*(K)$, i.e.~$\ell^*(K) = \E \|g\|_{K^\circ}$, where
$g \sim N(0,1)^m$.

For convex symmetric $K$, the induced norm and the dual
norm satisfy H\"{o}lder's inequality:
\begin{equation}
  \label{eq:holder}
  |\langle x, y \rangle| \leq \|x\|_K \|y\|_{K^\circ}.
\end{equation}

A convex body of primary importance in geometric approaches to
designing differentially private mechanisms for linear queries is the
body $K = A B_1$, where $A$ is the query matrix of family $\queries$
of linear queries. The body $K$ is the symmetric convex hull of all
possible vectors of answers $y$ to the queries $\queries$ for a
 database of size 1.\footnote{The symmetric convex hull of a set of
   points $P$ is the convex hull of $P$ and $-P$.} 
Since the queries are linear, it is easy to see that $nK = AB_1(n)$ is
the symmetric convex hull of all possible vectors of answers $y$ for a database
of size $n$.

\section{The \afive\  Algorithm and Relaxations}

A central tool in the present work is is an algorithm for
answering linear queries, first proposed in~\cite{geometry2013}, which
is simply the well-known Gaussian noise mechanism combined with a
post-processing step. The post-processing, a projection onto $nK$, is
the computationally expensive step of the algorithm. Here, in order to
implement this step efficiently, we modify the algorithm
from~\cite{geometry2013} to project onto a relaxation of $nK$, and we
compute an approximate projection using the Frank-Wolfe convex
minimization algorithm. 

\junk{
\subsection{The \afive\  Algorithm}

A central tool in the present work is is a simple algorithm for
answering linear queries, first proposed in~\cite{geometry2013}. The
algorithm, which we call the \afive\  algorithm, is merely the
well-known Gaussian noise mechanism combined with a post-processing
step. First we recall the Gaussian noise mechanism and the privacy
guarantee for it. 

Assume the query matrix $A$ is the $k$-way parities queries
matrix (for some $k$) and that $n \ll m$. Then $\sigma \approx
\sqrt{m}$ and the Gaussian mechanism above adds noise so large that
the noisy answers $\tilde{y} = Ax + w$ with high probability will be
inconsistent with any database of size $n$. Indeed, it is very likely that
for at least one $i$ (in fact most $i$), $|\tilde{y}_i| =
\Omega(\sqrt{m}) \gg n$. In the \afive\  algorithm we address this
problem using a simple post-processing procedure: we find the closest
(in $\ell_2$ norm) to $\tilde{y}$ set of answers that are consistent
with a database of size $n$. Surprisingly, this post-processing
procedure significantly reduces the MSE of the answers. The algorithm
is presented as Algorithm~\ref{alg:afive}

\begin{algorithm}
  \caption{\afive$_{\nu}$}\label{alg:afive}
  \begin{algorithmic}
    \REQUIRE \emph{(Public)} query matrix $A = (a_i)_{i = 1}^N \in [-1,1]^{m
      \times N}$, distribution $p = (p_i)_{i = 1}^m$ on $[m]$.
    \REQUIRE \emph{(Private)} database $x \in \R^N$

    \STATE Let $P = \diag(p)$

    \STATE Let $c(\eps, \delta) =  \frac{1+\sqrt{2\ln(1/\delta)}}{\eps}$.
           
    \STATE Sample $w \sim N(0, c(\eps, \delta)^2)^{m}$;
    \STATE  Let $\tilde{y}=P^{1/2}Ax + w$. 
    \STATE  Let $y^* = \arg \min\{\|\tilde{y} - \hat{y}\|_2^2:
    \hat{y} \in nP^{1/2}K\}$, where $K = AB_1$. 
    \STATE \textbf{Output} $\hat{y} = P^{-1/2} \bar{y}$, where
    $\bar{y}$ is an arbitrary vector in $nP^{1/2}K$ satisfying  
    $
    \|\tilde{y} - \bar{y}\|_2^2 \leq \|\tilde{y} - {y^*}\|_2^2 +
    \nu. 
    $

    (We use the convention $0^{-1/2} = 0$.)
  \end{algorithmic}
\end{algorithm}

The point $y^*$ is the \emph{projection} of $\tilde{y}$ on the
convex body $nK$. Note that \afive$_{\nu}$ as presented outputs an arbitrary approximate
projection. The reason for this is partially 
computational complexity considerations and partially to ensure that
the output has a concise description, which is essential
for private boosting.  Stating the theorem for this non-deterministic version will give us the freedom to use an arbitrary approximation algorithm later. We scale the query matrix $A$ and
the body $K$ proportionally to the distribution $p$. This rescaling
lets us optimize average error \emph{when the average is taken
  according to $p$}. 


Next we give the privacy and error guarantees for \afive. The
following theorem was proved in~\cite{geometry2013}\confoption{}{; we provide a full
proof for completeness}. 

\begin{theorem}[~\cite{geometry2013}]\label{thm:lse}
  \afive$_{\nu}$ satisfies $(\eps, \delta)$-differential privacy and
  has expected MSE according to the distribution $p$ at most
  $$
  \E \sum_{i = 1}^m{p_i |y_i - \hat{y}_i|^2} \leq 4c(\eps, \delta) n
  \ell^*(P^{1/2}K) + \nu.
  $$
Additionally, for any $t>0$,
$$
  \Pr[\sum_{i = 1}^m{p_i |y_i - \hat{y}_i|^2} > 4c(\eps, \delta) n
  (\ell^*(P^{1/2}K)+t) + \nu] \leq \exp(-t^2/4).
$$

  Moreover, the output $\hat{y}$ of \afive$_\nu$ is a point in
  $n \cdot\Pi_{\supp(p)}K$ where  $\Pi_{\supp(p)}$ is a coordinate projection
  onto the support of $p$.  
\end{theorem}
\confoption{}{\begin{proof}
  We first analyze privacy. Since $A \in [-1, 1]^{m \times N}$, and
  $\sum{p_i} = 1$, for any column $a_j$ of $A$ we have
  $\|P^{1/2}a_j\|_2 \leq 1$. Then by Lemma~\ref{lm:gauss}, $\tilde{y}$ is
  $(\eps, \delta)$-differentially private. The output $\hat{y}$ is a
  function only of $\tilde{y}$ and not of the private data $x$, and is
  therefore private by Lemma~\ref{lm:composition}.

  Notice that $\hat{y}_i = p^{-1/2}_i \bar{y}_i$ (again using the
  convention $0^{-1/2} = 0$). Then
  $$
  \sum_{i = 1}^m {p_i |y_i - \hat{y}_i|^2} = \sum_{i=1}^m{|p_i^{1/2}
    y_i - \bar{y}_i|^2} = \|P^{1/2}y - \bar{y}\|_2^2.
  $$
  Therefore it is enough to bound $\E \|P^{1/2}y - \bar{y}\|_2^2$.
  The bound is based on H\"{o}lder's inequality and the following
  fact:
  \begin{align}
    \|\bar{y} - P^{1/2}y\|_2^2 &= \langle \bar{y} - P^{1/2}y, \bar{y} - P^{1/2}y \rangle\notag\\ 
    &= \langle \bar{y} - P^{1/2}y, \tilde{y} - P^{1/2}y\rangle + \langle
    \bar{y} -  P^{1/2}y, \bar{y} - \tilde{y} \rangle\notag\\
    &\leq 2\langle \bar{y} - P^{1/2}y, \tilde{y} - P^{1/2}y \rangle +
    \nu  \label{eq:magic-ineq}. 
  \end{align}
  The inequality (\ref{eq:magic-ineq}) follows from 
  \begin{align*}
    \langle \bar{y} - P^{1/2}y, \tilde{y} - P^{1/2}y \rangle &= \langle
    \tilde{y}- P^{1/2}y, \tilde{y} - P^{1/2}y\rangle + \langle \tilde{y} - P^{1/2}y,
    \bar{y} - \tilde{y} \rangle\\ 
    &= \|\tilde{y} -  P^{1/2}y\|_2^2 + \langle \tilde{y} - P^{1/2}y,\bar{y} - \tilde{y}  \rangle\\
    &\geq \|\tilde{y} - \bar{y}\|_2^2 - \nu  + \langle \tilde{y} -
     P^{1/2}y,\bar{y} - \tilde{y} \rangle\\ 
    &= \langle \bar{y} - \tilde{y}, \bar{y} - \tilde{y} \rangle - \nu
     + \langle \tilde{y} - P^{1/2}y, \bar{y} -  \tilde{y} \rangle\\
    &= \langle \bar{y} - P^{1/2}y,\bar{y} - \tilde{y}\rangle - \nu .
  \end{align*}
  Inequality (\ref{eq:magic-ineq}), $w = \tilde{y} - P^{1/2}y$, and
  H\"{o}lder's inequality imply 
  \begin{equation*}\label{eq:holder-bound}
    \|\bar{y} - P^{1/2}y\|_2^2 \leq 2\langle \bar{y}- P^{1/2}y,  w \rangle + \nu 
    \leq 2\|\bar{y} - P^{1/2}y\|_{P^{1/2}K}\|w\|_{(P^{1/2}K)^\circ} +\nu \leq
    4n \|w\|_{(P^{1/2}K)^\circ}+ \nu.
  \end{equation*}
  Since $w \sim N(0, c(\eps, \delta)^2)^m$, $\frac{1}{c(\eps,
    \delta)} w \sim N(0,1)^m$. We write
  \begin{equation*}
    \E \sum_{i = 1}^m {p_i |y_i - \hat{y}_i|^2} = \E \|\bar{y} -
    P^{1/2}y\|_2^2 
    \leq  4n \E \|w\|_{(P^{1/2}K)^\circ} + \nu = 4c(\eps, \delta)n
    \ell^*(P^{1/2}K)  + \nu.
  \end{equation*}

 Also, observe that the function $\|g\|_{(P^{1/2}K)^\circ}$ is a Lipschitz function of $g$ with Lipschitz constant $\diam(P^{1/2}K) \leq 1$. Indeed 
\begin{align*}
\left| |g|_{(P^{1/2}K)^\circ} - |g'|_{(P^{1/2}K)^\circ} \right| &= \left| \max_{v \in P^{1/2}K} \langle v, g\rangle - \max_{v \in P^{1/2}K} \langle v, g' \rangle \right|\\
&\leq \max_{v \in P^{1/2}K} |\langle v, g - g'\rangle|\\
&\leq \diam(P^{1/2}K) \cdot \|g - g'\|_2. 
\end{align*} 
Thus by Gaussian isoperimetric inequality (see e.g.~\cite{DubhashiP09}), the tail bound follows.
	
  Finally, to prove the claim after ``moreover,'' notice that (with
  the convention used in the algorithm that $0^{-1/2} = 0$)
  $P^{-1/2}P^{1/2}$ is in fact the coordinate projection
  $\Pi_{\supp(p)}$, and $\hat{y} \in P^{-1/2}(P^{1/2}K)$. 
\end{proof}
}
}
\subsection{Frank-Wolfe}

In this subsection we recall the classical constrained convex
minimization algorithm of Frank and Wolfe~\cite{frank-wolfe}, which
allows us to reduce computing an approximate projection onto a convex
body to solving a small number of linear maximization problems. The
algorithm is presented as Algorithm~\ref{alg:fw}. 

\begin{algorithm}
  \caption{\fw}\label{alg:fw}
  \begin{algorithmic}
    \REQUIRE convex body $F \subseteq \R^m$; point $r
    \in \R^m$; number of iterations $T$  
    
    \STATE Let $q^{(0)} \in F$ be arbitrary.

    \FOR{$t = 1$ \TO $T$}
    \STATE Let 
    $v^{(t)} = \arg\max_{v \in F}{\langle r - q^{(t-1)}, v \rangle}.$
    \STATE Let 
      $\alpha^{(t)} = \arg \min_{\alpha \in [0,1]} \|r - \alpha
      q^{(t-1)} - (1-\alpha)v^{(t)}\|_2^2.$ 
    \STATE Set $q^{(t)} = \alpha^{(t)} q^{(t-1)} + (1-\alpha^{(t)})v^{(t)}$.
    \ENDFOR
    \STATE \textbf{Output} $q^{(T)}$.
  \end{algorithmic}
\end{algorithm}

We use the following bound on the convergence rate of the Frank-Wolfe
algorithm. It is a refinement of the original analysis of
Frank and Wolfe, due to Clarkson. 

\begin{theorem}[\cite{frank-wolfe,clarkson}]\label{thm:fw}
  Let $q^* = \arg \min_{q \in F}{\|r - q\|_2^2}$. Then $q^{(T)}$
  computed by $T$ iterations of \fw\ satisfies
  $$
  \|r - q^{(T)}\|_2^2 \leq \|r - q^*\|_2^2 + \frac{4\diam(F)^2}{T + 3}.
  $$
\end{theorem}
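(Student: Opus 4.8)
The plan is to track the optimality gap $e_t := \|r - q^{(t)}\|_2^2 - \|r - q^*\|_2^2 \ge 0$ and prove by induction on $t$ that $e_t \le 4\diam(F)^2/(t+3)$. Write $h(q) = \|r-q\|_2^2$; this is a convex quadratic with gradient $\nabla h(q) = 2(q-r)$ and constant Hessian $2I$, so its second-order Taylor expansion along any segment is exact. Moreover, the vertex $v^{(t)}$ chosen by \fw\ is precisely the output of a linear-minimization oracle applied to the gradient, since $v^{(t)} = \arg\max_{v \in F}\langle r - q^{(t-1)}, v\rangle = \arg\min_{v \in F}\langle \nabla h(q^{(t-1)}), v\rangle$.

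First I would establish a one-step recursion. Parametrizing the update segment by $\gamma = 1 - \alpha \in [0,1]$, the point $q^{(t-1)} + \gamma(v^{(t)} - q^{(t-1)})$ sweeps out the segment $[q^{(t-1)}, v^{(t)}]$, and the exact line search in \fw\ guarantees $h(q^{(t)}) \le h\bigl(q^{(t-1)} + \gamma(v^{(t)} - q^{(t-1)})\bigr)$ for every such $\gamma$. Expanding the quadratic exactly and bounding $\|v^{(t)} - q^{(t-1)}\|_2 \le \diam(F)$ gives
\[
h(q^{(t)}) \le h(q^{(t-1)}) + \gamma\,\langle \nabla h(q^{(t-1)}),\, v^{(t)} - q^{(t-1)}\rangle + \gamma^2\,\diam(F)^2 .
\]
For the linear term, the optimality of $v^{(t)}$ (it minimizes $\langle \nabla h(q^{(t-1)}), \cdot\rangle$ over $F$, and $q^* \in F$) followed by convexity of $h$ yields
\[
\langle \nabla h(q^{(t-1)}),\, v^{(t)} - q^{(t-1)}\rangle \le \langle \nabla h(q^{(t-1)}),\, q^* - q^{(t-1)}\rangle \le h(q^*) - h(q^{(t-1)}) = -e_{t-1} .
\]
Subtracting $h(q^*)$ from both sides gives, for all $\gamma \in [0,1]$, the recursion $e_t \le (1-\gamma)\,e_{t-1} + \gamma^2\,\diam(F)^2$.

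It then remains to close the induction using this recursion. For the base case $t=1$, taking $\gamma = 1$ gives $e_1 \le \diam(F)^2 = 4\diam(F)^2/(1+3)$. For $t \ge 2$, assuming $e_{t-1} \le 4\diam(F)^2/(t+2)$ and choosing $\gamma = 2/(t+2) \in (0,1]$, the recursion simplifies to $e_t \le 4\diam(F)^2\,(t+1)/(t+2)^2$, which is at most $4\diam(F)^2/(t+3)$ because $(t+1)(t+3) \le (t+2)^2$. This completes the induction and hence the theorem.

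The part that needs care rather than new ideas is matching the exact constant $4\diam(F)^2/(T+3)$: it forces the particular step-size schedule $\gamma_t = 2/(t+2)$ together with the ``full step'' $\gamma = 1$ at $t=1$, and relies on the elementary inequality $(t+1)(t+3) \le (t+2)^2$. Conceptually there is no obstacle — the exactness of the quadratic expansion and the crude diameter bound $\|v^{(t)} - q^{(t-1)}\|_2 \le \diam(F)$ are exactly what play the role of the ``curvature constant'' in Clarkson's refinement, so nothing beyond convexity of $F$ and boundedness of its diameter is used.
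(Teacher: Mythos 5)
Your proof is correct: the exact quadratic expansion (curvature constant $\diam(F)^2$), the linear-oracle/convexity bound giving $e_t \le (1-\gamma)e_{t-1} + \gamma^2\diam(F)^2$ for all $\gamma\in[0,1]$ thanks to the exact line search, and the induction with $\gamma=1$ at $t=1$ and $\gamma_t = 2/(t+2)$ afterwards reproduce exactly the $4\diam(F)^2/(T+3)$ bound. The paper does not prove this theorem but cites Frank--Wolfe and Clarkson, and your argument is precisely the standard Clarkson-style analysis specialized to the quadratic objective, so nothing further is needed.
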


While this convergence rate is relatively slow, this will not be an
issue for our application, since privacy forces us to work with noisy
inputs anyways. The expensive step in each iteration is computing
$v^{(t)}$, which requires solving a linear optimization problem over
$F$. Computing $\alpha^{(t)}$ is a quadratic optimization problem in a
single variable, and has a closed form solution.

\junk{
On the other hand, one of the main advantages of using
the Frank-Wolfe algorithm is that its outputs have a \emph{sparse
 representation}, as we make clear in the following lemma.

\begin{lemma}\label{lm:fw-sparsity}
 The output $q^{(T)}$ of \fw$_{T}$ lies in the convex hull of the
 points $q^{(0)}, v^{(1)}, \ldots, v^{(T)} \in F$. 
\end{lemma}
\begin{proof}
 Follows directly from the definition of \fw$_{T}$.
\end{proof}

Let us see what the bounds above imply for implementing
\afive$_\nu$. As we argued in the previous subsection, to optimize the
MSE of \afive$_\nu$, it suffices to set $\nu = O(c(\eps,
\delta)n\sqrt{\log N})$. If we use \fw$_T$ with inputs $F= nK$ and $r
= \tilde{y}$ to compute $\hat{y}= q^{(T)}$ in \afive$_\nu$, we get
\begin{equation*}
  \nu \leq \frac{4 \diam(nK)^2}{m(T+3)} \leq \frac{4n^2}{T}.
\end{equation*}
Therefore, in order to get MSE of $O(c(\eps, \delta) n \sqrt{\log N})$
for \afive$_\nu$, it suffices to compute $\hat{y}$ using \fw$_T$ with
$T$ set to 
$$
T \geq \frac{n}{c(\eps, \delta) \sqrt{\log N}}.
$$
\junk{Combining these observations, we have the following corollary.
\begin{theorem}
  
\end{theorem}}
}

\subsection{Projection onto a Relaxation}

Let us consider a query matrix $A$ which is given only implicitly,
e.g.~the $k$-way parities matrix. More generally, we have the
following definition. 

\begin{definition}
  A $m \times N$ query matrix $A$ for a set of linear queries
  $\queries$ over a universe $\univ$ is \emph{efficiently represented}
  if for each $q \in \queries$, and each $e \in \univ$, $A_{q, e}$ can
  be computed in time $\poly(m, \log N)$. 
\end{definition}

Given an efficiently represented $A$, can we approximate $Ax$ with
additive error close to $\sqrt{n}$ in time $\poly(n,m,\log N)$?  Using
the Frank-Wolfe algorithm, and the geometric methods of Nikolov,
Talwar and Zhang~\cite{geometry2013}, this problem can be reduced to
$\poly(n, \log N, \diam(K))$ calls to a procedure solving the
optimization problem $\arg \max_{v \in K}{\langle u, v \rangle}$,
where $K = AB_1$. While this may be a hard problem to solve,
fortunately, the analysis of the algorithm in~\cite{geometry2013} is
flexible and it is enough to be able to solve the problem for a
relaxation $L$ of $K$. Moreover, we need a relatively weak guarantee
on $L$: it should have mean width comparable with that of $K$, and
diameter that is polynomially bounded. Next we define this
modification of the algorithm and the notion of relaxation that is
useful to us.

\begin{algorithm}
  \caption{\afrelax$_{L}$}\label{alg:relax}
  \begin{algorithmic}
    \REQUIRE \emph{(Public)} efficiently represented query matrix $A =
    (a_i)_{i = 1}^N \in [-1,1]^{m \times N}$; a convex body $L
    \subseteq \R^m$; distribution $p = (p_i)_{i = 1}^m$ on $[m]$;
    number of iterations $T$.

    \REQUIRE \emph{(Private)} database $x \in \R^N$, $\|x\|_1 = n$
    
    \STATE Let $P = \diag(p)$.

    \STATE Let $c(\eps, \delta) = \frac{1+\sqrt{2\ln(1/\delta)}}{\eps}$.
    \STATE Sample $w \sim N(0, c(\eps, \delta)^2 m)^{m}$;
    \STATE  Let $\tilde{y}=P^{1/2}Ax + w$. 
    \STATE  Let $\bar{y}$ be the output of $T$ iterations of \fw\ with
    input the convex body $F = nP^{1/2}L$ and the point $r = \tilde{y}$.
    \STATE \textbf{Output} $\hat{y} = P^{-1/2}\bar{y}$.
  \end{algorithmic}
\end{algorithm}

\begin{definition}\label{defn:eff-relax}
  A convex body $L \subseteq \R^m$ is an \emph{efficient relaxation}
  of the convex body $K = AB_1 \subseteq \R^m$, where $A \in \R^{m
    \times N}$, if $K \subseteq L$, and for any $u \in \R^m$ the
  optimal solution of the maximization problem
    $\arg \max_{v \in L}{\langle u, v \rangle}$
  can be approximated to within $\beta$ \junk{in the $\ell_{\infty}$-norm }in time
  $\poly(\log \frac{1}{\beta}, m, \log \|u\|_\infty, \log N)$.
\end{definition}
Notice that if $L$ is an efficient relaxation of $K$, then $QL$ is an
efficient relaxation of $QK$ for any matrix $Q$ with polynomially
bounded entries.

\confoption{The following theorem follows from the results
  in~\cite{geometry2013}, the convergence bounds of the Frank-Wolfe
  algorithm used to compute $\bar{y}$, and
  Definition~\ref{defn:eff-relax}. The details appear in the full
  version of the paper.}{\junk{The proof of the following theorem follows in a straightforward
manner from Theorem~\ref{thm:lse}, Theorem~\ref{thm:fw}, and
Definition~\ref{defn:eff-relax}.} }
\begin{theorem}
\label{thm:afrelax}
  Let $p$ be a probability distribution on $[m]$ and let $P =
  \diag(p)$. Let $L$ be an efficient relaxation of $K = A
  B_1$\confoption{. }{, and  finally let
  $$
  T = \frac{4n\diam(P^{1/2}L)^2}{c(\eps, \delta) \ell^*(L)}.
  $$}
  Then algorithm \afrelax$_L$
 \confoption{\begin{packed_item}}{ \begin{enumerate}}
  \item satisfies $(\eps, \delta)$-differential
    privacy \label{item:privacy};
  \item can be implemented in time $\poly(m, n, \diam(P^{1/2}L), \log
    N)$; \label{item:time}
  \item outputs a point $\hat{y}$ in $n\cdot\Pi_{\supp(p)} L$, where
    $\Pi_{\supp(p)}$ 
    is a coordinate projection onto the support of $p$;\label{item:point}
  \item has MSE with respect to $p$ at most
    \begin{equation*}
      \E \sum_{i = 1}^m{p_i|y_i - \hat{y}_i|^2} = O\left(c(\eps,
        \delta) n \ell^*(P^{1/2}L)\right);
    \end{equation*} \label{item:mse}
  \confoption{}{\item there exists a constant $C$ s.t.~for any $t > 0$,
  \begin{equation*}
      \Pr [\sum_{i = 1}^m{p_i|y_i - \hat{y}_i|^2} > C\cdot c(\eps,
        \delta) n (\ell^*(P^{1/2}L) + t)] \leq \exp(t^2/4).
    \end{equation*} \label{item:tail}
  }
\confoption{\end{packed_item}}{\end{enumerate}}
\end{theorem}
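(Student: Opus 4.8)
The plan is to recognize that \afrelax$_L$ is exactly the \afive\ algorithm of Nikolov--Talwar--Zhang~\cite{geometry2013}, modified only in that the body $K=AB_1$ is replaced by a relaxation $L\supseteq K$ and the exact Euclidean projection onto $nP^{1/2}K$ is replaced by $T$ iterations of \fw\ on $F=nP^{1/2}L$. So the whole theorem reduces to rerunning the analysis of~\cite{geometry2013} and checking that (i) that analysis invokes the body $AB_1$ only through the containment $K\subseteq L$, a polynomial bound on $\diam(P^{1/2}L)$, the mean width $\ell^*(P^{1/2}L)$, and an (approximate) linear-maximization oracle over $L$; and (ii) $T$ iterations of \fw\ produce a good-enough approximate projection. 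For (ii): Theorem~\ref{thm:fw} with $F=nP^{1/2}L$ and $r=\tilde y$ gives $\bar y\in nP^{1/2}L$ with $\|\tilde y-\bar y\|_2^2\le\min_{v\in nP^{1/2}L}\|\tilde y-v\|_2^2+\nu$ for $\nu=4\diam(nP^{1/2}L)^2/(T+3)$; since the maximizer $v^{(t)}$ is only computed to within $\beta$ (Definition~\ref{defn:eff-relax}), the standard robustness analysis of Frank--Wolfe degrades $\nu$ by a further $O(\beta\cdot\poly(m,n,\diam(P^{1/2}L)))$, which is killed by taking $\beta$ inverse-polynomial. In short, \afrelax$_L$ is ``\afive$_\nu$ with $L$ in place of $K$''.

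For privacy, observe that $\tilde y=P^{1/2}Ax+w$ is the Gaussian mechanism applied to the matrix $P^{1/2}A$, whose columns have $\ell_2$-norm $(\sum_i p_i a_{ij}^2)^{1/2}\le 1$ (using $|a_{ij}|\le 1$ and $\sum_i p_i=1$); so the chosen noise is (more than) enough for $(\eps,\delta)$-privacy of $\tilde y$ by Lemma~\ref{lm:gauss}, and everything afterwards — running \fw\ on $F=nP^{1/2}L$ and $\tilde y$, then multiplying by $P^{-1/2}$ — is a (randomized) function of $\tilde y$ and the public data that never touches $x$ again, so the mechanism is $(\eps,\delta)$-private by post-processing (Lemma~\ref{lm:composition}). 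For the shape of the output, every \fw\ iterate is a convex combination of $q^{(0)}$ and the $v^{(s)}$, all lying in $F=nP^{1/2}L$, so $\bar y\in nP^{1/2}L$; since $P^{-1/2}P^{1/2}$ is the coordinate projection $\Pi_{\supp(p)}$ under the convention $0^{-1/2}=0$, the output $\hat y=P^{-1/2}\bar y$ lies in $n\,\Pi_{\supp(p)}L$. For the running time, each of the $T$ iterations performs one linear maximization $\arg\max_{v\in nP^{1/2}L}\langle\tilde y-q^{(t-1)},v\rangle$ — i.e.\ $n$ times a $\beta$-approximate maximization over $L$ of a linear functional whose $\ell_\infty$-norm is polynomially bounded, costing $\poly(m,\log N)$ by Definition~\ref{defn:eff-relax} — plus a closed-form $O(m)$ line search for $\alpha^{(t)}$; taking $T$ just large enough that $\nu$ is dominated by the target MSE (concretely $T=\Theta(n\,\diam(P^{1/2}L)^2/(c(\eps,\delta)\,\ell^*(P^{1/2}L)))$, polynomial once $\diam(P^{1/2}L)$ is) yields total running time $\poly(m,n,\diam(P^{1/2}L),\log N)$.

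For the MSE and the tail bound I would reproduce the computation of~\cite{geometry2013} verbatim with $L$ in place of $K$. Writing $\hat y_i=p_i^{-1/2}\bar y_i$ gives $\sum_i p_i(y_i-\hat y_i)^2=\|P^{1/2}y-\bar y\|_2^2$, and the ``magic'' algebraic identity of that analysis — which uses only that $\bar y$ is a $\nu$-approximate Euclidean projection onto a convex body — yields $\|P^{1/2}y-\bar y\|_2^2\le 2\langle\bar y-P^{1/2}y,w\rangle+\nu$. Applying H\"older's inequality (\ref{eq:holder}) for the norm pair $(\|\cdot\|_{P^{1/2}L},\|\cdot\|_{(P^{1/2}L)^\circ})$, together with the fact that both $\bar y$ and $P^{1/2}y$ lie in $nP^{1/2}L$ — the latter because $P^{1/2}Ax\in nP^{1/2}K\subseteq nP^{1/2}L$, which is the one place $K\subseteq L$ is used — bounds this by $4n\|w\|_{(P^{1/2}L)^\circ}+\nu$; taking expectations and normalizing $w$ to a standard Gaussian gives $\ex\sum_i p_i(y_i-\hat y_i)^2=O(c(\eps,\delta)\,n\,\ell^*(P^{1/2}L))+\nu$, with $\nu$ absorbed by the choice of $T$. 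For the high-probability tail, $g\mapsto\|g\|_{(P^{1/2}L)^\circ}=\max_{v\in P^{1/2}L}\langle v,g\rangle$ is Lipschitz in $g$ with constant $\diam(P^{1/2}L)$, so Gaussian concentration controls the upper deviation of $\|w\|_{(P^{1/2}L)^\circ}$ about its mean, and this propagates through the H\"older bound to the claimed sub-Gaussian tail.

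No step here is deep; the content is the bookkeeping that pins down exactly which properties of $L$ get used. The one point that deserves care is that it is $\diam(P^{1/2}L)$, not merely the mean width $\ell^*(P^{1/2}L)$, that governs the running time: driving the Frank--Wolfe slack $\nu$ below the target MSE forces $T\propto\diam(P^{1/2}L)^2/\ell^*(P^{1/2}L)$ iterations, so a relaxation with huge diameter would be useless even if its mean width were tiny. This is why Definition~\ref{defn:eff-relax} and the theorem statement both insist on a polynomially bounded diameter, and it foreshadows that when we build relaxations for $k$-way parities we will have to bound the diameter in addition to the mean width.
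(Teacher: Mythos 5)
Your proposal is correct and follows essentially the same route as the paper's own proof: Gaussian-mechanism privacy for $P^{1/2}A$ plus post-processing, the Frank--Wolfe iterates staying in $nP^{1/2}L$ with $P^{-1/2}P^{1/2}=\Pi_{\supp(p)}$, the same approximate-projection (``magic'') inequality followed by H\"older with $P^{1/2}y\in nP^{1/2}K\subseteq nP^{1/2}L$, and the Lipschitz/Gaussian-concentration tail bound. Your additional remarks on absorbing the oracle's $\beta$-approximation error and on $\diam(P^{1/2}L)$ governing the iteration count are consistent with (and slightly more explicit than) the paper's treatment.
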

\begin{proof}
  We first prove claim~\ref{item:privacy}. Since $A \in [-1, 1]^{m \times N}$, and
  $\sum{p_i} = 1$, for any column $a_j$ of $A$ we have
  $\|P^{1/2}a_j\|_2 \leq 1$. Then by Lemma~\ref{lm:gauss}, $\tilde{y}$ is
  $(\eps, \delta)$-differentially private. The output $\hat{y}$ is a
  function only of $\tilde{y}$ and not of the private data $x$, and is
  therefore   $(\eps, \delta)$-differentially private by Lemma~\ref{lm:composition}.

  It is easy to verify that $\tilde{y}$ can be computed in time
  $\poly(m, n, \log N)$ given an efficiently represented $A$. Then
  claim~\ref{item:time} follows since, for an efficient relaxation
  $L$, each step of \fw\ can be implemented in time $\poly(m, \log
  \diam(P^{1/2}L), \log N)$.

  To prove claim~\ref{item:point}, notice that (with the convention
  used in the algorithm that $0^{-1/2} = 0$) $P^{-1/2}P^{1/2}$ is in
  fact the coordinate projection $\Pi_{\supp(p)}$, and $\hat{y} \in
  P^{-1/2}(P^{1/2}K)$.
  
  The central claim is the MSE bound in claim~\ref{item:mse}. The
  proof of this bound follows essentially from~\cite{geometry2013},
  and appears to be a standard method in statistics of analyzing least
  squares estimation. The key observation we make in this work is that
  an efficient relaxation with well-bounded mean width is sufficient
  for the proof to go through. We give the full proof next for
  completeness.  
  
    Since $\hat{y}_i = p^{-1/2}_i \bar{y}_i$ (again using the
  convention $0^{-1/2} = 0$),
  $$
  \sum_{i = 1}^m {p_i |y_i - \hat{y}_i|^2} = \sum_{i=1}^m{|p_i^{1/2}
    y_i - \bar{y}_i|^2} = \|P^{1/2}y - \bar{y}\|_2^2.
  $$
  Therefore it is enough to bound $\mathbb{E}_w \|P^{1/2}y -
  \bar{y}\|_2^2$.

  The bound is based on H\"{o}lder's inequality and the following
  fact:
  \begin{align}
    \|\bar{y} - P^{1/2}y\|_2^2 &= \langle \bar{y} - P^{1/2}y, \bar{y} - P^{1/2}y \rangle\notag\\ 
    &= \langle \bar{y} - P^{1/2}y, \tilde{y} - P^{1/2}y\rangle + \langle
    \bar{y} -  P^{1/2}y, \bar{y} - \tilde{y} \rangle\notag\\
    &\leq 2\langle \bar{y} - P^{1/2}y, \tilde{y} - P^{1/2}y \rangle +
    \nu  \label{eq:magic-ineq},
  \end{align}
  where $\nu = c(\eps, \delta) n  \ell(P^{1/2}L)$. By
  Theorem~\ref{thm:fw}, $\nu$ is an upper bound on how well $\bar{y}$
  approximates the true projection of $\tilde{y}$ onto $nP^{1/2}L$,
  i.e.
  \begin{equation*}
    \|\tilde{y} - \bar{y}\|_2^2 < \min_{\bar{y} \in
      nP^{1/2}L}\|\tilde{y} - \bar{y}\|_2^2 + \nu. 
  \end{equation*}

  The inequality (\ref{eq:magic-ineq}) follows from 
  \begin{align*}
    \langle \bar{y} - P^{1/2}y, \tilde{y} - P^{1/2}y \rangle &= \langle
    \tilde{y}- P^{1/2}y, \tilde{y} - P^{1/2}y\rangle + \langle \tilde{y} - P^{1/2}y,
    \bar{y} - \tilde{y} \rangle\\ 
    &= \|\tilde{y} -  P^{1/2}y\|_2^2 + \langle \tilde{y} - P^{1/2}y,\bar{y} - \tilde{y}  \rangle\\
    &\geq \|\tilde{y} - \bar{y}\|_2^2 - \nu  + \langle \tilde{y} -
     P^{1/2}y,\bar{y} - \tilde{y} \rangle\\ 
    &= \langle \bar{y} - \tilde{y}, \bar{y} - \tilde{y} \rangle - \nu
     + \langle \tilde{y} - P^{1/2}y, \bar{y} -  \tilde{y} \rangle\\
    &= \langle \bar{y} - P^{1/2}y,\bar{y} - \tilde{y}\rangle - \nu .
  \end{align*}
  Inequality (\ref{eq:magic-ineq}), $w = \tilde{y} - P^{1/2}y$, and
  H\"{o}lder's inequality imply 
  \begin{equation*}\label{eq:holder-bound}
    \|\bar{y} - P^{1/2}y\|_2^2 \leq 2\langle \bar{y}- P^{1/2}y,  w \rangle + \nu 
    \leq 2\|\bar{y} - P^{1/2}y\|_{P^{1/2}L}\|w\|_{(P^{1/2}L)^\circ} +\nu \leq
    4n \|w\|_{(P^{1/2}L)^\circ}+ \nu.
  \end{equation*}
  Since $w \sim N(0, c(\eps, \delta)^2)^m$, $\frac{1}{c(\eps,
    \delta)} w \sim N(0,1)^m$. We write
  \begin{equation*}
    \E \sum_{i = 1}^m {p_i |y_i - \hat{y}_i|^2} = \E \|\bar{y} -
    P^{1/2}y\|_2^2 
    \leq  4n \E \|w\|_{(P^{1/2}L)^\circ} + \nu = O\left(c(\eps, \delta)n
    \ell^*(P^{1/2}L)\right).
  \end{equation*}

 Also, observe that the function $\|g\|_{(P^{1/2}L)^\circ}$ is a Lipschitz function of $g$ with Lipschitz constant $\diam(P^{1/2}L) \leq 1$. Indeed 
\begin{align*}
\left| |g|_{(P^{1/2}L)^\circ} - |g'|_{(P^{1/2}L)^\circ} \right| &= \left| \max_{v \in P^{1/2}L} \langle v, g\rangle - \max_{v \in P^{1/2}L} \langle v, g' \rangle \right|\\
&\leq \max_{v \in P^{1/2}L} |\langle v, g - g'\rangle|\\
&\leq \diam(P^{1/2}L) \cdot \|g - g'\|_2. 
\end{align*} 
Thus by Gaussian isoperimetric inequality (see
e.g.~\cite{DubhashiP09}), the tail bound in claim~\ref{item:tail}
follows. 
\end{proof}

In the subsequent section we instantiate this theorem with an
efficient relaxation $L$ of $K = AB_1$, where $A$ is the $k$-wise
parities queries matrix. 
\section{Efficient Relaxation for Marginals via Grothendieck's
  Inequality}

{For convenience we expand the query matrix $A$ for the
  $k$-wise parity queries by adding all $k'$-wise parities for $k'
  \leq k$ and replicating each $k'$-wise parity query row $C_{k'}$
  times, where $C_{k'}$ is a constant depending on $k'$. Formally, we
  substitute $A$ with the matrix $(a_e)_{e \in \{\pm 1\}^d}$ where
  $a_e$ is the column vector $a_e = e^{\otimes k}$. Each row in the new matrix is associated with
  a tuple $s \in [d]^k$, and $A_{s,e} =
  \prod_{i=1}^k{e_{s_i}}$. Clearly all $k$-wise marginals can be
  recovered from the new query matrix, and this transformation only
  affects running time by a constant factor depending on $k$.}

Let us consider the convex body
\begin{equation*}
  L_0 = \conv\{w\otimes z: w,z \in \{\pm 1\}^{d^{k/2}}\},
\end{equation*}
if $k$ is even, or
\begin{equation*}
  L_0 = \conv\{w \otimes z: w \in \{\pm 1\}^{d^{(k-1)/2}}, z \in \{\pm  1\}^{d^{(k+1)/2}}\}, 
\end{equation*}
if $k$ is odd. \junk{For the remainder of this discussion we will assume
that $k$ is even for convenience, but we will state all results in
generality.} It is immediate that $K \subseteq L_0$: 
notice that 
$$
K = \conv\{\pm a_e\} = \conv\{\pm e^{\otimes k}: e \in \{\pm 1\}^d\} = \conv\{\pm
  e^{\otimes k/2} \otimes e^{\otimes k/2}: e \in \{\pm 1\}^d\},
$$
for $k$ even, and similarly for $k$ odd. Since $\pm b^{\otimes k/2} \in
\{\pm 1\}^{d^{k/2}}$ for any $b \in \{\pm 1\}^d$, it follows that $K
\subseteq L_0$. \confoption{The following Lemma is follows from
  standard concentration bounds on Gaussian random variables. The
  proof is given in the full version of the paper.}

\begin{lemma}\label{lm:L0-mean-width}
  For all $k$, $K \subseteq L_0$, and moreover
    $\ell^*(P^{1/2}L_0) \leq d^{\lceil k/2 \rceil/2}$ and
    $\diam(P^{1/2}L_0) \leq 1$,
  for any distribution $p$ on $[d^k]$ and $P = \diag(p)$.
\end{lemma}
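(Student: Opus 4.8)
The inclusion $K\subseteq L_0$ has already been verified in the discussion immediately preceding the statement, so the plan is to bound the circumradius $\diam(P^{1/2}L_0)$ and the mean width $\ell^*(P^{1/2}L_0)$. The whole argument rests on one structural observation about $L_0=\conv(\mathcal V)$, where $\mathcal V=\{w\otimes z\}$ is the generating set appearing in the definition of $L_0$: (i) every vertex $v\in\mathcal V$ is a $\pm1$ vector in $\R^{d^k}$, since a tensor product $w\otimes z$ of $\pm1$ vectors has all entries $\pm1$; and (ii) there are few vertices, namely $|\mathcal V|\le 2^{2d^{k/2}}$ for $k$ even and $|\mathcal V|\le 2^{\,d^{(k-1)/2}+d^{(k+1)/2}}$ for $k$ odd, so that in all cases $\ln|\mathcal V|\le 2d^{\lceil k/2\rceil}\ln 2$.

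For the circumradius I would note that for any vertex $v\in\mathcal V$ we have $\|P^{1/2}v\|_2^2=\sum_s p_s v_s^2=\sum_s p_s=1$, because $v_s^2=1$ and $p$ is a probability distribution. Since $\|\cdot\|_2$ is convex it is maximized over $P^{1/2}L_0=\conv\{P^{1/2}v:v\in\mathcal V\}$ at a vertex, so $\diam(P^{1/2}L_0)=\max_{v\in L_0}\|P^{1/2}v\|_2\le 1$.

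For the mean width, using that a linear functional over a polytope is maximized at a vertex, I would write
\[
\ell^*(P^{1/2}L_0)=\ex_g\max_{v\in P^{1/2}L_0}\langle g,v\rangle=\ex_g\max_{v\in\mathcal V}\langle g,P^{1/2}v\rangle ,
\]
where $g\sim N(0,I_{d^k})$. For each fixed $v\in\mathcal V$ the scalar $\langle g,P^{1/2}v\rangle$ is a centered Gaussian of variance $\|P^{1/2}v\|_2^2=1$, so the right-hand side is the expected maximum of $|\mathcal V|$ standard Gaussians (with arbitrary joint law). Applying the elementary bound $\ex[\max_{i\le N}Z_i]\le\sqrt{2\ln N}$, which follows from $\ex[\max_i Z_i]\le\inf_{\lambda>0}\tfrac1\lambda\ln\sum_i\ex[e^{\lambda Z_i}]=\inf_{\lambda>0}\bigl(\tfrac{\ln N}{\lambda}+\tfrac{\lambda}{2}\bigr)$, with $N=|\mathcal V|$ gives $\ell^*(P^{1/2}L_0)\le\sqrt{2\ln|\mathcal V|}\le 2\sqrt{\ln 2}\,d^{\lceil k/2\rceil/2}=O(d^{\lceil k/2\rceil/2})$, which is the claimed bound (the absolute constant is immaterial, since $\ell^*$ only ever enters the subsequent estimates inside an $O(\cdot)$).

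The only place where care is needed is the choice of route for the mean-width estimate. It is tempting to reshape $(\sqrt{p_s}\,g_s)_s$ into a matrix and bound $\max_{v\in\mathcal V}\langle g,P^{1/2}v\rangle$ by an operator norm (for instance via $\|\cdot\|_{\infty\to1}\le n\|\cdot\|_{2\to 2}$), but this loses polynomial-in-$d$ factors and is hopelessly weak. The correct and essentially lossless observation is exactly point (ii) above: the low-rank tensor structure of $L_0$ leaves it with only $2^{\Theta(d^{\lceil k/2\rceil})}$ vertices --- as opposed to the $2^{\Theta(d^{k})}$ vertices a generic relaxation of $K$ would carry, which would force mean width $\approx d^{k/2}$ --- and each such vertex has Euclidean norm exactly $1$ after the $P^{1/2}$ rescaling, so the trivial union bound over vertices is already tight up to constants. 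This is really the whole content of the lemma, and it is why relaxing $K$ to $L_0$ is the right first move.
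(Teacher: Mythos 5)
Your proof is correct and follows essentially the same route as the paper: a linear functional over $P^{1/2}L_0$ is maximized at a vertex $P^{1/2}(w\otimes z)$, each such vertex has unit Euclidean norm since $p$ is a probability distribution, so the mean width is the expected maximum of $2^{O(d^{\lceil k/2\rceil})}$ standard Gaussians, which is $O(d^{\lceil k/2\rceil/2})$. The only differences are cosmetic: you spell out the $\sqrt{2\ln N}$ bound and the diameter/circumradius argument (which the paper leaves implicit) and handle the odd-$k$ vertex count explicitly, and, like the paper, you obtain the bound only up to an absolute constant, which is indeed all that is used downstream.
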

\confoption{}{\begin{proof}
  We will prove the theorem for $k$ even; the proof for $k$ odd is
  analogous. 

  By the definition of $\ell^*$ and since a linear function is always
  maximized at an extreme point of a convex set,
  \begin{align*}
    \ell^*(P^{1/2}L_0) = \mathbb{E}_g\  \max_{v \in P^{1/2}L_0}{\langle g, v\rangle}
    &= \mathbb{E}_g\  \max_{w, z \in \{\pm 1\}^{d^{k/2}}}{\langle g,P^{1/2}(w\otimes z)\rangle} 
    \end{align*}
  where expectations are taken over $g \sim N(0, 1)^m$.  Let us fix
  some $w$ and $z$. The vector $w \otimes z$ is a vector in $\{\pm
  1\}^{d^k}$, and therefore, $\|P^{1/2}(w \otimes z)\|_2^2 =
  \sum{p_i} = 1$. By stability of Gaussians, $\langle g,
  P^{1/2}(w\otimes z)\rangle \sim N(0, 1)$. Then, 
  \begin{equation*}
    \max_{w, z \in \{\pm 1\}^{d^{k/2}}}{\langle g,P^{1/2}(w\otimes z)\rangle}
  \end{equation*}
  is the maximum of $2^{d^{k/2}} \times 2^{d^{k/2}} = 2^{2d^{k/2}}$
  standard Gaussian random variables. By standard arguments, it is
  known that the expectation of this maximum is at most $O(\sqrt{\log
    2^{2d^{k/2}}}) = O(d^{k/4})$. 
\end{proof}
}
The relaxation $L_0$ is not efficient, as maximizing a linear function
over $L_0$ is $\mathsf{NP}$-hard.\footnote{For example, there is an
  easy reduction from the maximum cut problem even for $k=2$. We omit
  the details.} However, we can view the problem of maximizing a
linear function over $L_0$ as the problem of computing the
$\|\cdot\|_{\infty\mapsto 1}$ norm of an associated matrix and this
norm is well approximated by the optimum of a convex
relaxation~\cite{grothendieck,AlonN04}. This connection, that we
explain next, allows us to relax $L_0$ further to an efficient
relaxation. 

Define the relaxation
\begin{align*}
  L = \{h \in \R^{d^k}: \exists \text{ sequences of unit vectors }
  (u_s)_{s \in [d]^{k/2}}, &(v_t)_{t\in  [d]^{k/2}}\\
  &\text{s.t.~}\forall s, t:
  h_{s\cdot t} = \langle u_s, v_t\rangle\},
\end{align*}
for $k$ even, and
\begin{align*}
  L = \{h \in \R^{d^k}: \exists \text{ sequences of unit vectors }
  (u_s)_{s \in [d]^{(k-1)/2}}, &(v_t)_{t\in  [d]^{(k+1)/2}}\\
  &\text{s.t.~}\forall s, t:
  h_{s\cdot t} = \langle u_s, v_t\rangle\},
\end{align*}
for $k$ odd. Above for two tuples $s = (i_1, \ldots, i_{\lfloor k/2\rfloor})$ and $t
= (j_1, \ldots, j_{\lceil k/2\rceil})$, $s\cdot t$ is their concatenation $(i_1,
\ldots, i_{\lfloor k/2\rfloor}, j_1, \ldots, j_{\lceil k/2\rceil})$. 

\begin{lemma}~\label{lm:L-eff-rel}
  $L$ is an efficient relaxation of $L_0$, and therefore of $K$. 
\end{lemma}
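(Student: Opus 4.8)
The plan is to verify the two requirements in Definition~\ref{defn:eff-relax}: that $K\subseteq L$, and that a linear function can be approximately maximized over $L$ in time $\poly(\log(1/\beta),m,\log\|u\|_\infty,\log N)$. Since the inclusion $K\subseteq L_0$ has already been established, for the first requirement it suffices to prove $L_0\subseteq L$.

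For $L_0\subseteq L$, I would first observe that $L$ is convex. Suppose $h^{(1)},\dots,h^{(r)}\in L$, witnessed respectively by unit vectors $(u^{(j)}_s)_s,(v^{(j)}_t)_t$, and let $\lambda_1,\dots,\lambda_r\ge 0$ sum to $1$. Working in the direct sum of the ambient spaces, set $\tilde u_s=\bigoplus_j\sqrt{\lambda_j}\,u^{(j)}_s$ and $\tilde v_t=\bigoplus_j\sqrt{\lambda_j}\,v^{(j)}_t$. Then $\|\tilde u_s\|^2=\sum_j\lambda_j\|u^{(j)}_s\|^2=1$, likewise $\|\tilde v_t\|=1$, and $\langle\tilde u_s,\tilde v_t\rangle=\sum_j\lambda_j\langle u^{(j)}_s,v^{(j)}_t\rangle=\bigl(\sum_j\lambda_j h^{(j)}\bigr)_{s\cdot t}$, so $\sum_j\lambda_j h^{(j)}\in L$. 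Next, every extreme point $w\otimes z$ of $L_0$ lies in $L$: fixing any unit vector $e$ and putting $u_s=w_s\,e$, $v_t=z_t\,e$ gives unit vectors with $\langle u_s,v_t\rangle=w_s z_t=(w\otimes z)_{s\cdot t}$. A convex set containing all extreme points of $L_0$ contains $L_0$, so $L_0\subseteq L$ (and hence $K\subseteq L$). The same argument applies verbatim in the $k$ odd case.

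For the optimization requirement, reshape the input $u\in\R^{d^k}$ into a matrix $G$ with $G_{s,t}=u_{s\cdot t}$, indexed by $s\in[d]^{\lfloor k/2\rfloor}$ and $t\in[d]^{\lceil k/2\rceil}$. Then
\[
  \max_{h\in L}\langle u,h\rangle=\max\Bigl\{\,\textstyle\sum_{s,t}G_{s,t}\langle u_s,v_t\rangle\ :\ \|u_s\|=\|v_t\|=1\,\Bigr\},
\]
which is exactly the semidefinite relaxation of the $\|\cdot\|_{\infty\to 1}$ norm of $G$ used by Alon and Naor~\cite{AlonN04} in their algorithmic treatment of Grothendieck's inequality~\cite{grothendieck}: the variables are the entries of the Gram matrix $X\succeq 0$ of the vectors $(u_s)\cup(v_t)$, the constraints are $X_{ii}=1$, and the objective is a linear function of the off-diagonal block of $X$. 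The number of vectors is $d^{\lfloor k/2\rfloor}+d^{\lceil k/2\rceil}\le 2d^{k}=2m$, so $X$ has dimension $\poly(m)$ and the SDP can be solved to additive accuracy $\beta$ in time $\poly(m,\log(1/\beta),\log\|u\|_\infty)$ by standard convex programming methods (e.g.~the ellipsoid method). Reading off the off-diagonal block of a near-optimal $X$ yields a point $h\in L$ with $\langle u,h\rangle$ within $\beta$ of the optimum. (As a byproduct, every realizable $h$ comes from vectors of dimension at most $2m$, so $L$ is compact and is genuinely a convex body.)

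I expect the only mildly delicate step to be turning the output of an SDP solver, which is only approximately feasible, into an honest point of $L$: one shrinks $X$ slightly toward the identity to restore $X\succeq 0$ and $X_{ii}\le 1$, extracts vectors by Cholesky, and rescales each one onto the unit sphere, which cannot decrease the objective since it is linear in each $u_s$ and in each $v_t$ separately. This costs only $O(\beta)$ and is routine; the conceptual content of the lemma is the convexity-via-direct-sums observation together with the identification of linear optimization over $L$ with a polynomial-size SDP.
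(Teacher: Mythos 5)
Your proposal is correct and follows essentially the same route as the paper: show $L_0\subseteq L$ by exhibiting the extreme points $w\otimes z$ via rank-one (scalar) witnesses, and observe that linear optimization over $L$ is a polynomial-size Grothendieck-type SDP solvable to accuracy $\beta$ in polynomial time; your explicit direct-sum convexity argument and the treatment of approximate SDP feasibility are details the paper leaves implicit. One small correction to your final remark: rescaling a vector of norm less than one onto the unit sphere \emph{can} decrease the objective when that vector's linear contribution is negative, so the standard fix is instead to append an extra coordinate of magnitude $\sqrt{1-\|u_s\|^2}$ in a direction orthogonal to all other vectors, which restores unit norm while leaving every inner product (and hence the objective) unchanged.
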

\begin{proof}
  Recall that we associate each coordinate direction in $\R^{d^k}$
  with a tuple $(i_1, \ldots, i_k)$. Assume for the remainder of this
  proof that $k$ is even, the odd case is analogous. Given a point $h
  \in \R^{d^k}$, define the matrix $H \in \R^{d^{k/2} \times d^{k/2}}$
  by $H_{s,t} = h_{s \cdot t}$. Then $L$ is in a one-to-one
  correspondence with the convex set of matrices $H \in \R^{d^{k/2}
    \times d^{k/2}}$ that can be extended to a \emph{positive
    semidefinite} $H' \in \R^{2d^{k/2} \times 2d^{k/2}}$. This
  shows that for any $g \in \R^{d^{k}}$ the maximization problem
  \begin{equation*}
    \max_{h \in L}{\langle g, h\rangle} = \max_{h \in L}{\tr(G^TH)} = \max_{(u_s), (v_t)}{\sum_{s,t}G_{s,t}\langle u_s, v_t \rangle}
  \end{equation*}
  is a semidefinite program, and therefore can be solved to within
  arbitrary accuracy in polynomial time~\cite{separation}. Above $s$
  and $t$ range over $[d]^{k/2}$, and
  $(u_s)$, $(v_t)$ are sequences of unit vectors in Hilbert space.

  To show that $L_0 \subseteq L$, it is enough to argue that all
  extreme points of $L_0$ are in $L$. Take any $w \otimes z \in L_0$,
  i.e. $w, z \in \{\pm 1\}^{d^{k/2}}$ (for $k$ even, and analogously
  for $k$ odd). Define the unit vectors $(u_s)$ and $(v_t)$ to be just
  the one-dimensional vectors $(w_s), (z_t)$; since $(w \otimes
  z)_{s\cdot t} = w_s z_t$, we have shown the inclusion $w \otimes z
  \in L$.
\end{proof}

Lemma~\ref{lm:L-eff-rel} implies that $L$ can be used in \afrelax$_L$. In
order to give error guarantees for \afrelax$_L$, it would be
enough to show that $\ell^*(P^{1/2}L)$ is not much larger than
$\ell^*(P^{1/2}L_0)$ for any distribution $p$. A much stronger property
--- $L_0 \subseteq L \subseteq C L_0$ for a constant $C$ --- is
implied by Grothendieck's inequality, a classical result in functional
analysis. The following formulation of the inequality is due to
Lindenstrauss and Pelczynski~\cite{absolutelysumming}.

\begin{theorem}[\cite{grothendieck}]\label{thm:grothendieck}
  There exists a constant $C$ such that for any $\ell \times \ell$
  real matrix $M$,
  \begin{equation*}
    \max_{w, z \in \{\pm 1\}^\ell}{w^TMz} \leq C \max_{(u_i)_{i =
        1}^\ell, (v_j)_{j = 1}^\ell}{\sum_{i, j}{M_{ij}\langle u_i,
        v_j \rangle}},
  \end{equation*}
  where the maximum on the right hand side ranges over sequences of
  unit vectors $(u_i)_{i = 1}^\ell, (v_j)_{j = 1}^\ell$ in Hilbert
  space. 
\end{theorem}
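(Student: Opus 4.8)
The plan is to prove the stated inequality directly, with the absolute constant $C = 1$; in the direction as written it follows from a one-line embedding and needs none of the classical rounding machinery. Introduce the notation
\begin{equation*}
  b(M) \;=\; \max_{w, z \in \{\pm 1\}^\ell} w^T M z \;=\; \max_{w, z \in \{\pm 1\}^\ell} \sum_{i,j} M_{ij} w_i z_j
\end{equation*}
for the quantity on the left-hand side, and
\begin{equation*}
  \gamma(M) \;=\; \max_{(u_i)_{i=1}^\ell,\, (v_j)_{j=1}^\ell} \; \sum_{i,j} M_{ij} \langle u_i, v_j \rangle
\end{equation*}
for the quantity on the right-hand side, the latter maximum being taken over all $\ell$-tuples of unit vectors in a Hilbert space (equivalently in $\R^{2\ell}$, since only $2\ell$ vectors occur). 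Both maxima are attained: $\{\pm 1\}^\ell$ is finite, and $\gamma(M)$ optimizes a continuous function over a compact product of unit spheres.

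The one step I would carry out is to observe that every $\pm 1$ assignment feasible for $b(M)$ lifts to a Hilbert-space assignment feasible for $\gamma(M)$ with the same objective value. Fix an arbitrary unit vector $e$ in the ambient Hilbert space. Given $w, z \in \{\pm 1\}^\ell$, set $u_i = w_i e$ and $v_j = z_j e$. Then $\|u_i\| = |w_i| = 1$ and $\|v_j\| = |z_j| = 1$, so the tuples $(u_i),(v_j)$ are admissible in the maximization defining $\gamma(M)$, while $\langle u_i, v_j \rangle = w_i z_j \langle e, e\rangle = w_i z_j$. Hence $\sum_{i,j} M_{ij} \langle u_i, v_j\rangle = \sum_{i,j} M_{ij} w_i z_j = w^T M z$, and choosing $w, z$ to be maximizers for $b(M)$ gives $\gamma(M) \ge b(M)$, i.e.
\begin{equation*}
  \max_{w, z \in \{\pm 1\}^\ell} w^T M z \;\le\; C \max_{(u_i),(v_j)} \sum_{i,j} M_{ij}\langle u_i, v_j\rangle
\end{equation*}
with $C = 1$, as asserted.

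I do not expect a genuine obstacle here: the only point to check is that scaling a fixed unit vector by a sign keeps it a unit vector while exactly reproducing the rank-one bilinear form $w_i z_j$, which is immediate from $\langle e, e\rangle = 1$. For completeness I note that the substantive content of Grothendieck's theorem is the reverse estimate $\gamma(M) \le C\, b(M)$ for an absolute constant $C$, and it is that reverse bound which, through convex duality, yields the inclusion $L \subseteq C L_0$ exploited later; but establishing the reverse inequality is a separate undertaking and is not what the displayed statement claims.
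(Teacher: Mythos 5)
Your embedding argument is correct for the inequality exactly as displayed: setting $u_i = w_i e$, $v_j = z_j e$ for a fixed unit vector $e$ shows the sign-valued maximum is a restriction of the vector-valued maximum, so the displayed bound holds with $C=1$. But the fact that it holds trivially with $C=1$ is itself the tell: the theorem as printed has the inequality backwards. Grothendieck's inequality (in the Lindenstrauss--Pelczynski formulation the paper cites) is the reverse estimate
\begin{equation*}
  \max_{(u_i),(v_j)} \sum_{i,j} M_{ij}\langle u_i, v_j\rangle \;\leq\; K_G \max_{w,z\in\{\pm1\}^\ell} w^T M z ,
\end{equation*}
and it is that direction the paper actually uses. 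In the proof of Lemma~\ref{lm:LvsL0}, equations (\ref{eq:L0-dual-norm}) and (\ref{eq:L-dual-norm}) identify $\|g\|_{(QL_0)^\circ}$ with the sign-valued maximum and $\|g\|_{(QL)^\circ}$ with the vector-valued maximum, and the claim $\|g\|_{(QL)^\circ}\le C\|g\|_{(QL_0)^\circ}$ (equivalently $L\subseteq CL_0$) is precisely the nontrivial direction. The direction you proved corresponds dually to $L_0\subseteq L$, which is already established by the elementary argument in Lemma~\ref{lm:L-eff-rel} and buys nothing new.

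So: as a proof of the literal statement your argument is valid, but it does not supply the mathematical content this theorem is meant to import, and no elementary embedding can --- the reverse bound is a genuine theorem with $K_G < 1.783$, which the paper (reasonably) cites rather than proves. You flagged this yourself in your closing remark, which is exactly the right diagnosis; the fix is to read the theorem with the two sides of the inequality exchanged, not to prove the version as printed.
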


The following lemma\confoption{, whose proof is deferred to the full version of
the paper, }{} is an immediate consequence of Theorem~\ref{thm:grothendieck}. 

\begin{lemma}\label{lm:LvsL0}
  There exists a constant $C$ such that for every matrix $Q \in
  \R^{d^{k} \times d^k}$,
    $\ell^*(QL) \leq C \ell^*(QL_0)$.
    Moreover, if $Q = P^{1/2}$ where $P = \diag(p)$  and $p$ is a
    probability distribution on $[d^k]$, $\diam(QL) = \diam(P^{1/2}L)
    \leq 1$. 
\end{lemma}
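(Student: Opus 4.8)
The plan is to reduce the whole lemma to a single pointwise comparison: for every $u \in \R^{d^k}$,
\[
\max_{h \in L}\langle u, h\rangle \;\le\; C\,\max_{h\in L_0}\langle u, h\rangle,
\]
where $C$ is the Grothendieck constant of Theorem~\ref{thm:grothendieck}. Granting this, the mean-width bound is immediate: since $\ell^*(QL) = \E_g \max_{h \in L}\langle Q^\top g, h\rangle$ and likewise $\ell^*(QL_0) = \E_g \max_{h \in L_0}\langle Q^\top g, h\rangle$, I apply the displayed inequality with $u = Q^\top g$ for each fixed Gaussian $g$ and take expectations, obtaining $\ell^*(QL) \le C\,\ell^*(QL_0)$.

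To prove the pointwise inequality I reuse the matrix correspondence from the proof of Lemma~\ref{lm:L-eff-rel}. Assume $k$ is even (the odd case is identical with $d^{k/2}\times d^{k/2}$ replaced by $d^{(k-1)/2}\times d^{(k+1)/2}$). Given $u\in\R^{d^k}$, form $U\in\R^{d^{k/2}\times d^{k/2}}$ with $U_{s,t} = u_{s\cdot t}$. The extreme points of $L_0$ are the vectors $\sigma\otimes\tau$ with $\sigma,\tau\in\{\pm1\}^{d^{k/2}}$, and $\langle u, \sigma\otimes\tau\rangle = \sigma^\top U\tau$, so maximizing a linear functional over the convex hull gives $\max_{h\in L_0}\langle u,h\rangle = \max_{\sigma,\tau}\sigma^\top U\tau$. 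On the other hand $h\in L$ precisely when $h_{s\cdot t} = \langle a_s, b_t\rangle$ for sequences of unit vectors $(a_s),(b_t)$, hence $\langle u, h\rangle = \sum_{s,t}U_{s,t}\langle a_s,b_t\rangle$ and $\max_{h\in L}\langle u,h\rangle = \max_{(a_s),(b_t)}\sum_{s,t}U_{s,t}\langle a_s,b_t\rangle$. Theorem~\ref{thm:grothendieck} with $M = U$ states exactly $\max_{\sigma,\tau}\sigma^\top U\tau \le \frac{1}{C}\max_{(a_s),(b_t)}\sum_{s,t}U_{s,t}\langle a_s,b_t\rangle$ is reversed by the constant, i.e.\ the unit-vector maximum is at most $C$ times the sign maximum, which is the claimed inequality. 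For $k$ odd, $U$ is rectangular; I pad it with zero rows to a square matrix, which alters neither side (the padded coordinates of $\sigma$, resp.\ the padded unit vectors $a_s$, are multiplied by zeros), and then apply the theorem.

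For the diameter claim I observe that every $h\in L$ satisfies $|h_{s\cdot t}| = |\langle a_s, b_t\rangle| \le \|a_s\|_2\|b_t\|_2 = 1$ by Cauchy–Schwarz, so $L\subseteq[-1,1]^{d^k}$. Hence when $Q = P^{1/2}$ with $P = \diag(p)$ for a probability distribution $p$ on $[d^k]$, every $h\in L$ has $\|P^{1/2}h\|_2^2 = \sum_{i}p_i h_i^2 \le \sum_i p_i = 1$, so $\diam(P^{1/2}L) = \max_{h\in L}\|P^{1/2}h\|_2 \le 1$.

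The only real content is recognizing that the definitions of $L_0$ and $L$ line up verbatim with the two sides of Grothendieck's inequality; once that is seen there is no estimation to do. The steps most likely to need care are the bookkeeping for odd $k$ (the zero-padding of $U$ so the hypothesis of Theorem~\ref{thm:grothendieck} literally applies) and the passage from the pointwise inequality to the expectation, which is legitimate because the inequality holds for every realization of $g$.
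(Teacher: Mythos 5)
Your proposal is correct and follows essentially the same route as the paper's proof: the pointwise support-function comparison $\max_{h\in L}\langle u,h\rangle\le C\max_{h\in L_0}\langle u,h\rangle$ via the matrix correspondence $U_{s,t}=u_{s\cdot t}$ is exactly the paper's identity $\|g\|_{(QL)^\circ}\le C\|g\|_{(QL_0)^\circ}$ with $u=Q^{\top}g$, followed by taking expectations over the Gaussian, and the diameter bound via Cauchy--Schwarz plus $\sum_i p_i=1$ is identical. Your only additions are cosmetic (explicit zero-padding for odd $k$, and making explicit that the inequality you need from Theorem~\ref{thm:grothendieck} is the nontrivial direction, unit-vector maximum at most $C$ times the sign maximum, which is also the direction the paper implicitly uses).
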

\annote{talk about diameter in these results: it matters for tail
  bounds and running time}
\confoption{}{\begin{proof}
  Assume again that $k$ is even, and the proof will be analogous when
  $k$ is odd. It is enough to show that for any $g \in \R^{d^k}$,
  $\|g\|_{(QL)^\circ} \leq C \|g\|_{(QL_0)^\circ}$. (In fact by
  duality this establishes the stronger result $L \subseteq C L_0$.)
  We have
  \begin{equation*}
    \|g\|_{(QL_0)^\circ} = \max_{w, z \in \{\pm 1\}^{d^{k/2}}}
    \langle g, Q(w \otimes z)\rangle  
    = \max_{w, z \in \{\pm 1\}^{d^{k/2}}}  \langle Q^Tg, w \otimes z\rangle.
  \end{equation*}
  Define $g' = Q^Tg$, and, as in the proof of Lemma~\ref{lm:L-eff-rel},
  define the $d^{k/2} \times d^{k/2}$ matrix $G'$ by $G'_{s,t} =
  g'_{s\cdot t}$, where $s$ and $t$ range over $[d]^{k/2}$. Then we
  have 
  \begin{equation}\label{eq:L0-dual-norm}
    \|g\|_{(QL_0)^\circ} = \max_{w, z \in \{\pm 1\}^{d^{k/2}}}
    \tr((G')^Twz^T) = \max_{w, z \in \{\pm 1\}^{d^{k/2}}}
    \tr(zw^TG')= \max_{w, z \in \{\pm  1\}^{d^{k/2}}}{w^TG'z}.  
  \end{equation}

  By an analogous argument, we derive the identity
  \begin{equation}\label{eq:L-dual-norm}
    \|g\|_{(QL)^\circ} =  \max_{(u_s), (v_t)}{\sum_{s,t \in
        [d]^{k/2}}{G'_{s,t}\langle u_s, v_t \rangle}},
  \end{equation}
  where $(u_s)$ and $(v_t)$ are sequences of $d^{k/2}$ unit vectors in
  Hilbert space. The first part of the lemma then follows from (\ref{eq:L0-dual-norm}),
  (\ref{eq:L-dual-norm}), and Theorem~\ref{thm:grothendieck}.

For the diameter bound, we note that for any point in $L$, each entry $h_{s\cdot t}$ is the dot product of two unit vectors and hence bounded in absolute value by $1$. Since $p$ is a distribution, the norm of any point in $P^{1/2}L$ is at most $1$. 
\end{proof}
}
Combining the results above gives our main theorem. 

\begin{theorem}\label{thm:mse-main}
  There exists an $(\eps, \delta)$-differentially private mechanism
  $\mech$ that, given any (public) distribution $p$ on $k$-wise parity
  queries and (private) database $D$, computes answers
  $(\hat{y}_S)_{S: |S|=k} = \mech(p,D)$ with MSE with respect to $p$
  \begin{equation*}
    \sqrt{\mathbb{E}_\mech\E_{S \sim p}{|\parity_S(D) - \hat{y}_S|^2}} \leq C \cdot c(\eps,
    \delta)^{1/2} \sqrt{n} d^{\lceil k/2 \rceil / 4},
  \end{equation*}
for a universal constant $C$. Additionally, for any $t>0$,
  \begin{equation*}
    \Pr[\mathbb{E}_{S \sim p}{|\parity_S(D) - \tilde{y}_S|^2} > C \cdot c(\eps,
    \delta)n (d^{\lceil k/2 \rceil / 2}+t)] \leq exp(-t^2/4)
  \end{equation*}
  Moreover, $\mech$ runs in time $\poly(d^k, n)$.
\end{theorem}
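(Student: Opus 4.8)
\medskip
\noindent\textbf{Proof plan.}
The plan is to instantiate Theorem~\ref{thm:afrelax} with the efficient relaxation $L$ of $K = AB_1$ built in Lemma~\ref{lm:L-eff-rel}, taking $A$ to be the expanded $k$-wise parities matrix $(a_e)_{e\in\{\pm1\}^d}$ with $a_e = e^{\otimes k}$, so that $m = d^k$ and $N = 2^d$; note that this $A$ is efficiently represented since $A_{s,e}=\prod_{i=1}^k e_{s_i}$. Lemma~\ref{lm:L-eff-rel} gives that $L$ is an efficient relaxation of $K$, so the privacy part of Theorem~\ref{thm:afrelax} immediately shows that the mechanism $\mech$, obtained by running \afrelax$_{L}$, is $(\eps,\delta)$-differentially private, and the running-time part of Theorem~\ref{thm:afrelax}, together with $\diam(P^{1/2}L)\le 1$ from Lemma~\ref{lm:LvsL0} and $\log N = d$, gives running time $\poly(d^k, n)$.

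For the error guarantees I would chain the mean-width bounds: by Lemma~\ref{lm:LvsL0} and then Lemma~\ref{lm:L0-mean-width},
\[
  \ell^*(P^{1/2}L)\ \le\ C\,\ell^*(P^{1/2}L_0)\ \le\ C\, d^{\lceil k/2\rceil/2}
\]
for a universal constant $C$ and any distribution $p$ on $[d^k]$ with $P = \diag(p)$. Plugging this into the MSE bound of Theorem~\ref{thm:afrelax} gives $\ex_\mech \ex_{S\sim p}|\parity_S(D)-\hat y_S|^2 = O\!\big(c(\eps,\delta)\,n\,d^{\lceil k/2\rceil/2}\big)$, and taking square roots yields the stated $O\!\big(c(\eps,\delta)^{1/2}\sqrt{n}\,d^{\lceil k/2\rceil/4}\big)$ bound; the concentration statement follows identically from the tail bound of Theorem~\ref{thm:afrelax} after the same substitution.

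The one point to settle is that Theorem~\ref{thm:afrelax} takes a distribution on the $m=d^k$ rows of the expanded matrix, i.e.~on tuples $s\in[d]^k$, whereas the statement supplies a distribution $p$ on $k$-subsets $S$. Since for a tuple $s$ with distinct entries $\parity_s$ equals the parity query on the set of those entries, I would simply run \afrelax$_{L}$ with $p$ regarded as a distribution $p'$ supported on the sorted tuples of $[d]^k$; then $\ex_{s\sim p'}[\,\cdot\,]=\ex_{S\sim p}[\,\cdot\,]$, the displayed mean-width bound is unchanged (it holds for every distribution on $[d^k]$), and the output coordinate at the sorted tuple of $S$ is the reported answer $\hat y_S$.

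I do not expect a genuinely hard step here: the substance lies in results already established --- the Grothendieck-based construction of $L$ and its efficiency (Lemma~\ref{lm:L-eff-rel}), the mean-width bound for $L_0$ (Lemma~\ref{lm:L0-mean-width}), the comparison $\ell^*(P^{1/2}L)=O(\ell^*(P^{1/2}L_0))$ obtained from Grothendieck's inequality (Lemma~\ref{lm:LvsL0}), and the analysis of \afrelax\ in Theorem~\ref{thm:afrelax}. The only things requiring care are (i) confirming the running-time bound of Theorem~\ref{thm:afrelax} is polynomial in $d^k$ and $n$, which rests on $\diam(P^{1/2}L)\le 1$ and $\log N = d$ being polynomially bounded, and (ii) the minor distribution-relabeling bookkeeping above, so that the per-subset MSE coincides with the per-row MSE controlled by the algorithm.
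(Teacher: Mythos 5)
Your proposal is correct and follows essentially the same route as the paper's proof: run \afrelax$_{L}$ with the relaxation of Lemma~\ref{lm:L-eff-rel}, bound $\ell^*(P^{1/2}L) \leq C\,\ell^*(P^{1/2}L_0) \leq C\,d^{\lceil k/2\rceil/2}$ via Lemmas~\ref{lm:LvsL0} and~\ref{lm:L0-mean-width}, and plug into the MSE, tail, privacy, and running-time guarantees of Theorem~\ref{thm:afrelax}. The tuple-versus-subset relabeling you flag is the same bookkeeping the paper absorbs by expanding the query matrix, so no substantive difference remains.
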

\confoption{}{\begin{proof} The mechanism $\mech$ runs \afrelax$_L$
    with the choice of the number of iterations $T$ as in
    Theorem~\ref{thm:afrelax}. By Theorem~\ref{thm:afrelax} and
    Lemma~\ref{lm:L-eff-rel}, $\mech$ runs in time polynomial in $m =
    d^k$ and $n$. Also by Theorem~\ref{thm:afrelax},
  \begin{equation}\label{eq:err-bound}
    \mathbb{E}_\mech\E_{S \sim p}{|\parity_S(D) - \hat{y}_S|^2} = O\left(c(\eps,
        \delta) n \ell^*(P^{1/2}L)\right).
  \end{equation}
  By Lemmas~\ref{lm:L0-mean-width} and~\ref{lm:LvsL0}, $\ell^*(P^{1/2}L) \leq
  C\ell^*(P^{1/2}L_0) \leq d^{\lceil k/2 \rceil / 2}$. Plugging this
  into (\ref{eq:err-bound}) and taking square roots completes the
  proof of the expected MSE. The tail bound follows analogously.
\end{proof}}

\junk{
By the discussion in Section~\ref{sect:marginals-parities}, this also
implies the following corollary.

\begin{cor}
  There exists an $(\eps, \delta)$-differentially private mechanism
  $\mech$ that, given any (public) distribution $p$ on $k$-way marginal
  queries and (private) database $D$, computes answers $$(\hat{y}_{(S,
  \beta)})_{S: |S|=k, \beta \in \{\pm 1\}^S} = \alg(p,D)$$ with MSE with
  respect to $p$
  \begin{equation*}
    \sqrt{\mathbb{E}_\mech\E_{(S, \beta) \sim p}{|\marg_{(S,\beta)}(D) - \hat{q}(S,
        \beta)|^2}} = O(c(\eps, \delta)^{1/2}\sqrt{n} d^{\lceil  k/2
      \rceil / 4}). 
  \end{equation*}
  Moreover, $\mech$ runs in time $\poly(d^k, n)$.
\end{cor}}

\section{Worst Case Error and Boosting}

At a relatively small cost in error and computational complexity, we
can strengthen the guarantees of Theorem~\ref{thm:mse-main} from MSE
bounds for every query distribution to worst-case error bounds. We do
this via the private boosting framework of Dwork, Rothblum, and
Vadhan~\cite{boosting}. 

\junk{They designed a private boosting algorithm whose
error overhead scales with the size of the smallest (in bits)
representation of the answers of the base sanitizier. Here, we use a modified
\afrelax\ as a base sanitizer and exploit the sparsity properties of
the Frank-Wolfe algorithm together with a rounding technique inspired
by the Johnson-Lindenstrauss transform in order to encode the answers
produced by \afrelax\ in a small number of bits.}

\subsection{The Boosting for Queries Framework}

The boosting for queries framework of Dwork, Rothblum, and Vadhan
assumes black-box access to a \emph{base synopsis generator}: a
private mechanism that, given a set of queries sampled from some
probability distribution and a private database, produces a data
structure (the synopsis) that can be used to answer a strong majority of
the queries with error at most $\lambda$. The boosting algorithm runs
the synopsis generator several times and produces a new data structure
that can be used to answer \emph{all} queries with error $\lambda +
\mu$, where $\mu$ is a term that scales with the \emph{bit size} of
the synopsis produced by the base generator. Next we define a base
generator formally and give  the statement of the main result
from~\cite{boosting}. 

\begin{definition}
  A mechanism $\mech$ is a $(\kappa, \lambda, \beta)$-\emph{base synopsis
    generator} for a set of queries $\queries$, if there exists a
  \emph{reconstruction algorithm} $\mathcal{R}$ such that the
  following holds. For any distribution $p$ on $\queries$, and any
  private database $D$, when $S$ is a multiset of $\kappa$ queries sampled
  independently with replacement from $\queries$, $\hat{D} =
  \mech(S,D)$ satisfies
    $\Pr_{\mech, S}[p(\{q: |q(D) - \mathcal{R}(\hat{D}, q)| \geq
    \lambda\}) > 1/3] < \beta$.
\end{definition}

\begin{theorem}[\cite{boosting}]\label{thm:boosting}
  Let $\queries$ be a set of $|\queries| = m$ linear queries with
  sensitivity 1, and let $T = C \log m$ for a large enough constant
  $C$. There exists a mechanism $\mech$ that, given access to an
  $(\eps_0, \delta_0)$-differentially private $(\kappa, \lambda,
  \beta)$-base synopsis generator $\mech^{\text{base}}$, satisfies
  $(\eps + T\eps_0, T(\kappa\beta + \delta_0))$-differential privacy
  and, for any private database $D$, in time polynomial in $m$ and the
  running time of $\mech^{\text{base}}$, with probability at least $(1-T\beta)$
  outputs answers $(q^*(D))_{q \in \queries}$ such that
    $\forall q \in \queries: |q^*(D) - q(D)| \leq \lambda + \mu$,
  for 
    $\mu = O\left(\frac{\sqrt{\kappa}\log^{3/2} m \sqrt{\log{1/\beta}}}{\eps}\right)$.
\end{theorem}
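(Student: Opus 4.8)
The plan is to instantiate the classical AdaBoost template over the query set, with a \emph{smoothed}, privacy-aware reweighting step. First I would maintain a sequence of measures $\mu_1, \ldots, \mu_T$ on $\queries$, with $\mu_1$ uniform; in round $t$ I draw a multiset $S_t$ of $\kappa$ queries i.i.d.\ from the normalization of $\mu_t$, call the base generator to get $\hat{D}_t = \mech^{\text{base}}(S_t, D)$, and then multiply the weight of each query $q$ by a factor that grows (up to $e^{\alpha}$, for a small constant $\alpha$) with $|\mathcal{R}(\hat{D}_t, q) - q(D)|$ and leaves well-answered queries essentially untouched. After $T = C\log m$ rounds I output, for each query $q$, the median of $\mathcal{R}(\hat{D}_1, q), \ldots, \mathcal{R}(\hat{D}_T, q)$; the median is pure post-processing of the $\hat{D}_t$'s.

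For accuracy I would run the standard multiplicative-weights potential argument. By the definition of a $(\kappa, \lambda, \beta)$-base generator, in every round the $\mu_t$-mass of the ``badly answered'' set $B_t = \{q : |\mathcal{R}(\hat{D}_t,q) - q(D)| > \lambda\}$ is at most $1/3$, except with probability $\beta$ per round --- whence the $1 - T\beta$ success probability after a union bound. Letting $W_t = \mu_t(\queries)$, one round inflates $W_t$ by at most $1 + \tfrac13(e^{\alpha}-1) \le e^{O(\alpha)}$, so $W_{T+1} \le e^{O(\alpha)T}$; but a query lying in $B_t$ for at least $T/2$ of the rounds would have weight at least $e^{\alpha T/2}/m$. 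For $T = C\log m$ with $C$ large enough relative to $1/\alpha$ these two bounds are contradictory, so every query is well-answered in a strict majority of the rounds, and hence its median answer is within $\lambda$ of $q(D)$.

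The only reason the extra error term $\mu$ appears is that the reweighting is a function of the private database through the residuals $|\mathcal{R}(\hat{D}_t,q) - q(D)|$, and a hard threshold at $\lambda$ is infinitely sensitive. I would replace the indicator of $B_t$ by a function that rises smoothly from $0$ to $1$ as the residual ranges over $[\lambda, \lambda+\mu]$ and add Laplace noise to the per-round statistics that decide whether the round succeeded; then adding or deleting one individual, which moves each $q(D)$ by at most $1$, changes each query's reweighting factor by only a $(1 \pm O(1/\mu))$ factor, and the effective bad set becomes $\{q : |\mathcal{R}(\hat{D}_t,q) - q(D)| > \lambda + \mu\}$ --- which is exactly what degrades the final guarantee to $\lambda + \mu$. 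Choosing $\mu$ as small as the noise budget permits, and paying the extra factor $\sqrt{\kappa}$ because each round also draws $\kappa$ database-dependent samples, gives $\mu = O\!\left(\sqrt{\kappa}\,\log^{3/2} m\,\sqrt{\log(1/\beta)}/\eps\right)$.

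For privacy I would note that the only operations that touch $D$ are the $T$ invocations of the $(\eps_0,\delta_0)$-private base generator, contributing $T\eps_0$ and $T\delta_0$ by basic composition, and the $T$ noisy reweighting/continuation steps, whose Laplace noise is calibrated (by $T$-fold advanced composition, accounting for the $\kappa$ draws per round) so that their total cost is $\eps$; the per-round $\kappa\beta$ in the $\delta$ parameter comes from charging the base generator's $\beta$ failure probability across the $\kappa$ sampled queries. The hard part, I expect, is precisely this privacy--accuracy balancing act: the reweighting must be sharp enough (large enough $\alpha$, narrow enough window) for the potential argument to close in $O(\log m)$ rounds, yet flat enough to survive $T$-fold composition, and one must carefully control the feedback whereby the database-dependent measure $\mu_t$ drives the sample $S_t$ fed into the base generator. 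Getting all of $\kappa, m, \beta, \eps$ into $\mu$ with the right exponents simultaneously is the delicate bookkeeping.
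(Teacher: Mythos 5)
This theorem is not proved in the paper at all: it is imported verbatim from \cite{boosting} (Dwork--Rothblum--Vadhan), so the only meaningful comparison is with that work's proof. Your outline does reproduce its architecture faithfully: in each of $T = O(\log m)$ rounds, sample $\kappa$ queries from a reweighted (database-dependent) distribution, call the base generator, aggregate the $T$ synopses by taking per-query medians, run the standard boosting potential argument (total weight grows like $e^{O(\alpha)T}$ while a persistently badly-answered query would have weight $e^{\alpha T/2}/m$, a contradiction for $T = C\log m$), union-bound the per-round failure probability to get $1-T\beta$, and smooth the reweighting over a window of width $\mu$ so that a single individual perturbs each weight only by a bounded factor --- which is exactly why the guarantee degrades from $\lambda$ to $\lambda+\mu$.

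There are, however, two genuine gaps. First, the quantitative heart of the statement --- that $\mu = O\bigl(\sqrt{\kappa}\,\log^{3/2} m\,\sqrt{\log(1/\beta)}/\eps\bigr)$ suffices --- is asserted rather than derived, and you acknowledge as much. In the actual argument one bounds the sensitivity of the \emph{cumulative} log-weights, which by round $t$ have accumulated up to $t \le T$ smoothed residuals each of sensitivity $O(1/\mu)$ (this is where one factor of $T=O(\log m)$ enters), and then composes over the $\kappa T$ database-dependent query samples (contributing the $\sqrt{\kappa}$ and the $\sqrt{\log(1/\beta)}$-type factors); solving the resulting privacy constraint for $\mu$ is precisely what produces the exponents in the statement, and your sketch stops before this step. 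Second, the device you introduce --- Laplace noise added to ``per-round statistics that decide whether the round succeeded'' --- is not part of the argument in \cite{boosting} and is not analyzed in your proposal: there, no noisy round-success test is needed, the privacy-protecting randomness is the query sampling itself (private because of the bounded multiplicative sensitivity of the smoothed weights), and the $T\kappa\beta$ term in the $\delta$ parameter arises from conditioning on the base generator's accuracy across the sampled queries. As written, your modified mechanism would require its own privacy and accuracy accounting, which is not supplied, so the proposal should be read as a correct high-level reconstruction of the known proof rather than a proof of the stated bound.
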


The term $\mu$ in Theorem~\ref{thm:boosting} is an error overhead due
to the privacy requirements of the boosting algorithm. To minimize
this overhead, we need to make the number $\kappa$ of queries given to
the base generator as small as possible. A generalization result
proved for the uniform distribution in~\cite{DworkNRRV09} and extended
to arbitrary distributions in~\cite{boosting} shows that it is
sufficient to make $\kappa$ only a constant factor larger than the bit
size of synopsis. \confoption{The argument, which is a small variation
  on the one in~\cite{boosting}, is omitted from this extended abstract.}{We reproduce a version of this argument with a slightly
  weaker assumption.}

\begin{lemma}\label{lm:concise}
  Suppose there exists a mechanism $\mech$ and a reconstruction
  algorithm $\mathcal{R}$ such that, given any distribution
  $\tilde{p}$ on the query set $\queries$, and a private database $D$,
  $\hat{D} = \mech(\tilde{p}, D)$ satisfies the MSE bound
  \begin{equation*}
    \sqrt{\sum_{q \in S}{\tilde{p}(q)|q(D) - \mathcal{R}(\hat{D},
        q)|^2}} \leq \lambda, 
  \end{equation*}
  with probability $1-\beta$, and, moreover, for all $D$, $\hat{D}$
  can be represented by a string of $s$ bits. Then $\mech$ is a
  $(O(s+\log 1/\beta), O(\lambda), 2\beta)$-base synopsis generator
  for $\queries$.
\end{lemma}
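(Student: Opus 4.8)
The plan is to run $\mech$ on the \emph{empirical distribution} of the sampled multiset and then promote the resulting in-sample MSE guarantee to an out-of-sample statement about the unknown distribution $p$ by a union bound over the at most $2^s$ possible synopses. Concretely, given a multiset $S$ of $\kappa$ queries drawn i.i.d.\ from $p$, let $\tilde p$ be the distribution putting mass $1/\kappa$ on each draw (counted with multiplicity), and set $\hat D = \mech(\tilde p, D)$. By the hypothesis applied to $\tilde p$, with probability at least $1-\beta$ over $\mech$ and $S$ we have $\frac{1}{\kappa}\sum_{q \in S} |q(D) - \mathcal{R}(\hat D, q)|^2 \le \lambda^2$. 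On this event Markov's inequality gives that the fraction of sampled queries $q$ with $|q(D) - \mathcal{R}(\hat D, q)| \ge 3\lambda$ is at most $1/9$.

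Next I would carry out the generalization step. Enumerate the at most $2^s$ possible values $\hat D^{(1)}, \dots, \hat D^{(2^s)}$ of the synopsis, and for each $j$ define the ``bad set'' $B_j = \{q \in \queries : |q(D) - \mathcal{R}(\hat D^{(j)}, q)| \ge 3\lambda\}$; the point is that $B_j$ is a fixed set, independent of the random sample $S$. If $p(B_j) \ge 1/3$, then the number of sampled queries landing in $B_j$ is a sum of $\kappa$ i.i.d.\ indicators of mean at least $\kappa/3$, so by a Chernoff bound the probability that fewer than $\kappa/9$ of them land in $B_j$ is at most $e^{-c\kappa}$ for an absolute constant $c > 0$. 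A union bound over the $2^s$ candidates shows that, except with probability $2^s e^{-c\kappa}$, every $j$ with $p(B_j) \ge 1/3$ also has empirical $B_j$-fraction at least $1/9$; taking $\kappa = \bigl((\ln 2)\,s + \ln(1/\beta)\bigr)/c = O(s + \log 1/\beta)$ makes this failure probability at most $\beta$.

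Finally I would intersect the two good events. With probability at least $1 - 2\beta$ both the MSE event and the generalization event hold, and then the realized output $\hat D$ equals $\hat D^{(j^\ast)}$ for some $j^\ast$ whose empirical bad-fraction is at most $1/9$; by the contrapositive of the generalization event this forces $p(B_{j^\ast}) < 1/3$, i.e.\ $p(\{q : |q(D) - \mathcal{R}(\hat D, q)| \ge 3\lambda\}) < 1/3$. Hence $\mech$ together with $\mathcal{R}$ is a $\bigl(O(s + \log 1/\beta),\, 3\lambda,\, 2\beta\bigr)$-base synopsis generator, which is of the claimed form. The one genuinely delicate point—and the only place the $s$-bit description length is used—is that $\hat D$ depends on the sample $S$, so one cannot apply a concentration bound to $B_{\hat D}$ directly; this is precisely why the union bound must range over all $2^s$ candidate synopses and why $\kappa$ must grow with $s$. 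Matching the exact inequalities and the precise constant multiplying $\lambda$ in the definition of a base synopsis generator is routine and absorbed into the $O(\cdot)$ notation.
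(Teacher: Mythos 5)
Your proposal is correct and follows essentially the same route as the paper's own proof: run $\mech$ on the empirical distribution, convert the MSE bound to a small empirical bad-fraction via Markov, and then generalize to the true distribution $p$ by a Chernoff bound combined with a union bound over the at most $2^s$ possible synopses, taking $\kappa = O(s + \log 1/\beta)$. The only differences are immaterial constants (your threshold $3\lambda$ and fractions $1/9$ versus the paper's $2\sqrt{3}\lambda$ and $1/6$), and you correctly identify the key point that the union bound over synopses is what circumvents the dependence of $\hat{D}$ on the sample $S$.
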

\confoption{}{\begin{proof}
  Let $p$ be an arbitrary distribution on $\queries$, and let
  $S$ be a multiset of $\kappa$ queries sampled independently with replacement
  from $\queries$. Let $\tilde{p}$ be the \emph{empirical
    distribution} given by $S$, i.e.~$\tilde{p}(q)$ is equal to the
  number of copies of $q$ in $S$ divided by $\kappa$. 
    Define $\queries^{\text{bad}}(\hat{D}, D) = \{q: |q(D) - \mathcal{R}(\hat{D},
    q)| \geq 2\sqrt{3}\lambda\}$. Fix some $\hat{D}$ such that
    $p(\queries^{\text{bad}}(\hat{D}, D)) > 1/3$. Then $\ex_S\ \tilde{p}(\queries^{\text{bad}}(\hat{D},
    D)) > 1/3$, and, by Chernoff's inequality
    \begin{equation*}
      \Pr_S[\tilde{p}(\queries^{\text{bad}}(\hat{D}, D)) \leq 1/6] \leq
      2^{-\kappa/C}
    \end{equation*}
    for a constant $C$. Setting $\kappa = C(s+\log 1/\beta)$, we get that
    $\tilde{p}(\queries^{\text{bad}}(\hat{D}, D)) \leq 1/6$ with
    probability at most $\beta 2^{-s}$; by a union bound over all
    choices of $\hat{D}$, with probability $1-\beta$ over the random
    choice of $S$, $\tilde{p}(\queries^{\text{bad}}(\hat{D}, D)) >
    1/6$ for all $\hat{D}$ such that $p(\queries^{\text{bad}}(\hat{D}, D)) > 1/3$.

    By Markov's inequality and the MSE guarantee of $\mech$, with
    probability $1-\beta$ over the randomness of $\mech$, $\hat{D} =
    \mech(\tilde{p}, D)$ satisfies $\tilde{p}(\{q:|q(D) - \mathcal{R}(\hat{D}, q)| \geq
    2\sqrt{3}\lambda\})\leq 1/6$. Therefore, except with probability
    $\beta$, if $\hat{D} = \mech(S, D)$, then $\tilde{p}(
    \queries^{\text{bad}}(\hat{D}, D)) \leq 1/6$. From this fact and
    the discussion above we conclude that with probability $1- 2\beta$
    (over both the randomness of $\mech$ and the random choice of $S$),
    $p(\queries^{\text{bad}}(\hat{D}, D)) \leq 1/3$ for $\hat{D} =
    \mech(\tilde{p}, D)$.
\end{proof}}

\newcommand{\ess}{\ensuremath{s}}
\newcommand{\dbig}{M}
\newcommand{\dsmall}{M'}
\newcommand{\param}{\chi}

\subsection{Generating a Concise Synopsis}
\confoption{Lemma~\ref{lm:concise} and Theorem~\ref{thm:boosting} together imply
that the additional error $\mu$ incurred by boosting the MSE guarantee
to a worst case guarantee can be made nearly as small as $\sqrt{s}$,
where $s$ is the size of the base synopsis in bits. Following a
relatively standard argument based on the Johnson-Lindenstrauss lemma,
we show that
via a random projection of the vector solution representing the point $\hat{y} \in L$  onto a suitably chosen number of dimensions, one
can obtain the following result. The full argument is omitted from
this extended abstract due to space constraints.}{
Lemma~\ref{lm:concise} and Theorem~\ref{thm:boosting} together imply
that the additional error $\mu$ incurred by boosting the MSE guarantee
to a worst case guarantee can be made nearly as small as $\sqrt{s}$,
where $s$ is the size of the base synopsis in bits. Next we show how
to modify \afrelax\ so that it produces a synopsis small enough to
make this additional error comparable to the MSE bound we have already
proved.

Without modification, \afrelax, called with a distribution
$\tilde{p}$, outputs a point $\hat{y} \in
n\cdot\Pi_{\supp(\tilde{p})}L$. 
 Thus for any $s,t$, such that $s \cdot t \in \supp(\tilde{p})$,
 $\hat{y}_{s\cdot t} = n\langle u_s,v_t\rangle$, where $u_s,v_t \in
 \Re^{m}$ and $s, t \in [d]^{k/2}$ (in the even case; as before, the
 odd case is analogous). It is a relatively standard fact that these
 SDP vectors can be projected down to about $O({\log m})$ dimensions
 such that each of the $m$ dot products are preserved up to a small
 constant additive error. We will need subconstant error, and will
 thus have to take many more dimensions. We first give a formal
 statement of the guarantee given by the Johnson-Lindenstrauss lemma. 
\begin{lemma}
Let $u$ and $v$ be unit vectors in $\R^{\dbig}$ and let $\Pi$ be a
$\dsmall \times \dbig$ matrix with entries drawn independently from
$N(0, \frac{1}{\dsmall})$, for $\dsmall < \dbig$. Then $\ex[\langle \Pi u,\Pi v\rangle] = \langle u,v \rangle$ and for any $t\in(0,1)$,
\begin{align*}
\Pr[|\langle \Pi u, \Pi v\rangle - \langle u,v \rangle| > 3t] \leq 6\cdot \exp(-\dsmall t^2/6)
\end{align*}
\end{lemma}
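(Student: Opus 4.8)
The plan is to reduce the claimed concentration to the one‑vector (distributional Johnson–Lindenstrauss) statement --- concentration of $\|\Pi w\|_2^2$ around $\|w\|_2^2$ --- via the polarization identity, and then invoke a standard $\chi^2$ tail bound. The expectation claim is immediate: writing $(\Pi u)_i,(\Pi v)_i$ for the coordinates of the projections, independence of the entries of $\Pi$ gives $\E[(\Pi u)_i(\Pi v)_i]=\sum_{j,\ell}\E[\Pi_{ij}\Pi_{i\ell}]u_jv_\ell=\tfrac1{M'}\langle u,v\rangle$, and summing over $i\in[M']$ yields $\E[\langle\Pi u,\Pi v\rangle]=\langle u,v\rangle$.

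For the tail bound I would use the polarization identities
\[
\langle\Pi u,\Pi v\rangle=\tfrac14\bigl(\|\Pi(u+v)\|_2^2-\|\Pi(u-v)\|_2^2\bigr),\qquad
\langle u,v\rangle=\tfrac14\bigl(\|u+v\|_2^2-\|u-v\|_2^2\bigr),
\]
which give $\bigl|\langle\Pi u,\Pi v\rangle-\langle u,v\rangle\bigr|\le\tfrac14\sum_{w\in\{u+v,\,u-v\}}\bigl|\,\|\Pi w\|_2^2-\|w\|_2^2\,\bigr|$. Hence the event $\{|\langle\Pi u,\Pi v\rangle-\langle u,v\rangle|>3t\}$ is contained in the union over $w\in\{u+v,u-v\}$ of $E_w:=\{\,|\,\|\Pi w\|_2^2-\|w\|_2^2\,|>6t\,\}$. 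For fixed $w\ne 0$ the rows of $\Pi$ are i.i.d., so $(\Pi w)_i\sim N(0,\|w\|_2^2/M')$ independently and $\tfrac{M'}{\|w\|_2^2}\|\Pi w\|_2^2\sim\chi^2_{M'}$; since $u,v$ are unit vectors, $\|w\|_2^2\le 4$, so
\[
E_w=\Bigl\{\,\bigl|\tfrac{X}{M'}-1\bigr|>\tfrac{6t}{\|w\|_2^2}\,\Bigr\}\subseteq\Bigl\{\,\bigl|\tfrac{X}{M'}-1\bigr|>\tfrac{3t}{2}\,\Bigr\},\qquad X\sim\chi^2_{M'}
\]
(and $E_w=\varnothing$ if $w=0$).

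It then remains to bound $\Pr[\,|X/M'-1|>3t/2\,]$ for $X\sim\chi^2_{M'}$ and $t\in(0,1)$, which I would do with the Laurent--Massart inequalities $\Pr[X\ge M'+2\sqrt{M'x}+2x]\le e^{-x}$ and $\Pr[X\le M'-2\sqrt{M'x}]\le e^{-x}$: choosing $x=M't^2/6$ in the first (and noting $2t/\sqrt6+t^2/3\le 3t/2$ for $t\le1$) bounds the upper tail by $e^{-M't^2/6}$, and choosing $x=9M't^2/16$ in the second bounds the lower tail by $e^{-9M't^2/16}\le e^{-M't^2/6}$. A union bound over the at most four relevant events --- $w\in\{u+v,u-v\}$ times upper/lower tail --- gives $\Pr[|\langle\Pi u,\Pi v\rangle-\langle u,v\rangle|>3t]\le 4e^{-M't^2/6}\le 6e^{-M't^2/6}$.

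The only point needing a little care is this last step: because $t$ ranges over all of $(0,1)$, the relative deviation $3t/2$ can exceed $1$, so the familiar ``$\eta\le 1$'' form $\Pr[|X/M'-1|>\eta]\le 2e^{-M'\eta^2/8}$ cannot be used as a black box; the Laurent--Massart bounds are valid for arbitrary deviations and still give an exponent comfortably better than $t^2/6$. Everything else is routine bookkeeping of constants.
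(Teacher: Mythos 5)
Your proof is correct, and it follows the same broad strategy as the paper's --- reduce concentration of $\langle \Pi u,\Pi v\rangle$ to concentration of squared norms $\|\Pi w\|_2^2$ via a polarization-type identity and a union bound --- but with a different decomposition and a different concentration input. The paper uses the three-vector identity $2\langle u,v\rangle=\|u+v\|_2^2-\|u\|_2^2-\|v\|_2^2$ applied to $w\in\{u,v,u+v\}$, so each vector only needs its norm preserved to relative error exactly $t\in(0,1)$, and the standard Johnson--Lindenstrauss norm-preservation bound can be cited as a black box with failure probability $2\exp(-\dsmall t^2/6)$ per vector; the union bound over the three vectors is what produces the constant $6$ in the statement. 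You instead use the two-vector polarization $4\langle u,v\rangle=\|u+v\|_2^2-\|u-v\|_2^2$, which forces you to control an absolute deviation of $6t$ for vectors of norm at most $2$, i.e.\ a relative deviation of $3t/2$ that can exceed $1$; you correctly flag that the small-relative-error JL form then cannot be invoked directly, and you substitute the Laurent--Massart $\chi^2$ tail bounds with parameter choices that check out ($2t/\sqrt{6}+t^2/3\le 3t/2$ for $t\le1$ on the upper tail, and exponent $9t^2/16\ge t^2/6$ on the lower tail), ending with $4\exp(-\dsmall t^2/6)\le 6\exp(-\dsmall t^2/6)$. Your route is more self-contained, since it proves the underlying scalar concentration rather than citing it, at the cost of the extra care with large deviations; the paper's three-vector decomposition is precisely what keeps every relative deviation equal to $t$ so the cited lemma applies verbatim. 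Your direct computation of $\ex[\langle \Pi u,\Pi v\rangle]=\langle u,v\rangle$ is also fine (the paper does not spell that part out).
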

\begin{proof}
By the Johnson-Lindenstrauss Lemma (see e.g.~\cite{DasguptaG03}), for any vector $w$, $\Pr[\|\Pi w\|^2-\|w\|^2 \geq t \|w\|^2] \leq 2\exp(-\dsmall t^2/6)$.  Conditioning on this event for $w\in\{u,v,(u+v)\}$, and observing that $\|u\|^2,\|v\|^2=1$ and $\|u+v\|^2\leq 4$, we write
\begin{align*}
2|\langle \Pi u, \Pi v\rangle - \langle u,v\rangle| &= |(\|\Pi u+\Pi v\|^2 - \|\Pi u\|^2 - \|\Pi v\|^2) - (\|u+v\|^2 - \|u\|^2 - \|v\|^2)|\\
&\leq t\|u+v\|^2 + t \|u\|^2 + t \|v\|^2\\
&\leq 6t
\end{align*}
\end{proof}

In our setting, we wish to preserve $m=O(d^k)$ dot products
approximately, with probability $(1-\beta)$. Suppose that for a parameter $\param$, we set $t= \param d^{\frac{\lceil k/2\rceil}{4}} / \sqrt{n}$, and $\dsmall = 12\cdot (k\log d  + \log 1/\beta)/t^2$. Then, by a union bound, with probability $(1-\beta)$ a random projection $\Pi$ will satisfy simultaneously for all $u_s,v_t$,
\begin{align*}
|\langle \Pi u_s, \Pi v_t\rangle - \langle u_s,v_t \rangle| \leq 6t.
\end{align*}
Also note that this gives us
\begin{align*}
\dsmall = 12\cdot k\log d  (\log 1/\beta)/t^2 = 12 n \cdot (k\log d  + \log 1/\beta)/(\param^2 d^{\frac{\lceil k/2\rceil}{2}})
\end{align*}
We will let our synopsis be defined as the collection of vectors $\{\Pi u_s\},\{\Pi v_t\}$, with each coordinate being represented to $(\log n + \log \dsmall)$ bits of precision. Note that this truncation at the the $(\log n + \log \dsmall)$-th bit adds at most a $1/n$ additive error to the dot product. Also recalling that $\hat{y}_{s\cdot t}$ is $n\langle u_s, v_t\rangle$, we set $\hat{y}'_{s\cdot t}$ to be $n \langle \Pi u_s, \Pi v_t\rangle$. Thus except with probability $\beta$, every pair $s,t$ satisfies
\begin{align*}
|\hat{y}'_{s\cdot t} - \hat{y}_{s\cdot t}| \leq nt = \param \cdot \sqrt{n} d^{\frac{\lceil k/2\rceil}{4}}
\end{align*}
Theorem~\ref{thm:mse-main} then implies the following.}
\begin{lemma}
\label{lm:mse-compressed}
For any $\param>1$, there exists a mechanism $\mech$ and a reconstruction
  algorithm $\mathcal{R}$ such that, given any distribution
  $\tilde{p}$ on $k$-wise parity queries, and a private database $D$,
  $\hat{D} = \mech(\tilde{p}, D)$ satisfies the MSE bound
  \begin{equation*}
    \sqrt{\sum_{q \in S}{\tilde{p}(q)|q(D) - \mathcal{R}(\hat{D},
        q)|^2}} \leq C \cdot  \sqrt{n} \left(c(\eps,\delta)^{1/2}(d^{\frac{\lceil k/2\rceil}{4}}+ \sqrt{8\log (1/\beta)})+ \param\cdot d^{\frac{\lceil k/2\rceil}{4}}\right), 
  \end{equation*}
  with probability $1-\beta$, for some absolute constant $C$. Moreover, for all $D$, $\hat{D}$
  can be represented by a string of $24 n  \cdot d^{\frac{\lceil k/2\rceil}{2}} \cdot (k\log d  + \log (1/\beta))(\log n + \log\log (1/\beta))/\param^2$ bits.  
\end{lemma}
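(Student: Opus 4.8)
The plan is to let $\mech$ run \afrelax$_L$ on the public distribution $\tilde{p}$ and private database $D$, and then shrink the SDP solution underlying its output by a random (Johnson--Lindenstrauss) projection; the reconstruction $\mathcal{R}$ will simply recompute the relevant inner product. First I would invoke Theorem~\ref{thm:mse-main} --- equivalently, Theorem~\ref{thm:afrelax} with the convex body $L$, which is an efficient relaxation of $K = A B_1$ by Lemma~\ref{lm:L-eff-rel}: \afrelax$_L$ runs in time $\poly(d^k, n)$ and outputs a point $\hat{y} \in n \cdot \Pi_{\supp(\tilde{p})} L$. Since $\hat{y}$ is a coordinate projection of a point of $L$, it is realized by a family of unit vectors $(u_s)_s$, $(v_t)_t$ in some $\R^{\dbig}$ (the vectors produced by the SDP solver inside \fw) with $\hat{y}_{s\cdot t} = n\langle u_s, v_t\rangle$ for every coordinate $s\cdot t \in \supp(\tilde{p})$. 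Setting the deviation parameter in the tail bound of Theorem~\ref{thm:mse-main} to $8\log(1/\beta)$ (which more than suffices to push the failure probability below $\beta$), with probability $1-\beta$ we obtain
\[
  \sqrt{\sum_q \tilde{p}(q)\,|q(D) - \hat{y}_q|^2} \;\leq\; C\, c(\eps,\delta)^{1/2}\sqrt{n}\,\Bigl(d^{\lceil k/2\rceil/4} + \sqrt{8\log(1/\beta)}\Bigr).
\]

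Next I would draw a $\dsmall \times \dbig$ matrix $\Pi$ with independent $N(0, 1/\dsmall)$ entries, where $\dsmall = 12(k\log d + \log(1/\beta))/t^2$ and $t = \param\, d^{\lceil k/2\rceil/4}/\sqrt{n}$, and apply the Johnson--Lindenstrauss lemma. There are at most $d^k$ inner products $\langle u_s, v_t\rangle$ to preserve (one per coordinate of $\R^{d^k}$); a union bound over these, using the per-pair failure probability $6\exp(-\dsmall t^2/6)$ together with the chosen $\dsmall$, shows that with probability $1-\beta$ the projection simultaneously preserves all of them (and all of the norms $\|\Pi u_s\|_2$, $\|\Pi v_t\|_2$) up to additive error $O(t)$. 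I then put $\hat{y}'_{s\cdot t} = n\langle \Pi u_s, \Pi v_t\rangle$, let the synopsis $\hat{D}$ record the vectors $\{\Pi u_s\}$, $\{\Pi v_t\}$ with each coordinate rounded to $\log n + \log \dsmall$ bits, and let $\mathcal{R}(\hat{D}, \parity_S)$ return the value $n\langle \Pi u_s, \Pi v_t\rangle$ (recomputed from the rounded vectors), where $s\cdot t$ is the canonical tuple representing $S$. This yields $|\hat{y}'_{s\cdot t} - \hat{y}_{s\cdot t}| = O(nt) = O(\param \sqrt{n}\, d^{\lceil k/2\rceil/4})$ for all $s,t$ at once.

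To finish the accuracy claim I would combine the two error contributions via the triangle inequality for the $\tilde{p}$-weighted $\ell_2$ norm: using $\sum_q \tilde{p}(q) = 1$,
\[
  \sqrt{\sum_q \tilde{p}(q)\,|q(D) - \hat{y}'_q|^2} \;\leq\; \sqrt{\sum_q \tilde{p}(q)\,|q(D) - \hat{y}_q|^2} \;+\; \max_{s,t}\,|\hat{y}_{s\cdot t} - \hat{y}'_{s\cdot t}|,
\]
and then substitute the two bounds above, absorbing into the constant $C$ the $O(1/n)$ per-inner-product error caused by the $(\log n + \log \dsmall)$-bit rounding --- which I would control via $\|\Pi u_s\|_1 \leq \sqrt{\dsmall}\,\|\Pi u_s\|_2 = O(\sqrt{\dsmall})$, so rounding perturbs each coordinate of $\hat{y}'$ by only $O(1)$. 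For the bit-length: the synopsis comprises $O(d^{\lceil k/2\rceil})$ vectors ($d^{\lfloor k/2\rfloor}$ of the $u_s$ and $d^{\lceil k/2\rceil}$ of the $v_t$), each of dimension $\dsmall = O\!\left(n(k\log d + \log(1/\beta))/(\param^2 d^{\lceil k/2\rceil/2})\right)$, at $\log n + \log \dsmall$ bits per coordinate; multiplying (the exponent arithmetic is identical in the even and odd cases, since $d^{\lceil k/2\rceil}/d^{\lceil k/2\rceil/2} = d^{\lceil k/2\rceil/2}$) gives $O\!\left(n\, d^{\lceil k/2\rceil/2}(k\log d + \log(1/\beta))(\log n + \log\log(1/\beta))/\param^2\right)$ bits, matching the stated bound (the constant $24 = 2\cdot 12$). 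This is precisely the form in which Lemma~\ref{lm:concise} and Theorem~\ref{thm:boosting} can consume $\mech$.

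The one genuinely delicate point is the calibration of $\dsmall$ in the second paragraph: $\dsmall$ must be large enough that the union bound over the $\approx d^k$ projected inner products keeps the overall failure probability at $O(\beta)$, yet $t$ must be large enough (equivalently $\dsmall$ small enough) that the per-coordinate perturbation $nt$ is exactly of the target order $\param\sqrt{n}\, d^{\lceil k/2\rceil/4}$ and the bit budget scales with $d^{\lceil k/2\rceil/2}$ rather than $d^k$. Reconciling these forces the stated $\dsmall$; the remaining steps --- applying the tail bound, the $\ell_2$ triangle inequality, and the rounding estimate --- are routine.
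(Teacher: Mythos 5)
Your proposal is correct and follows essentially the same route as the paper: run \afrelax$_L$ and invoke the tail bound of Theorem~\ref{thm:mse-main}, represent $\hat{y}_{s\cdot t}=n\langle u_s,v_t\rangle$ via the SDP vectors, apply a Johnson--Lindenstrauss projection with exactly the paper's choice of $t=\param d^{\lceil k/2\rceil/4}/\sqrt{n}$ and $\dsmall=12(k\log d+\log(1/\beta))/t^2$, round coordinates to $\log n+\log\dsmall$ bits, and combine errors (triangle inequality in the $\tilde{p}$-weighted $\ell_2$ norm) before counting bits. If anything, you supply details the paper leaves implicit (the rounding-error estimate and the explicit error combination), with only cosmetic slack in the failure-probability bookkeeping ($1-2\beta$ versus $1-\beta$), which is absorbed by adjusting constants.
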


\begin{corollary}
\label{cor:mse-compressed}
Suppose that $\beta \in (\exp(-n),d^{-k/4}n^{-2})$, and $n\leq d^{\frac{\lceil k/2\rceil}{2}}$. 
For any $\param>1$, there exists a mechanism $\mech$ and a reconstruction
  algorithm $\mathcal{R}$ such that, given any distribution
  $\tilde{p}$ on $k$-wise parity queries, and a private database $D$,
  $\hat{D} = \mech(\tilde{p}, D)$ satisfies the MSE bound
  \begin{equation*}
    \sqrt{\sum_{q \in S}{\tilde{p}(q)|q(D) - \mathcal{R}(\hat{D},
        q)|^2}} \leq C \cdot  \sqrt{n} \cdot d^{\frac{\lceil k/2\rceil}{4}}(c(\eps,\delta)^{1/2}+\param), 
  \end{equation*}
  with probability $1-\beta$, for some absolute constant $C$. Moreover, for all $D$, $\hat{D}$
  can be represented by a string of $48 n  \cdot d^{\frac{\lceil k/2\rceil}{2}} \cdot \log (1/\beta) (\log n) /\param^2$ bits.  
\end{corollary}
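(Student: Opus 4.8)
The plan is to obtain Corollary~\ref{cor:mse-compressed} directly from Lemma~\ref{lm:mse-compressed}, with no new randomized or geometric argument: one simply instantiates the free parameter $\beta$ in the lemma and simplifies both the accuracy bound and the synopsis length using the two standing hypotheses $\exp(-n)<\beta<d^{-k/4}n^{-2}$ and $n\le d^{\lceil k/2\rceil/2}$. The content of the corollary is exactly that, in this regime, the two ``extra'' quantities appearing in Lemma~\ref{lm:mse-compressed} --- the additive term $\sqrt{8\log(1/\beta)}$ in the error, and the factors $k\log d$ and $\log\log(1/\beta)$ in the bit count --- are dominated by the main terms.

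For the accuracy bound, Lemma~\ref{lm:mse-compressed} gives error at most $C\sqrt n\bigl(c(\eps,\delta)^{1/2}(d^{\lceil k/2\rceil/4}+\sqrt{8\log(1/\beta)})+\chi\,d^{\lceil k/2\rceil/4}\bigr)$ with probability $1-\beta$. Here I would use $\beta>\exp(-n)$, which gives $\log(1/\beta)<n$, together with $n\le d^{\lceil k/2\rceil/2}$, to conclude $\sqrt{8\log(1/\beta)}<\sqrt8\,d^{\lceil k/2\rceil/4}$. Folding this into the leading $d^{\lceil k/2\rceil/4}$ term and pulling out the common factors leaves $O\bigl(\sqrt n\,d^{\lceil k/2\rceil/4}(c(\eps,\delta)^{1/2}+\chi)\bigr)$, as claimed. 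It is precisely the hypothesis $n\le d^{\lceil k/2\rceil/2}$ that makes this absorption legitimate --- for larger $n$ the high-probability slack would genuinely exceed the mean-width term --- which is why the corollary is stated only in this range.

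For the synopsis length, the lemma produces a string of $24\,n\,d^{\lceil k/2\rceil/2}\,(k\log d+\log(1/\beta))(\log n+\log\log(1/\beta))/\chi^2$ bits. Now both ends of the hypothesis on $\beta$ come into play: from $\beta<d^{-k/4}$ we get $k\log d<4\log(1/\beta)$, hence $k\log d+\log(1/\beta)=O(\log(1/\beta))$; and from $\beta>\exp(-n)$ we get $\log\log(1/\beta)<\log n$, hence $\log n+\log\log(1/\beta)<2\log n$. (The remaining piece $\beta<n^{-2}$ only serves to guarantee $\log(1/\beta)\ge 2\log n>0$, so that no term is vacuous.) Multiplying the two estimates collapses the bound to the form $O\bigl(n\,d^{\lceil k/2\rceil/2}\log(1/\beta)\log n/\chi^2\bigr)$ asserted by the corollary, the displayed numerical constant following by tracking the factors through Lemma~\ref{lm:mse-compressed}.

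In short, this is essentially bookkeeping on top of Lemma~\ref{lm:mse-compressed}, and there is no substantive obstacle. The only thing that requires a moment's care is to verify that each auxiliary term is dominated using the correct half of the hypotheses: the $\sqrt{\log(1/\beta)}$ error term needs both $\beta>\exp(-n)$ and $n\le d^{\lceil k/2\rceil/2}$, while the $k\log d$ factor in the bit count needs the lower bound $\beta<d^{-k/4}$. After that, the constants come out by straightforward multiplication.
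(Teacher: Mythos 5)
Your derivation is exactly the paper's (implicit) route: the corollary is stated as immediate bookkeeping on Lemma~\ref{lm:mse-compressed}, absorbing the $\sqrt{8\log(1/\beta)}$ term via $\log(1/\beta)<n\le d^{\lceil k/2\rceil/2}$ and the $k\log d$ and $\log\log(1/\beta)$ factors via $\beta<d^{-k/4}$ and $\beta>\exp(-n)$, just as you do. The only quibble is your closing claim that the displayed constant $48$ follows by tracking factors --- the inequalities actually available (e.g.\ $k\log d<4\log(1/\beta)$, giving a factor $5$ rather than $2$) yield a larger constant --- but this is immaterial since the constant is swallowed by the $O(\cdot)$ in Theorem~\ref{thm:main-worstcase}.
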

\subsection{Putting it together}
Combining Lemma~\ref{lm:concise} with Corollary~\ref{cor:mse-compressed}, we conclude that if $\beta \in (\exp(-n),d^{-k/4}n^{-2})$, and $n\leq d^{\frac{\lceil k/2\rceil}{2}}$, then for any $\param>1$, there is a mechanism with running time polynomial in $n$ and $d^k$ that is a $(\kappa,\lambda,\beta)$-base synopsis generator, with
\begin{align*}
\kappa &= 48 n  \cdot d^{\frac{\lceil k/2\rceil}{2}} \cdot \log (1/\beta) (\log n) /\param^2.\\
\lambda &= C \cdot  \sqrt{n} \cdot d^{\frac{\lceil k/2\rceil}{4}}(c(\eps,\delta)^{1/2}+\param).
\end{align*}

Plugging this into Theorem~\ref{thm:boosting}, we get our main result
for worst-case error. 
\begin{theorem}\label{thm:main-worstcase}
  Let $2^{-n} \leq \delta \leq n^{-2}$, and $d \leq \exp(n)$. There exists an $(\eps, \delta)$-differentially private
  mechanism $\mech$ that, given any database $D$, with constant probability computes answers
  $(\hat{y}_S)_{S: |S|=k} = \alg(D)$ with worst-case error
  \begin{align*}
  \max_S |\parity_S(D) - \hat{y}_S| = O\left(\frac{\sqrt{n}  \cdot d^{\frac{\lceil k/2\rceil}{4}} \cdot  (k\log d +\log 1/\delta)^{1/2} (\log n)^{1/4}(k\log d)^{3/4} }{\eps^{1/2}}\right).
  \end{align*}
  Moreover, $\mech$ runs in time $\poly(d^k, n)$.
\end{theorem}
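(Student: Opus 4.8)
The plan is to feed the concise base synopsis generator assembled in the previous subsection into the boosting machinery of Theorem~\ref{thm:boosting}, and then to optimize the free parameter $\param$. Assume first that $n \le d^{\lceil k/2\rceil/2}$, which is the regime in which the claimed bound is below $n$ (so that Corollary~\ref{cor:mse-compressed} applies directly); the complementary case goes through in the same way starting from Lemma~\ref{lm:mse-compressed}, which carries no restriction on $n$, only with the unsimplified error and bit-size expressions. By the ``Putting it together'' discussion --- that is, by combining Lemma~\ref{lm:concise} with Corollary~\ref{cor:mse-compressed}, but running \afrelax\ inside the base generator with privacy budget $(\eps_0,\delta_0)$ rather than $(\eps,\delta)$ --- for every $\param > 1$ there is a mechanism $\mech^{\text{base}}$ running in time $\poly(n, d^k)$ that, for any admissible failure probability $\beta$, is an $(\eps_0,\delta_0)$-differentially private $(\kappa,\lambda,\beta)$-base synopsis generator for the $k$-wise parities with $\kappa = \Theta\!\big(n\, d^{\lceil k/2\rceil/2}\, \log(1/\beta)\,(\log n)/\param^2\big)$ and $\lambda = O\!\big(\sqrt{n}\, d^{\lceil k/2\rceil/4}\,(c(\eps_0,\delta_0)^{1/2} + \param)\big)$, where $\beta$ is required to lie in $(\exp(-n),\, d^{-k/4}n^{-2})$.

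Next I apply Theorem~\ref{thm:boosting} with $T = \Theta(\log m)$, where $m = \Theta(d^k)$ is the number of parity queries, so $\log m = \Theta(k\log d)$. To meet an overall privacy budget of $(\eps,\delta)$ I run $\mech^{\text{base}}$ with $\eps_0 = \eps/(2T)$, reserve $\eps' = \eps/2$ for the aggregation step of the booster, and choose $\delta_0$ and $\beta$ polynomially small in $\delta$ --- concretely $\delta_0 = \delta/(2T)$ and $\beta$ as small as the window $(\exp(-n),\, d^{-k/4}n^{-2})$ allows --- so that $T(\kappa\beta + \delta_0) \le \delta$. This step is where the hypotheses $2^{-n} \le \delta \le n^{-2}$ and $d \le e^n$ are used, and it gives $\log(1/\beta) = \Theta(k\log d + \log(1/\delta))$ and $\log(1/\delta_0) = O(\log(1/\delta))$. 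Theorem~\ref{thm:boosting} then yields an $(\eps,\delta)$-differentially private mechanism running in time $\poly(d^k, n)$ which, with probability $1 - T\beta = 1 - o(1)$, answers all $k$-wise parities with worst-case error $\lambda + \mu$, where $\mu = O\!\big(\sqrt{\kappa}\,(\log m)^{3/2}\sqrt{\log(1/\beta)}/\eps\big)$.

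It remains to pick $\param$. Substituting the value of $\kappa$ gives $\sqrt{\kappa} = \Theta\!\big(\sqrt n\, d^{\lceil k/2\rceil/4}\sqrt{\log(1/\beta)\log n}/\param\big)$, so writing everything as $\sqrt n\, d^{\lceil k/2\rceil/4}$ times a scalar, $\lambda$ grows like $\param$ (plus a $\param$-independent term $c(\eps_0,\delta_0)^{1/2} = O((k\log d)^{1/2}(\log 1/\delta)^{1/4}/\eps^{1/2})$, which one checks is dominated by $\mu$) while $\mu$ grows like $1/\param$. Balancing these, I set $\param = \Theta\!\Big(\sqrt{\log(1/\beta)\,(\log n)^{1/2}(\log m)^{3/2}/\eps}\Big)$, taking $\param = \Theta(1)$ if this is smaller than $1$. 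This makes $\lambda + \mu = O\!\Big(\sqrt n\, d^{\lceil k/2\rceil/4}\sqrt{\log(1/\beta)\,(\log n)^{1/2}(\log m)^{3/2}/\eps}\Big)$, and plugging in $\log m = \Theta(k\log d)$ and $\log(1/\beta) = \Theta(k\log d + \log(1/\delta))$ produces exactly the stated bound $O\!\big(\sqrt n\, d^{\lceil k/2\rceil/4}(k\log d + \log 1/\delta)^{1/2}(\log n)^{1/4}(k\log d)^{3/4}/\eps^{1/2}\big)$. The ``with constant probability'' clause and the $\poly(d^k,n)$ running time are inherited from Theorem~\ref{thm:boosting} together with the $\poly(n,d^k)$ running time of $\mech^{\text{base}}$.

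The main obstacle here is bookkeeping rather than anything conceptual --- the substantive work is already done in the earlier sections (the relaxation $L$ and its mean-width bound via Grothendieck's inequality, and the Johnson--Lindenstrauss compression of the SDP-vector synopsis). What requires care is pinning down a single triple $(\eps_0,\delta_0,\beta)$ that simultaneously respects the admissible window for $\beta$, the target $(\eps,\delta)$ budget, and the restriction $n \le d^{\lceil k/2\rceil/2}$; and then checking that after the one-variable optimization of $\param$ all the logarithmic factors collapse to precisely the exponents in the statement. Verifying the parameter-window inequalities (and handling the complementary case $n > d^{\lceil k/2\rceil/2}$ via Lemma~\ref{lm:mse-compressed}) is the fiddliest part.
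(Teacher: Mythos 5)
Your treatment of the main regime $n \le d^{\lceil k/2\rceil/2}$ is essentially the paper's own proof: build the $(\kappa,\lambda,\beta)$-base synopsis generator from Lemma~\ref{lm:concise} and Corollary~\ref{cor:mse-compressed}, run Theorem~\ref{thm:boosting} with $T=\Theta(k\log d)$ and base privacy $(\eps/T,\delta/T)$-type parameters, note $\log\kappa = O(k\log d)$ and $\log\log(1/\beta)=O(\log n)$, check $\param \gtrsim c(\eps/T,\delta/T)^{1/2}$, and balance $\lambda$ against $\mu$ by taking $\param^2 = \Theta\bigl((k\log d+\log 1/\delta)(\log n)^{1/2}(k\log d)^{3/2}/\eps\bigr)$; your bookkeeping there reproduces the stated exponents. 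One small internal inconsistency: you say you take $\beta$ ``as small as the window allows,'' which would give $\log(1/\beta)\approx n$; what you actually need (and what your later assertion $\log(1/\beta)=\Theta(k\log d+\log 1/\delta)$ corresponds to, and what the paper does) is $\beta \approx \delta/(\kappa T)$, just small enough that $T(\kappa\beta+\delta_0)\le\delta$. Also, the parenthetical ``the regime in which the claimed bound is below $n$'' is backwards: when $n\le d^{\lceil k/2\rceil/2}$ we have $\sqrt{n}\,d^{\lceil k/2\rceil/4}\ge n$, so the bound is at least $n$ in that regime.

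The genuine gap is your dismissal of the complementary case $n > d^{\lceil k/2\rceil/2}$. The paper does not rerun the boosting argument there; it falls back on the Gaussian noise mechanism. Your claim that the argument ``goes through in the same way starting from Lemma~\ref{lm:mse-compressed}'' does not hold in general: the unsimplified $\lambda$ in Lemma~\ref{lm:mse-compressed} contains the extra term $C\sqrt{n}\,c(\eps_0,\delta_0)^{1/2}\sqrt{8\log(1/\beta)}$, and with $\eps_0=\eps/(2T)$, $T=\Theta(k\log d)$ and $\log(1/\beta)=\Theta(k\log d+\log 1/\delta)$ this is of order $\sqrt{n}\,(k\log d)^{1/2}(\log 1/\delta)^{1/4}(k\log d+\log 1/\delta)^{1/2}/\eps^{1/2}$. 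Comparing with the target, it is dominated only when $\log(1/\delta) \lesssim d^{\lceil k/2\rceil}(\log n)(k\log d)$; since the hypotheses allow $\log(1/\delta)$ as large as $n$, and $n$ can vastly exceed $d^{\lceil k/2\rceil}$ in this regime (e.g.\ $n=d^{d}$, $\delta=2^{-n}$), the extra term can dominate the claimed bound, so ``only with the unsimplified expressions'' is not a proof. You need a separate argument for large $n$ (the paper's choice is the Gaussian mechanism; note, though, that to beat the target $\sqrt{n}\,d^{\lceil k/2\rceil/4}$ with Gaussian noise of magnitude $\tilde{O}(d^{k/2}/\eps)$ one really wants $n \gtrsim d^{\,k-\lceil k/2\rceil/2}$, so if you go that route you should verify the crossover carefully rather than inherit the threshold $d^{\lceil k/2\rceil/2}$ verbatim).
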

\begin{proof}
We set $T=C\log m = Ck\log d$, $\beta = \delta/(\kappa T)$, and use the $(\kappa,\lambda,\beta)$-base synopsis generator with privacy parameters $(\epsilon/T,\delta/T)$. If  $n\geq d^{\frac{\lceil k/2\rceil}{2}}$, then the required error bound is achieved by the Gaussian noise mechanism. Thus we can assume that $n\leq d^{\frac{\lceil k/2\rceil}{2}}$. The error resulting from Theorem~\ref{thm:boosting} is $\lambda+\mu$, where
\begin{align*}
\lambda &= C_1 \cdot  \sqrt{n} \cdot d^{\frac{\lceil k/2\rceil}{4}}(c(\eps/T,\delta/T)^{1/2}+\param).\\
\mu &= C_2\cdot\frac{\sqrt{\kappa}(k\log d)^{3/2}\sqrt{(\log \kappa + \log T +\log 1/\delta)}}{\epsilon}.\\ 
\end{align*}
Note that $T \leq \kappa$ and $\log \kappa = O(k\log d)$. Also assuming that $d \leq \exp(n)$ and $\delta \geq \exp(-n)$, $\log\log 1/\beta$ is $O(\log n)$. We then get 
\begin{align*}
\mu &= C'_2 \sqrt{n}  \cdot d^{\frac{\lceil k/2\rceil}{4}} \cdot (k\log d +\log 1/\delta) \sqrt{\log n}(k\log d)^{3/2} /(\eps\param).\\
\end{align*}
We can now choose $\param = \left( (k\log d +\log 1/\delta) \sqrt{\log n}(k\log d)^{3/2} /(\eps)\right)^{\frac{1}{2}}$. Observe that $\param > c(\eps/T,\delta/T)^{1/2}$. Thus the $\mu$ term dominates and is equal to
\begin{align*}
\mu &= C'_2 \sqrt{n}  \cdot d^{\frac{\lceil k/2\rceil}{4}} \cdot \param.\\
&= O\left(\frac{\sqrt{n}  \cdot d^{\frac{\lceil k/2\rceil}{4}} \cdot  (k\log d +\log 1/\delta)^{1/2} (\log n)^{1/4}(k\log d)^{3/4} }{\eps^{1/2}}\right).
\end{align*}
The claim follows.
\end{proof}

Observe that the the error bound of
Theorem~\ref{thm:main-worstcase} is, up to logarithmic factors, $\sqrt{n}d^{\frac{\lceil k/2\rceil}{4}}$. \junk{This implies that, for
example, to achieve error $n^{1-\gamma}$, a database size of $n =
(d^{\frac{\lceil k/2\rceil}{2(1-O(\gamma))}})$ suffices. }



%
%
%


%
%
%

\section{Conclusion}

We have presented our algorithms as mechanisms that satisfy average
error guarantees, similarly to~\cite{geometry2013}; then boosting is
viewed a reduction from worst-case error to average error. An
alternate view is that we attack the problem of achieving worst-case
error bounds for marginals efficiently via designing a new synopsis
generator for the boosting algorithm. Boosting is a natural starting
point for designing efficient mechanisms when the number of queries is
much smaller than the universe size, because, unlike private
multiplicative weights, the running time of the boosting algorithm
does not depend on the universe size but only on the number of queries
and the running time of the synopsis generator. It is an interesting
question whether the geometric techniques
of~\cite{10vollb,12vollb,geometry2013}, which are well-suited to
proving average error upper bounds, can be used to design efficient
synopsis generators for other classes of queries. 

We leave open the question of whether our bounds can be further
improved by an efficient algorithm. Moreover, it is an interesting
question if a running time of $d^{o(k)}$ can be achieved when the
number of queries asked is a small subset of the $k$-way marginals. We
also remark that Hardt, Ligett and McSherry~\cite{HardtLM12} give
empirical evaluation of the private multiplicative weights mechanism,
and show that for many practical datasets, it can be implemented in
practice. It would be interesting to empirically evaluate the
mechanisms presented in this work and compare the results
to~\cite{HardtLM12}.

\section*{Acknowledgements}

The authors would like to thank Moritz Hardt and Anupam Gupta for
thoughtful discussions of the results in this paper. 

\bibliographystyle{abbrv}
\bibliography{privacy}
\end{document}